\documentclass[11pt]{article}

\usepackage[T1]{fontenc}
\usepackage[numbers]{natbib}
\usepackage[margin=1in]{geometry}
\usepackage{subcaption} 
\usepackage{amsmath}
\usepackage{amssymb}
\usepackage{xspace}
\usepackage{color}           
\usepackage{booktabs}      
\usepackage{threeparttable}  
\usepackage{tikz}            
\usepackage[ruled,vlined,linesnumbered]{algorithm2e}     
\usepackage{url}             
\usepackage{hyperref}
\usepackage{amsthm}
\usepackage[capitalize]{cleveref}
\crefname{appendix}{Appendix}{Appendices}
\usepackage{lineno}
\usepackage{textcomp}
\usepackage[utf8]{inputenc}
\usepackage{complexity}
\usepackage{dsfont}
\usepackage{authblk}
\allowdisplaybreaks[0]

\newtheorem{theorem}{Theorem}
\newtheorem{lemma}{Lemma}

\newtheorem{definition}{Definition}
\newtheorem{remark}{Remark}
\newtheorem{example}{Example}
\newcommand{\BibTeX}{B\kern-.05em{\sc i\kern-.025em b}\kern-.08em\TeX}

\newcommand{\sentence}{\Gamma}

\newcommand{\fotwoformula}{\psi}

\newcommand{\weight}{w}

\newcommand{\negweight}{\bar{w}}

\newcommand{\wfomc}{WFOMC}

\newcommand{\veck}{\mathbf{k}}
\newcommand{\vecK}{\mathbf{K}}
\newcommand{\vecU}{\mathbf{U}}

\newcommand{\vecc}{\mathbf{c}}

\newcommand{\vecu}{\mathbf{u}}
\newcommand{\vect}{\mathbf{t}}

\newcommand{\symwfomc}{\ensuremath{\mathsf{WFOMC}}}

\newcommand{\fotwo}{\ensuremath{\mathbf{FO}^2}\xspace}
\newcommand{\ctwo}{\ensuremath{\mathbf{C}^2}\xspace}
\newcommand{\sctwo}{\ensuremath{\mathbf{SC}^2}}

\newcommand{\indicator}[1]{\mathds{1}\left\{#1\right\}}

\newcommand{\domain}{\Delta}

\newcommand{\nat}{\mathbb{N}}

\newcommand{\fomodels}[2]{\ensuremath{\mathcal{M}_{#1, #2}}}
\newcommand{\promodels}[1]{\ensuremath{\mathcal{M}_{#1}}}

\newcommand{\ufotwo}{\ensuremath{\mathbf{UFO}^2}\xspace}
\newcommand{\fotwocc}{\ensuremath{\mathbf{FO}^2_{\textrm{CC}}}\xspace}

\newcommand{\typeweight}[1]{\mathcal{W} (#1)}

\newcommand{\incwfomc}{\textsc{IncrementalWFOMC}}
\newcommand{\ouralgo}{\textsc{IncrementalWFOMC3}}

\newcommand{\ctwomod}{\ensuremath{\mathbf{C}^2_{\textrm{mod}}}}

\newcommand{\vecsig}{\boldsymbol{\sigma}}
\newcommand{\unittensor}{\boldsymbol{\delta}}
\newcommand{\vecbeta}{\ensuremath{\boldsymbol{\beta}}}
\newcommand{\expctype}[2]{\ensuremath{\#{#2}(#1)}}
\newcommand{\dimensions}{\ensuremath{D}}

\newcommand{\realizedele}[2]{#1\langle #2\rangle}
\newcommand{\update}{\textsf{update}}
\newcommand{\Fast}{\textsc{Fast}}
\newcommand{\Recursive}{\textsc{Recursive}}

\newcommand{\Ganak}{\textsc{Ganak}\xspace}
\newcommand{\ApproxMC}{\textsc{ApproxMC}\xspace}
\newcommand{\Ours}{\textsc{Ours}}

\newcommand{\remaind}{\textsf{rem}}

\newcommand{\Mmod}{M_{\textrm{mod}}}

\title{A Fast Model Counting Algorithm for Two-Variable Logic with Counting and Modulo Counting Quantifiers}

\author[1]{Shixin Sun}
\author[2]{Astrid Klipfel}
\author[3]{Ond\v{r}ej Ku\v{z}elka}
\author[1]{Yuanhong Wang\thanks{Corresponding author.}}
\author[1,4]{Yi Chang}

\affil[1]{School of Artificial Intelligence, Jilin University, Changchun 130012, China}
\affil[2]{CRIL, Université d'Artois, Lens 62307, France}
\affil[3]{Czech Technical University, Prague 160 00, Czechia}
\affil[4]{Engineering Research Center of Knowledge-Driven Human-Machine Intelligence, MOE, Changchun 130012, China}

\date{}

\begin{document}



%

\maketitle

\begin{abstract}
    Weighted first-order model counting (WFOMC) is a central task in lifted probabilistic inference: It asks for the weighted sum of all models of a first-order sentence over a finite domain. A long line of work has identified domain-liftable fragments of first-order logic, that is, syntactic classes for which WFOMC can be solved in time polynomial in the domain size. Among them, the two-variable fragment with counting quantifiers, \ctwo{}, is one of the most expressive known liftable fragments. Existing algorithms for \ctwo{}, however, establish tractability through multi-stage reductions that eliminate counting quantifiers via cardinality constraints, which introduces substantial practical overhead as the domain size grows. In this paper, we introduce \ouralgo{}, a lifted algorithm for WFOMC on \ctwo{} and its modulo counting extension, \ctwomod{}. Instead of relying on reduction techniques, \ouralgo{} operates directly on a Scott normal form that retains counting quantifiers throughout inference. This direct treatment yields two main results. First, we derive a tighter data-complexity bound for WFOMC in \ctwo{}, reducing the degree of the polynomial from quadratic to linear in the counting parameters. Second, we prove that \ctwomod{} is domain-liftable, extending tractability from \ctwo{} to a richer fragment with native modulo counting support. Finally, our empirical evaluation shows that \ouralgo{} delivers orders-of-magnitude runtime improvements and better scalability than both existing \wfomc{} algorithms and state-of-the-art propositional model counters.
\end{abstract}

\section{Introduction}

\emph{Weighted first-order model counting} (\emph{WFOMC}) \citep{vandenbroeckLiftedProbabilisticInference2011,milch2008lifted} asks, given a first-order sentence, a finite domain, and a pair of weighting functions over predicates, to compute the weighted sum of all models of the sentence over that domain. Each model receives a weight equal to the product of predicate weights over all true and false ground atoms, and WFOMC sums these weights over all satisfying models. This formulation underlies probabilistic inference in statistical relational learning \citep{getoor2007introduction}: in frameworks such as Markov logic networks \citep{richardson2006markov} and probabilistic databases \citep{gribkoff_lifted_2014-1}, tasks such as partition-function computation, marginal inference, and expectation computation reduce naturally to WFOMC. For example, the sentence $\sentence = \forall x\colon \bigl(\mathit{adopt}(x) \leftrightarrow \exists^{\geq k} y\colon (\mathit{fr}(x,y) \land \mathit{adopt}(y))\bigr)$ states that a person adopts a behavior if and only if at least $k$ of their friends have adopted it; the WFOMC of $\sentence$ over a domain of $n$ individuals gives the weighted count of all adoption states consistent with this rule. Beyond probabilistic inference, WFOMC also provides a useful abstraction for enumerative combinatorics, where it can encode counting problems over structures such as permutations, trees, and directed acyclic graphs \citep{kuang_et_al:LIPIcs.CSL.2026.7}.

The straightforward strategy is to ground the first-order sentence into a propositional formula and count its weighted models. This approach does not scale as the domain grows, since computing the weighted model count of the resulting formula is \#\P-hard in general \citep{valiant_complexity_1979-2}, even though the grounding itself is only polynomial in the domain size. Moreover, it is unlikely that WFOMC can be computed in time polynomial in the \emph{sentence} size, as WFOMC subsumes \#SAT even for fixed relational schemas \citep{Beame_2015}. Consequently, most work focuses on the \emph{data complexity}~\citep{ceylan_open-world_2021}, measuring the cost of computing WFOMC as a function of the domain size $n$ alone with the sentence and weights treated as \emph{fixed}. This motivates the study of \emph{domain-liftable} fragments~\citep{vandenbroeckLiftedProbabilisticInference2011}---syntactic classes of first-order sentences for which WFOMC can be computed in time polynomial in $n$.

Several domain-liftable fragments have since been identified. Starting from the universally quantified two-variable fragment \ufotwo{} \citep{vandenbroeckLiftedProbabilisticInference2011}, liftability was extended to the two-variable logic \fotwo{} \citep{van2014skolemization}, then to \fotwocc{} \citep{kuzelka2021weighted} enriched with cardinality constraints, and finally to \ctwo{} \citep{kuzelka2021weighted}, which augments \fotwocc{} with counting quantifiers (i.e., quantifiers of the form $\exists^{\geq k}$, $\exists^{=k}$, and $\exists^{\leq k}$).
In particular, \ctwo{} is among the most expressive fragments currently known to admit polynomial-time WFOMC in data complexity. 
Existing algorithms for \ctwo{} rely on multi-stage reductions that reduce the problem to \ufotwo{}.
Briefly, a \ctwo{} sentence is transformed into a Scott normal form \citep{kuzelka2021weighted}, counting quantifiers are eliminated via cardinality constraints \citep{kuzelka2021weighted,KR2024-64}, and the result is then reduced further to a universally quantified fragment \citep{kuzelka2021weighted,van2014skolemization}.
We will review this reduction-based approach in more detail in \Cref{sec:reductions}.
Though these reductions preserve polynomial-time tractability in the domain size, they introduce substantial practical overhead, which limits scalability in practice.
For instance, the state-of-the-art reduction for \ctwo{} proposed recently by \citet{KR2024-64} yields a data-complexity exponent that grows quadratically with the counting parameters $k$ (note that it is still polynomial in the domain size for fixed $k$), which can lead to prohibitive runtimes even for moderate parameter values.

The limitations of the reduction-based approach become even more pronounced when \emph{modulo counting quantifiers} \citep{bednarczyk2017modulo,lodaya_two-variable_2017} are considered. These quantifiers extend standard counting quantifiers by constraining witness counts modulo a fixed integer; specifically, the sentence $\exists^{=r,k} x\colon \phi(x)$ asserts that the number of witnesses satisfying $\phi$ is congruent to $r$ modulo $k$. Over varying domain sizes, modulo counting quantifiers are strictly more expressive than standard counting quantifiers. For instance, periodic properties such as ``every vertex has a degree divisible by $k$'' can be captured by the sentence $\forall x\exists^{=0,k} y\colon E(x,y)$, while expressing this same property using only standard counting quantifiers requires a disjunction whose length grows with the domain size, which critically violates the requirements of domain-liftability.~\footnote{Simply transforming modulo counting quantifiers into a domain-dependent disjunction of standard counting quantifiers, e.g., transforming $\forall x\exists^{=0,2} y\colon \phi(y)$ into $\forall x\left(\exists^{=0} y\colon \phi(y) \lor \exists^{=2} y\colon \phi(y) \lor \dots \lor \exists^{=n} y\colon \phi(y)\right)$ for an even domain size $n$, does not yield a polynomial-time algorithm. Because the parameters of standard counting quantifiers appear in the exponent of the data-complexity polynomial in existing liftable algorithms, even a linear increase in these parameters with respect to the domain size leads to an exponential blow-up in runtime.} 
While historically the study of modulo counting quantifiers has been largely confined to theoretical explorations---specifically, in the realm of descriptive complexity and model checking, where existing literature primarily investigates the expressive boundaries and fixed-parameter tractability of first-order logic enriched with modulo counting quantifiers over restricted relational structures \citep{bednarczyk2017modulo,heimberg2016hanf}---their algorithmic integration into exact model counting remains critically underexplored.
This provides a natural motivation for studying domain-liftable fragments beyond the \ctwo{} fragment, which we address in this work.

We address this gap by introducing \ouralgo{}, a novel lifted algorithm for WFOMC on \ctwo{} and \ctwomod{}. 
\ouralgo{} builds upon the IncrementalWFOMC family \citep{toth2023lifted,DBLP:conf/ecai/ZouMZ0KWC25}, a series of algorithms that compute WFOMC by growing the domain one element at a time. 
The key insight behind these algorithms is that the weighted count of models over a domain of size $n$ can be computed from the counts over smaller domains by tracking a compact summary of the weighted count organized by \emph{cell configuration}---that is, tracking how many domain elements satisfy each combination of unary predicates in the sentence \citep{van2021faster}.
\ouralgo{} inherits this domain-recursive structure but maintains a richer state representation that also tracks, for each domain element, how many of its neighbors satisfy each counting quantified formula, which allows it to operate directly on the normal form of the input sentence without eliminating counting quantifiers via reductions.
This augmented state-tracking is reminiscent of lifted model sampling~\citep{wang2024lifted} and enumeration~\citep{meng2025model}, where type-level summaries are likewise enriched with counting information during model construction.

By working directly on the sentence and avoiding the computational overhead of modular reductions to \ufotwo{}, \ouralgo{} achieves significant theoretical improvements. For \ctwo{}, it yields a tighter data-complexity bound, improving the previous result by \citet{KR2024-64} by reducing the degree of the data-complexity polynomial from quadratic to linear in the counting parameters. More importantly, this direct state-tracking framework easily extends to modulo constraints, allowing us to prove for the first time that \ctwomod{} is domain-liftable.

The main contributions of this paper are summarized as follows:
\begin{itemize}
    \item We introduce \ouralgo{}, a more efficient algorithm for WFOMC on \ctwo{} and \ctwomod{}.
    \item We establish a tighter data-complexity bound for \ctwo{}, improving the previous bound by reducing the degree of the polynomial from quadratic to linear in the counting parameters.
    \item With \ouralgo{}, we prove for the first time that \ctwomod{} is domain-liftable, extending tractability to a richer fragment with native modulo counting support.
    \item We empirically evaluate \ouralgo{} on several benchmark families and observe substantial improvements in runtime and scalability over strong lifted and propositional baselines.
\end{itemize}

The remainder of the paper is organized as follows. 
\Cref{sec:related_work} reviews related work. 
\Cref{sec:background} introduces the necessary background on WFOMC, domain-liftability, and the reduction-based approach from \ctwo{} to \ufotwo{}.
\Cref{sec:incrementalWFOMC3} presents the main algorithm, \ouralgo{}.
\Cref{sec:modulo_counting} extends the algorithm to modulo counting quantifiers and proves the domain-liftability of \ctwomod{}. 
\Cref{sec:experiments} reports the experimental evaluation. 
Finally, \Cref{sec:conclusion} concludes the paper.

\section{Related Work}\label{sec:related_work}

We review the literature that is most relevant to our work.
For a broader review of the applications of WFOMC in probabilistic inference and combinatorics, we refer the reader to \citet{VandenBroeck_Kersting_Natarajan_Poole_2021} and \citet{kuang_et_al:LIPIcs.CSL.2026.7}, respectively.

\paragraph{Domain-liftable fragments.}
Domain-liftability was first established for the universally quantified two-variable fragment \ufotwo{}~\cite{vandenbroeckLiftedProbabilisticInference2011} using first-order knowledge compilation, providing the foundational algorithm on which subsequent reductions are built.
\citet{Beame_2015} later developed a database-theoretic perspective on symmetric WFOMC, establishing $\#\P_1$-hardness for more expressive fragments such as $\mathbf{FO}^3$.
Together, these results set the tractability frontier for two-variable logic and motivated the search for domain-liftable fragments within it, e.g., augmenting \fotwo{} with cardinality constraints and counting quantifiers~\cite{kuzelka2021weighted}, linear orders~\cite{toth2023lifted}, and general combinatorial constraints~\cite{kuang_et_al:LIPIcs.CSL.2026.7}.

\paragraph{\wfomc{} reductions.}
A key proof technique in this line of work is \emph{\wfomc{} reductions}: transformations that convert a WFOMC on a richer fragment into one or more WFOMCs on simpler fragments, while preserving polynomial-time tractability in the domain size.
Building on the liftability of \ufotwo{}, subsequent work extended domain-liftability to progressively richer two-variable languages via such reductions.
Specifically, Skolemization~\cite{van2014skolemization} reduces WFOMC on \fotwo{} to \ufotwo{}, and thereby proves the liftability of \fotwo{}.
The framework was further extended to \fotwocc{}~\cite{Kazemi_Kimmig_Van_Broeck_Poole_2016}, which additionally allows cardinality constraints, and ultimately to \ctwo{}~\cite{kuzelka2021weighted}, which augments \fotwocc{} with counting quantifiers~\cite{gradel1997two}.
The fragment \ctwo{} is among the most expressive two-variable fragments currently known to admit polynomial-time WFOMC in data complexity.
Recently, \citet{KR2024-64} proposed a more efficient reduction from \ctwo{} to \ufotwo{}, setting a data-complexity bound to beat that is polynomial in the domain size with a degree that grows quadratically in the counting parameters.

\paragraph{Domain recursion and Incremental WFOMCs.}

Another technique for \wfomc{} that is orthogonal to modular reductions is the \emph{domain recursion rule}~\cite{broeckCompletenessFirstOrderKnowledge2011,KazemiKBP17}.
It was originally proposed as a rule for first-order knowledge compilation \citep{broeckCompletenessFirstOrderKnowledge2011} and later analyzed in more depth by \citet{KazemiKBP17}, who showed that it can be used to establish domain-liftability for certain theories beyond \fotwo{}, which they called $\mathbf{S}^2\mathbf{FO}^2$ and $\mathbf{S}^2\mathbf{RU}$.
This rule also underlies the IncrementalWFOMC family~\cite{toth2023lifted,lifted_2025}, a series of \wfomc{} algorithms primarily for \ufotwo{} with a linear-order axiom that enforces some binary predicate to represent the linear ordering of the domain elements.
We refer the reader to \citep{toth2023lifted} and \citep{DBLP:conf/ecai/ZouMZ0KWC25} for details on the linear order axiom, as it is not the main focus of this paper.
We do, however, note that the linear order axiom is naturally supported by our new algorithm following the same incremental approach, and we empirically evaluate \ouralgo{} with and without the linear order axiom in our experiments in \Cref{sec:experiments}.
A closely related algorithm, RecursiveWFOMC~\cite{Meng2024}, utilizes a similar idea but a different state representation in the recursion, achieving the state-of-the-art in this linear-order setting.
For more specific comparisons on these algorithms in the linear-order setting, see \citet{Meng2024}.
Related ideas also appear in work on first-order model sampling~\cite{wang2024lifted} and enumeration~\cite{meng2025model}, where models are constructed incrementally by growing the domain one element at a time.
Our algorithm, \ouralgo{}, builds on the same domain-recursion foundation, extending it to \ctwo{} with counting and modulo counting quantifiers.

\paragraph{Counting and modulo counting in two-variable logic.}
Counting quantifiers in two-variable logic have been studied from both logical and combinatorial perspectives.
\citet{gradel1997two} established the decidability of \ctwo{} and analyzed its model-theoretic properties, and later \citet{kopczynski2015regular} analyzed the spectra of \ctwo{}, connecting the fragment to the theory of Presburger arithmetic and semilinear sets.
From a combinatorial perspective, \ctwo{} can be used to express a wide range of combinatorial properties, such as regularity in graphs whose counting and sampling problems have been studied in the literature~\citep{cooper2007sampling,GaoW17}.
\ctwo{} has also been connected to the expressiveness of graph neural networks (GNNs).
\citet{barcelo2020logical} showed that standard message-passing GNNs are logically captured within \ctwo{}, and the subsequent work established that \ctwo{} strictly subsumes their expressive power~\cite{Hauke_Walega_2026}.
Since counting quantifiers cannot express even such simple properties as parity, Wolfgang Thomas and his coauthors introduced modulo counting quantifiers of the form ``there exist $r$ mod $k$ elements $x$ such that \dots'' in the 1980s. 
A survey of results in first-order logic extended with modulo counting quantifiers can be found in \citep{flum2008logic}.
Recent work has further analyzed the complexity of satisfiability and model checking for two-variable logic with modulo counting quantifiers, establishing strict computational limits for this fragment~\citep{lodaya_two-variable_2017,bednarczyk2017modulo,heimberg2016hanf}.
However, none of this prior work addresses the data complexity of WFOMC in the presence of modulo counting quantifiers, leaving the domain-liftability of \ctwomod{} as an open question that we resolve in this paper.
Finally, we note the work by \citet{benedikt2024two} on the two-variable fragment with ultimately periodic counting quantifiers, which are a generalization of modulo counting quantifiers that allow for more complex periodicity patterns in the witness counts.
Our framework would also apply to this more general class of counting quantifiers, and we leave the question of domain-liftability for the resulting fragment as an interesting direction for future work.

\section{Background}\label{sec:background}

We introduce the necessary notations and concepts used in this paper.
We use boldface letters, e.g., $\mathbf{k}$, to denote tensor-like objects, such as vectors and matrices.
The entry in the $i_1, \ldots, i_d$ position of a $d$-dimensional tensor $\mathbf{k}$ is denoted as $\mathbf{k}[i_1, \ldots, i_d]$.
We also write $\veck[\mathbf{i}]$ for $\mathbf{k}[i_1, \ldots, i_d]$ when $\mathbf{i} = (i_1, \ldots, i_d)$ is a vector.
The 1-norm of a $d$-dimensional tensor $\mathbf{k}$ is defined as $|\veck| = \sum_{i_1, \ldots, i_d} |\veck[i_1, \ldots, i_d]|$.
The basic binary operators, such as $+$ and $-$, are extended to tensors in the natural way, applied \emph{element-wise}.
We write $\veck_1 \le \veck_2$ if $\veck_1[i_1, \ldots, i_d] \le \veck_2[i_1, \ldots, i_d]$ for all $i_1, \ldots, i_d$.
The other comparisons, such as $<$, $>$, $\ge$, and $=$ are defined similarly.
By $\mathbf{0}^{n_1\times n_2\times \ldots \times n_d}$ we denote the zero tensor of dimensions $n_1 \times n_2 \times \ldots \times n_d$.
The unit tensor of dimensions $n_1 \times n_2 \times \ldots \times n_d$ where only the entry in the $i_1, \ldots, i_d$ position is $1$ and all other entries are $0$ is denoted as $\unittensor_{i_1,\ldots ,i_d}^{n_1\times\ldots\times n_d}$.
We omit the dimensions when they are clear from the context.
We denote $\binom{n}{\veck} = \binom{n}{\veck[1], \ldots, \veck[d]}$ as the multinomial coefficient for some vector $\veck \in \nat^d$.

Let $n$ be a positive integer.
We denote the set $\{1, 2, \ldots, n\}$ by $[n]$.
We denote $\mathbb{Z}_n$ as the set of all remainders modulo $n$, that is $\mathbb{Z}_n = \{0, 1, \ldots, n-1\}$.
The remainder function modulo $n$ is written as $\remaind_n: \nat \to \mathbb{Z}_n$, i.e., $\remaind_n(x) = x \bmod n$.
We write the indicator function as $\indicator{\cdot}$, which is $1$ if the argument is true and $0$ otherwise.

\subsection{First-Order Logic and Extensions}

In this paper, we consider the \emph{function-free} \emph{finite-domain} first-order logic (FOL).
An \emph{atomic formula} (or simply \emph{atom}) is an expression of the form $P(t_1, \ldots, t_k)$, where $P/k$ is a predicate (also called a \emph{relation}) of arity $k$ and $t_1, \ldots, t_k$ are either \emph{constants} from a finite domain $\domain$, or \emph{variables} $\{x, y, \ldots\}$.
A \emph{literal} is an atomic formula or its negation.
\emph{Formulas} are defined inductively:
All literals are formulas;
if $\phi$ and $\psi$ are formulas, then $\neg \phi$, $\phi \land \psi$, $\phi \lor \psi$, $\phi\to \psi$, and $\phi \leftrightarrow \psi$ are also formulas;
a formula surrounded by $\forall x$ or $\exists x$, where $x$ is a logical variable, is also a formula, which is called a \emph{quantified formula}.
The set of predicates appearing in a formula $\sentence$ is denoted as $\mathcal{P}_{\sentence}$.
A variable is \emph{free} in a formula if it is not bound by a quantifier.
For instance, $coin(x) = (H(x) \lor T(x)) \land \neg (H(x) \land T(x))$ is a formula with one free variable $x$.
A formula without any quantifiers is called a \emph{quantifier-free formula}.
A formula with no free variables is called a \emph{sentence}. 
A \emph{ground} formula is a formula without variables.
A ground formula can be viewed as a propositional formula, where each ground atom $P(c_1, \ldots, c_k)$ is a propositional (Boolean) variable.
Given a domain $\domain$ and a sentence $\sentence$, one can obtain the \emph{grounding} of $\sentence$ by replacing each variable with a constant from $\domain$, and expanding the quantifiers $\forall x$ and $\exists x$ into conjunctions and disjunctions, respectively.


We make use of the \emph{Herbrand Base} (HB) to define the semantics.
The HB of a sentence $\sentence$ on a finite domain $\domain$ is the set of all ground atoms that can be formed by the predicates in $\sentence$ and the constants in $\domain$.
A \emph{Herbrand interpretation} (or simply an interpretation) of $\sentence$ over $\domain$ is a truth assignment to all atoms in the HB of $\sentence$ on $\domain$.
We often write an interpretation $\mathcal{I}$ as a set of ground literals such that for each ground atom $a$ in HB, either $a \in \mathcal{I}$ or $\neg a \in \mathcal{I}$, interpreted respectively as $a$ is true or $a$ is false.
For example, $\mathcal{I} = \{H(1),\neg T(1),\neg H(2), T(2)\}$ is an interpretation of $\forall x: coin(x)$ over the domain $\{1, 2\}$.
A \emph{model} of a sentence $\sentence$ is an interpretation $\mathcal{I}$ that satisfies $\sentence$ following the standard semantics of FOL, which is denoted by $\mathcal{I} \models \sentence$.
For instance, the aforementioned interpretation $\mathcal{I}$ is a model of $\forall x: coin(x)$.
We denote the set of all models of $\sentence$ over the domain $\{1, 2, \ldots, n\}$ as $\fomodels{\sentence}{n}$.
When the sentence $\sentence$ is ground, a model is a set of ground literals interpreting each ground atom in $\sentence$ that makes $\sentence$ true.
The set of all models of a ground sentence $\sentence$ is denoted as $\promodels{\sentence}$.

There are several extensions of FOL that augment the expressiveness of the language.
In this paper, we mainly focus on \emph{counting quantifiers}~\cite{gradel1997two} and their variant \emph{modulo counting quantifiers}~\cite{lodaya_two-variable_2017}.
\begin{itemize}
    \item Counting quantifiers $\exists^{=k}$, $\exists^{\le k}$ and $\exists^{\ge k}$ are generalizations of the traditional existential quantifier. Their semantics are defined as follows. An interpretation $\mathcal{I}$ satisfies $\exists^{=k} x:\phi(x)$ (resp. $\exists^{\le k} x:\phi(x)$, $\exists^{\ge k} x:\phi(x)$) iff. there are exactly (resp. at most, at least) $k$ constants $c\in\Delta$ such that $\mathcal{I}\models\phi(c)$. For example, the sentence $(\forall x: coin(x)) \land (\exists^{=2} x: H(x))$ requires that there are exactly two coins that are heads.
    \item The modulo counting quantifiers $\exists^{=r,k}$, $\exists^{\le r,k}$, and $\exists^{\ge r,k}$ constrain the count modulo $k$. An interpretation $\mathcal{I}$ satisfies $\exists^{=r,k} x:\varphi(x)$, $\exists^{\le r,k} x:\varphi(x)$, $\exists^{\ge r,k} x:\varphi(x)$ iff. $\remaind_{k}(m) = r$, $\remaind_{k}(m) \le r$, and $\remaind_{k}(m) \ge r$, respectively, where $m = |\{c\in\Delta: \mathcal{I}\models\varphi(c)\}|$.
\end{itemize}

\subsection{Weighted First-Order Model Counting}

The \textit{first-order model counting problem}~\citep{vandenbroeckLiftedProbabilisticInference2011} asks, given a domain $\domain$ and a sentence $\sentence$, how many models $\sentence$ has over $\domain$.
The \textit{weighted first-order model counting problem} (WFOMC) extends the input with a pair of weighting functions $(\weight,\negweight)$ that map predicates in $\mathcal{P}_\sentence$ to real weights: $\mathcal{P}_\sentence \to \mathbb{R}$.
Given a set $L$ of literals, the weight of $L$ is defined as
\begin{equation*}
  \typeweight{L} = \prod_{lit\in L^T} \weight(\mathsf{pred}(lit)) \cdot \prod_{lit\in L^F} \negweight(\mathsf{pred}(lit)),
\end{equation*}
where $L^T$ and $L^F$ are the sets of true and false literals in $L$, respectively, and $\mathsf{pred}(lit)$ maps a literal $lit$ to its predicate.

\begin{example}\label{ex:litweight}
    Let $\domain = \{1, 2, 3\}$, and consider the sentence $\sentence = \forall x: coin(x)$.
    For the weighting functions $\weight(H) = 2$, $\negweight(H) = \weight(T) = \negweight(T) = 1$, 
    the weight $\typeweight{\mathcal{I}}$ of the model $\mathcal{I} = \{H(1), \neg T(1), H(2), \neg T(2), \neg H(3), T(3)\}$ is $2^2 \cdot 1^4 = 4$.
\end{example}

\begin{definition}[Weighted first-order model counting]
    Let $(\weight, \negweight)$ be a weighting on a sentence $\sentence$. The WFOMC of $\sentence$ over a finite domain of size $n$ under $(\weight, \negweight)$ is
    \begin{equation*}
        \symwfomc(\sentence, n, \weight, \negweight) = \sum_{\mu\in\fomodels{\sentence}{n}}\typeweight{\mu}.
    \end{equation*}
\end{definition}

\begin{example}
    Continuing \Cref{ex:litweight}, we introduce a modulo counting quantifier to form a new sentence $\sentence' = \sentence\land \exists^{=1,2} x: H(x)$.
    We have $\symwfomc(\sentence, 3, \weight, \negweight) = (1 + 2)^3 = 27$ and $\symwfomc(\sentence', 3, \weight, \negweight) = \binom{3}{1}\cdot 2 + \binom{3}{3}\cdot 2^3 = 14$.
    The ratio $14 / 27$ can be interpreted as the probability of obtaining an odd number of heads when each coin is independently flipped with a probability of heads being $2/3$.
\end{example}

\subsection{Domain-Liftability and \wfomc{} Reductions} \label{sec:reductions}

We consider the complexity of computing \wfomc{} in terms of the domain size $n$, while the sentence and the weights are \emph{fixed}.
This perspective is often referred to as the \emph{data complexity} in the database literature~\citep{Beame_2015,ceylan_open-world_2021}.
We are particularly interested in the syntactic fragments of FOL that admit polynomial-time algorithms for \wfomc{}, which are called \emph{domain-liftable}~\cite{VandenBroeck_Kersting_Natarajan_Poole_2021}.

\begin{definition}[Domain-liftability]
    A fragment $\mathcal{F}$ of FOL is \emph{domain-liftable} (or simply \emph{liftable}) if there exists an algorithm that for any fixed sentence $\sentence\in \mathcal{F}$, any fixed weights $(\weight, \negweight)$, computes $\symwfomc(\sentence, n, \weight, \negweight)$ in time polynomial in the domain size $n$.
\end{definition}

The first fragment that was identified as liftable is the universally quantified two-variable fragment of FOL (\ufotwo{}) consisting of sentences of the form $\forall x \forall y\colon \psi(x,y)$, where $\psi(x,y)$ is a quantifier-free formula over at most two variables.
Its domain-liftability was originally shown by \citet{vandenbroeckLiftedProbabilisticInference2011} using knowledge compilation techniques, and later reformulated by \citet{Beame_2015}.
Recently, \citet{toth2023lifted} provided another algorithm based on the work of \citet{Beame_2015}, which forms the basis of our algorithms, and will be sketched in \Cref{sec:incwfomc}.
The other liftable languages include: 
\begin{itemize}
    \item the two-variable fragment (\fotwo{}) that restricts the number of distinct variables to two, but unlike \ufotwo{} also allows existential quantifiers~\cite{van2014skolemization},
    \item the two-variable fragment with cardinality constraints (\fotwocc{}) that extends \fotwo{} with cardinality constraints of the form $|P| = d$, where $P$ is a predicate and $d$ is a non-negative integer, restricting the number of true ground literals of $P$ in a model to $d$~\citep{kuzelka2021weighted}, and
    \item the two-variable fragment with counting (\ctwo{}) that extends \fotwocc{} with counting quantifiers~\citep{kuzelka2021weighted}.
\end{itemize}
All these fragments were proven to be domain-liftable following reductions to \ufotwo{}, as shown in \Cref{fig:reduction}.
Specifically, a \ctwo{} sentence can be transformed into a normal form, called \sctwo{} (\emph{Scott normal form with counting}\footnote{The name introduced by \citet{wang2024lifted} is inspired by the Scott normal form for \fotwo{}~\cite{gradel1997decision}, which is of the form $\forall x\forall y: \phi(x,y) \land \bigwedge_{i=1}^m \forall x\exists y: \psi_i(x,y)$, where $\phi$ and $\psi_i$ are quantifier-free formulas.}):

\begin{equation}\label{eq:ctwo_normal_form}
\Gamma = \Psi \wedge \bigwedge_{i \in [M]} \left( \forall x \exists^{=k_i} y : R_i(x, y) \right),
\end{equation}
where $\Psi$ is an $\mathbf{FO}^2_{\mathrm{CC}}$ sentence, each $R_i/2$ is a distinguished binary predicate, each $k_i$ is a non-negative integer, 
and $M$ is a non-negative integer~\citep{kuzelka2021weighted}. 
Then the counting quantifiers in \cref{eq:ctwo_normal_form} can be eliminated by introducing additional cardinality constraints~\citep{kuzelka2021weighted}, resulting in another $\mathbf{FO}^2_{\mathrm{CC}}$ sentence $\Gamma'$.

Finally, \wfomc{} of $\sentence'$ can be computed with polynomial calls to \wfomc{}s of \fotwo{} sentences using Lagrange interpolation~\cite{kuzelka2021weighted}, which is further reduced to \ufotwo{} by Skolemization eliminating the existential quantifiers~\cite{van2014skolemization,Beame_2015}.
All these reductions preserve domain-liftability, i.e., the target \wfomc{} problems can be solved in polynomial time in the domain size with access to an oracle for the source \wfomc{} problems.
The details of reductions as well as the proofs of the reductions preserving domain-liftability can be found in~\cite[Section 2.3]{KR2024-64}.
Finally, we note that the reductions eliminating cardinality constraints and existential quantifiers are \emph{modular} in the sense that they can be applied to any \wfomc{} problem even outside the fragment of two-variable logic, and thus they also apply to the setting with modulo counting quantifiers, which we will discuss in \Cref{sec:modulo_counting}.

\begin{figure}[tbp]
    \centering
    \includegraphics[width=\linewidth]{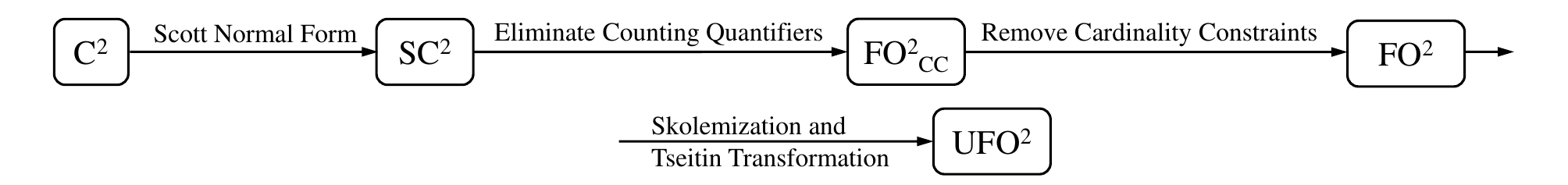}
    \caption{\wfomc{} reductions for \ctwo{}.}\label{fig:reduction}
\end{figure}


\subsection{1-Types and 2-Tables}\label{sec:1types}
Before introducing \incwfomc{} \citep{toth2023lifted}, we need the notions of \emph{1-types} and \emph{2-tables}, which are commonly used in the literature studying two-variable logic~\cite{lin_towards_2021-1,wang2024lifted}.

A \emph{unary literal} is a literal containing only one variable $x$.
A \emph{binary literal} is a literal containing both variables $x$ and $y$.
For example, $P(x)$ is a unary literal, and $R(x,y)$ is a binary literal, while $R(x,x)$ is a unary literal even though $R$ is a binary predicate.
We call a set of literals \emph{maximally consistent} if it is not possible to add any literal without introducing a contradiction.

\begin{definition}[1-types and 2-tables]
    A \emph{1-type} $\tau$ of a two-variable quantifier-free formula $\phi$ is a maximal consistent set of unary literals formed from the predicates in $\phi$.
    A \emph{2-table} $\pi$ of $\phi$ is a maximal consistent set of binary literals formed from the predicates in $\phi$.
\end{definition}

We also view a 1-type or a 2-table as a conjunction of literals.
In this sense, a 1-type $\tau$ and a 2-table $\pi$ are both quantifier-free formulas, and we write them as $\tau(x)$ and $\pi(x,y)$, respectively.
A 1-type $\tau$ of a formula $\psi(x,y)$ is \emph{valid} if and only if for any constant $a$, $\tau(a)\models \psi(a,a)$.
In the rest of the paper, we assume that all 1-types mentioned are valid, unless otherwise specified.
A 2-table $\pi$ is \emph{compatible} with $(\tau,\tau')$ for $\psi(x,y)$ if $\tau(a) \land \pi(a,b)\land \tau'(b) \models \psi(a,b)$, for any constants $a$ and $b$.
Given a tuple of 1-types $(\tau, \tau')$, we denote the set of all 2-tables compatible with $(\tau, \tau')$ as $\Pi_{\tau, \tau'}$.

\begin{example}\label{ex:1type}
    We consider the sentence encoding 2-colored graphs: 
    \begin{align*}
        \sentence_{CG} = & \forall x\forall y: E(x,y) \to E(y,x) \land \\
        &\forall x: (R(x) \lor B(x)) \land \neg(R(x) \land B(x)) \land\\
        & \forall x\forall y: E(x,y) \to \left(\neg(R(x) \land R(y)) \land \neg(B(x) \land B(y))\right).
    \end{align*}
    This sentence enforces that every vertex is either red $(R)$ or black $(B)$, 
    but not both, and an edge $(E)$ can only connect vertices of different colors.
    The valid 1-types are $\tau_1 = \{R(x), \neg B(x), \neg E(x,x)\}$ (representing a red vertex with no self-loop)
    and $\tau_2 = \{\neg R(x), B(x), \neg E(x,x)\}$ (representing a black vertex with no self-loop).
    Note that self-loops are prohibited due to the coloring constraint $\forall x: E(x,x) \to \neg(R(x) \land R(x)) \land \neg(B(x) \land B(x))$.
    The compatible 2-tables of $(\tau_1, \tau_2)$ (as well as $(\tau_2, \tau_1)$) contain $\pi_1 = \{E(x,y), E(y,x)\}$ and $\pi_2 = \{\neg E(x,y), \neg E(y,x)\}$, i.e., $\Pi_{\tau_1, \tau_2} = \Pi_{\tau_2, \tau_1} = \{\pi_1, \pi_2\}$.
    The compatible 2-table of $(\tau_1, \tau_1)$ (as well as $(\tau_2, \tau_2)$) contains only $\pi_2 = \{\neg E(x,y), \neg E(y,x)\}$.
\end{example}

Given an interpretation $\mathcal{I}$ over a domain $\domain$, a constant $a\in\domain$ is said to \emph{realize} a 1-type $\tau$ in $\mathcal{I}$ if $\tau(a)$ is satisfied in $\mathcal{I}$, i.e., $\tau(a) \subseteq \mathcal{I}$.
Each constant can realize only one 1-type in an interpretation.
The set of all constants realizing a 1-type $\tau$ in $\mathcal{I}$ is denoted as $\realizedele{\mathcal{I}}{\tau}$.
Let $\tau_1, \ldots, \tau_p$ be all the 1-types of $\sentence$.
The \emph{1-type configuration} of an interpretation $\mathcal{I}$ of $\sentence$ is a non-negative integer vector $\veck$ of size $p$, where $\veck[i] = |\realizedele{\mathcal{I}}{\tau_i}|$ for each $i\in[p]$.
The number of all possible 1-type configurations over a domain of size $n$ is $\binom{n+p-1}{p-1}$, which is the number of non-negative integer solutions to the equation $x_1 + \ldots + x_p = n$.


\begin{figure}[tbp]
    \centering
    \begin{subfigure}[b]{0.22\textwidth}
        \includegraphics[width=\linewidth]{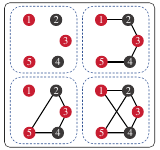}
        \caption{2-colored graphs}\label{fig:2-color-example}
    \end{subfigure}
    \begin{subfigure}[b]{0.3\textwidth}
        \includegraphics[width=\linewidth]{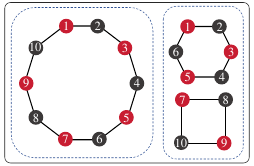}
        \caption{2-regular 2-colored graphs}\label{fig:2-reg-2-color-example}
    \end{subfigure}

    \caption{Illustrated models of colored graphs.}
\end{figure}

\begin{example}\label{ex:1_type_config}
    Example models of the 2-colored graph sentence $\sentence_{CG}$ over the domain size of 5 are illustrated in \Cref{fig:2-color-example}.
    In these models, constants 1, 3, and 5 realize the 1-type $\tau_1$ (red), and constants 2, 4 realize the 1-type $\tau_2$ (black). 
    The corresponding 1-type configuration is the vector $\veck = (3,2)$.
\end{example}

\subsection{\incwfomc{}}\label{sec:incwfomc}

We sketch the algorithm \incwfomc{} from \citep{toth2023lifted}.
The algorithm \incwfomc{} works on a fixed \ufotwo{} sentence $\sentence = \forall x\forall y: \fotwoformula(x,y)$ over a domain $[n]$ under fixed weights $(\weight, \negweight)$.
The approach also applies to \ctwo{}, \fotwocc{}, and \fotwo{} sentences via the reductions in \Cref{sec:reductions}.

Let $\tau_1, \ldots, \tau_p$ be all the 1-types of $\sentence$.
Since $\sentence$ is fixed, the number of 1-types $p$ is also fixed.
Define $T_h(\veck)$ for some $h\in[n]$ as the summation of $\typeweight{\mu}$ over all models $\mu\in\fomodels{\sentence}{h}$ with 1-type configuration $\veck$ over the domain $[h]$.
Then the \wfomc{} can be computed as
\begin{equation}\label{eq:incwfomc_final}
    \symwfomc(\sentence, n, \weight, \negweight) = \sum_{\veck\in\nat^p: |\veck| = n} T_n(\veck).
\end{equation}

\incwfomc{} computes $T_h(\veck)$ using dynamic programming.
In the base case $h=1$, there is a single element in the domain.
Accordingly, we initialize $T_1$ by setting $T_1(\unittensor_i) = \typeweight{\tau_i}$ for each $i\in[p]$, where $\unittensor_i$ is the unit tensor of size $p$ with only the $i$-th entry being $1$.

For $h>1$, \incwfomc{} computes $T_h$ from $T_{h-1}$.
Observe that any model $\mu'\in\fomodels{\sentence}{h-1}$ can be extended to a (possibly empty) set of models $\mu\in\fomodels{\sentence}{h}$:
\begin{equation}\label{eq:incwfomcmodel}
    \mu = \mu' \cup \tau_l(h) \cup \bigcup_{a\in[h-1]} \pi^{\mu'}_{al}(a,h),
\end{equation}
for some $l\in[p]$ and some $\pi^{\mu'}_{al}\in\Pi_{\zeta_{a,\mu'},\tau_l}$, where $\zeta_{a,\mu'}$ is the 1-type of $a$ in $\mu'$.
Denote by $M_{\mu',\tau_l}$ the set of all such models $\mu$ constructed from $\mu'$ and $\tau_l$. 
It is easy to check that all $M_{\mu',\tau_l}$ form a partition of $\fomodels{\sentence}{h}$, i.e., for any $\mu'_1, \mu'_2\in\fomodels{\sentence}{h-1}$ and $l_1, l_2\in[p]$ such that $\mu'_1 \neq \mu'_2$ or $l_1 \neq l_2$, we have $M_{\mu'_1, \tau_{l_1}} \cap M_{\mu'_2, \tau_{l_2}} = \emptyset$.
Moreover, if the 1-type configuration of $\mu'$ is $\veck$, then the 1-type configuration of $\mu$ in $M_{\mu',\tau_l}$ is $\veck + \unittensor_l$.

Let $W_{\mu',\tau_l} = \sum_{\mu\in M_{\mu',\tau_l}} \typeweight{\mu}$.
Then the value of $T_h(\veck)$ can be computed as the summation of $W_{\mu',\tau_l}$ over all $\mu'\in\fomodels{\sentence}{h-1}$ and $l\in[p]$ such that $\mu'$ has 1-type configuration $\veck-\unittensor_l$.
\Cref{fig:extension} illustrates examples of $M_{\mu',\tau_l}$ and $W_{\mu',\tau_l}$, where all weights of predicates are set to 1.

\begin{figure}
    \centering
    \includegraphics[width=\textwidth]{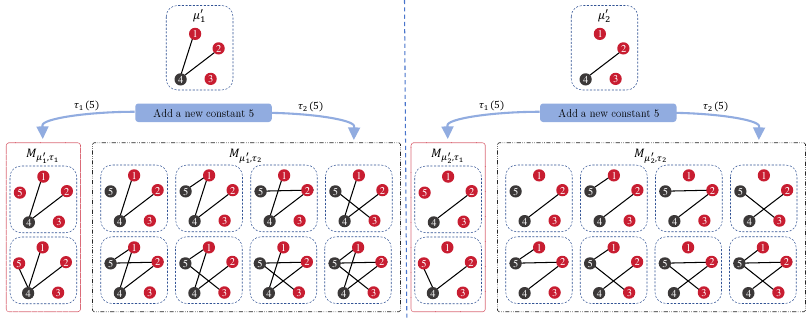}
    \caption{Model extensions of two colored graphs $\mu_1'$ and $\mu_2'$ with the same 1-type configuration $(3, 1)$ by adding a new vertex $5$.
    The sets of extended models $M_{\mu_1', \tau_1}$, $M_{\mu_1', \tau_2}$, $M_{\mu_2', \tau_1}$, and $M_{\mu_2', \tau_2}$ are mutually disjoint.
    As the weights of all predicates are set to 1, $W_{\mu, \tau_l}$ is simply the number of models in $M_{\mu, \tau_l}$.
    It is clear that $W_{\mu_1', \tau_1} = W_{\mu_2', \tau_1} = 2$ and $W_{\mu_1', \tau_2} = W_{\mu_2', \tau_2} = 8$, 
    since $\mu_1'$ and $\mu_2'$ have the same 1-type configuration and the added vertex $5$ has the same 1-type.}
    \label{fig:extension}
\end{figure}

Now we can compute $W_{\mu',\tau_l}$ for some $\mu'\in\fomodels{\sentence}{h-1}$ and $l\in[p]$.
Define
\begin{equation}\label{eq:weights}
    \begin{aligned}
        w_{\tau_i} = \typeweight{\tau_i} \qquad \text{ and } \qquad r_{\tau_i, \tau_j} = \sum_{\pi\in\Pi_{\tau_i, \tau_j}} \typeweight{\pi},
    \end{aligned}
\end{equation}
for each $i,j\in[p]$.
Since $\mu'$, $\tau_l(h)$, and $\pi^{\mu'}_{al}(a,h), a\in[h-1]$ are mutually disjoint, that is, any two of them do not share common ground literals, we can write the weight of $\mu$ as
\begin{equation}\label{eq:modelweight}
    \typeweight{\mu} = \typeweight{\mu'}\cdot w_{\tau_l} \cdot \prod_{a\in[h-1]}\typeweight{\pi^{\mu'}_{al}}.
\end{equation}
Moreover, the choices of $\pi^{\mu'}_{al}$ for different $a$ do not affect each other, since the 1-types $\zeta_{a,\mu'}$ and $\tau_l$ are fixed.
By the definition in \cref{eq:weights}, we have
\begin{equation}\label{eq:weightprod}
    W_{\mu',\tau_l} = \typeweight{\mu'}\cdot w_{\tau_l} \cdot \prod_{i\in[p]} r_{\tau_i, \tau_l}^{|\realizedele{\mu'}{\tau_i}|}.
\end{equation}
Note that $\left(|\realizedele{\mu'}{\tau_1}|, \ldots, |\realizedele{\mu'}{\tau_p}|\right)$ is exactly the 1-type configuration of $\mu'$.

Therefore, we can compute $T_h(\veck)$ as
\begin{equation}\label{eq:incwfomc}
    T_{h}(\veck) = \sum_{\substack{l\in[p]: \veck[l]\ge 1}} \left(T_{h-1}(\veck - \unittensor_l) \cdot w_{\tau_l} \cdot \prod_{i\in[p]} r_{\tau_i, \tau_l}^{(\veck - \unittensor_l)[i]}\right).
\end{equation}
Since the number of $\veck$ in $T_h$ for any $h\in[n]$ is $\binom{h+p-1}{p-1} = \mathcal{O}(n^{p-1})$, the time complexity of \incwfomc{} is $\mathcal{O}(n^p)$, which is polynomial in $n$ for fixed $p$ (due to the fixed sentence $\sentence$).



\section{\texorpdfstring{\ouralgo{} for \ctwo{}}{\ouralgo{} for C2}}\label{sec:incrementalWFOMC3}
This section introduces our main contribution, \ouralgo{}, an efficient algorithm for computing WFOMC on \ctwo{} formulas.
The algorithm is extended to the \ctwo{} fragment with modulo counting quantifiers (\ctwomod{}) in the next section.

\subsection{Normal Form}\label{sec:ctwonormalform}

Instead of relying on the \wfomc{} reductions from \ctwo{} to \ufotwo{} as discussed in \Cref{sec:reductions}, \ouralgo{} directly computes \wfomc{} for \ctwo{} sentences in the following normal form:
\begin{equation}\label{eq:sctwo}
    \sentence = \Psi \land \bigwedge_{i\in[M]} \left(\forall x\exists^{=k_i} y: R_i(x,y) \right)\land\bigwedge_{j\in[N]} \left(\exists^{=u_j} x: U_j(x)\right),
\end{equation}
where $\Psi=\forall x\forall y: \fotwoformula(x,y)$ is a \ufotwo{} sentence, $R_i/2$ and $U_j/1$ are distinguished binary and unary predicates, respectively, appearing in $\Psi$, $k_i$ and $u_j$ are non-negative integers, and $M$ and $N$ are positive integers.
We abuse the name \sctwo{} to refer to the fragment of \ctwo{} sentences in the form of \cref{eq:sctwo}, but note that this form is slightly different from \cref{eq:ctwo_normal_form}, where $\Psi$ is a \ufotwo{} sentence instead of an \fotwocc{} sentence, and the unary counting quantified formulas $\exists^{=u_j} x: U_j(x)$ are present instead of being transformed into cardinality constraints (as in \cref{eq:ctwo_normal_form}).
We provide the transformation from any \ctwo{} sentence to this normal form in \Cref{app:modk_reduction} for completeness.

\begin{remark}\label{remark:lessthan1}
    It will turn out that \ouralgo{} can also directly handle counting quantifiers of the form $\exists^{\le k}$ without expanding them into a disjunction of $\exists^{=1}, \ldots, \exists^{=k}$ as done in the prior work~\cite{kuzelka2021weighted}.
    While the expansion to a disjunction is liftable (due to the fixed $k$), it introduces $k$ new counting quantifiers, significantly degrading the efficiency of these algorithms.
    For the sake of better readability, we will continue to present our algorithm only on \sctwo{} sentences.
    We will, however, revisit this point once in a while, mainly in a series of remarks.
\end{remark}



\subsection{Counting Types and Configurations}\label{sec:countingtype}

Consider an \sctwo{} sentence $\sentence$ as in \cref{eq:sctwo}.
We introduce the notion of \emph{counting-1-types} that extends the 1-types to also represent the satisfaction of constants w.r.t. counting quantified formulas.
\begin{definition}[Counting type and counting-1-type]\label{def:c1type}
    A \emph{counting-type} (\emph{c-type}) is a vector $\vecc\in\nat^{M}$ such that $\vecc[i] \in \{0, 1, \ldots, k_i\}$ for each $i\in[M]$.
    A \emph{counting-1-type} (c1-type) is a tuple $(\tau_i, \vecc)$, where $\tau_i$ is a 1-type and $\vecc$ is a c-type.
\end{definition}
Similar to 1-types, a constant $a$ in the domain $\domain$ realizes a c-type $\vecc$ in an interpretation $\mathcal{I}$ over $\domain$ if and only if for each $i\in[M]$, there are exactly $\vecc[i]$ constants $b\in\domain$ such that $R_i(a, b) \in \mathcal{I}$.
A constant $a$ realizes a c1-type $(\tau, \vecc)$ in $\mathcal{I}$ if it realizes the 1-type $\tau$, as well as the c-type $\vecc$, in $\mathcal{I}$.
The set of constants realizing the c1-type $(\tau, \vecc)$ in $\mathcal{I}$ is denoted as $\realizedele{\mathcal{I}}{\tau, \vecc}$.


Let $\tau_1, \ldots, \tau_p$ be all the 1-types of $\Psi$ in \cref{eq:sctwo}.
Denote by $\mathcal{C}$ and $\mathcal{T}$ the set of all c-types and c1-types of $\sentence$, respectively, i.e., $\mathcal{C} = \{\vecc\in\nat^M: \vecc[i] \in \{0, 1, \ldots, k_i\} \text{ for each } i\in[M]\}$ and $\mathcal{T} = \{(\tau_i, \vecc): i\in[p], \vecc\in\mathcal{C}\}$.
We will alternatively denote a c1-type $(\tau_i, \vecc)$ as $(i, \vecc)$, and view it as a vector of size $1 + M$, where the first entry is the index of the 1-type, i.e., $(i, \vecc) = (i, \vecc[1], \ldots, \vecc[M])$.
Let $\dimensions = p\times (k_1 + 1) \times \ldots \times (k_M + 1)$.
The counting-type configuration and counting-1-type configuration are defined similarly to the 1-type configuration.

\begin{definition}[Counting-1-type configuration]\label{def:c1typeconfig}
    The \emph{counting-1-type configuration} (\emph{c1-type configuration}) of an interpretation $\mathcal{I}$ of $\sentence$ is a non-negative integer tensor $\vecK\in \nat^{\dimensions}$, where $\vecK[(i, \vecc)] = |\realizedele{\mathcal{I}}{\tau_i, \vecc}|$ for each $i\in[p], \vecc\in\mathcal{C}$.
\end{definition}

\begin{example}\label{ex:c1type}
    Consider $\sentence = \sentence_{CG} \land \forall x\exists^{=2} y: E(x,y)$, which encodes a 2-regular 2-colored graph.
    The 1-types are $\tau_1$ and $\tau_2$ representing red and black vertices, respectively, as in \Cref{ex:1type}.
    The c-types are $(0, ), (1, )$, and $(2, )$.
    \Cref{fig:2-reg-2-color-example} illustrates two example models of $\sentence$ over the domain size of $10$.
    The c1-type configuration of these models is $\vecK\in\nat^{2\times 3}$ such that $\vecK[1, 2] = 5$, $\vecK[2, 2] = 5$, and $0$ for all other entries.
\end{example}

Given a c1-type configuration $\vecK\in\nat^{\dimensions}$, we write $\vecK[(i, \cdot)]$ for the summation of $\vecK[(i, \vecc)]$ over all c-types $\vecc\in\mathcal{C}$, i.e., $\vecK[(i, \cdot)] = \sum_{\vecc\in\mathcal{C}} \vecK[(i, \vecc)]$.
Then the 1-type configuration corresponding to a c1-type configuration $\vecK$ can be written as $(\vecK[(1, \cdot)], \ldots, \vecK[(p, \cdot)])$.
For example, for the c1-type configuration $\vecK$ in \Cref{ex:c1type}, the corresponding 1-type configuration is $(\vecK[(1, \cdot)], \vecK[(2, \cdot)]) = (5, 5)$.

\begin{remark}\label{remark:c1types}
    The c-types and c1-types are essentially the block types and cell types, respectively, defined in~\cite{wang2024lifted} for sampling models of \sctwo{} sentences.
    However, in~\cite{wang2024lifted}, the block types and cell types are also used for handling the existential quantifiers, which are dealt with by Skolemization in our case.\footnote{We could also handle the existential quantifiers directly by adding special dimensions to the c-types with the values being either $0$ or $1$. However, we found no significant performance improvement compared to Skolemization in our initial experiments.}
\end{remark}

\begin{remark}\label{remark:lessthan2}
    When dealing with counting quantifiers of the form $\forall\exists^{\le k}$, we use the same definitions of c-types with $k$ being the upper bound.
    We do not care about the case when a constant has more than $k$ neighbors, as such cases are not allowed in any model of the sentence.
\end{remark}

\subsection{Algorithm Overview}

Now we give an overview of \ouralgo{}.
Fix an \sctwo{} sentence $\sentence$ as in \cref{eq:sctwo}, a weight function $(\weight, \negweight)$, and let $[n]$ be the domain.

Define $T_h(\vecK)$ as the summation of $\typeweight{\mu}$ over all models $\mu\in\fomodels{\Psi}{h}$ with c1-type configuration $\vecK$.
Clearly, the $\symwfomc(\sentence, n, \weight, \negweight)$ can be computed by summing $T_n(\vecK)$ over all c1-type configurations $\vecK$ 
that correspond to models satisfying the condition $\bigwedge_{i\in[M]} \forall x\exists^{=k_i} y: R_i(x,y)\land\bigwedge_{j\in[N]} \exists^{=u_j} x: U_j(x)$.

Note that $\exists^{=u_j} x: U_j(x)$ holds in a model $\mu$ if and only if there are exactly $u_j$ ground atoms of $U_j$ in $\mu$.
Moreover, the number of ground atoms of $U_j$ in $\mu$ is determined by the 1-type configuration of $\mu$.
Then the condition $\exists^{=u_j} x: U_j(x)$ holds if and only if the following holds:
\begin{equation}\label{eq:countingcondition1}
    \sum_{i\in[p]} \indicator{U_j(x)\in\tau_i} \cdot \vecK[(i,\cdot)] = u_j,
\end{equation}
where $\indicator{\cdot}$ is the indicator function.
The condition of $\bigwedge_{i\in[M]}\forall x\exists^{=k_i} y: R_i(x,y)$ is more straightforward from the definition of c1-type configuration: A model with c1-type configuration $\vecK$ satisfies $\bigwedge_{i\in[M]}\forall x\exists^{=k_i} y: R_i(x,y)$ if and only if 
\begin{equation}\label{eq:countingcondition2}
    \sum_{i\in[p]} \vecK[i, k_1, \ldots, k_M] = n,
\end{equation}
that is, all constants in the domain have the c-type $(k_1, \ldots, k_M)$.

\begin{remark}\label{remark:lessthan3}
    For $\exists^{\le u_j} x: U_j(x)$, the condition \cref{eq:countingcondition1} is changed to $\sum_{i\in[p]} \indicator{U_j(x)\in\tau_i} \cdot \vecK[(i, \cdot)] \le u_j$.
    For $\forall x\exists^{\le k_j}y:R_j(x,y)$, \cref{eq:countingcondition2} is replaced by $\sum_{i\in[p]} \sum_{0\le c_j\le k_j} \mathbf K[(i, k_1, \dots, c_j, \dots, k_M)] = n$, and multiple such conditions for different $j$ are combined together.
\end{remark}

Let $\mathcal{K}$ be the set of all c1-type configurations $\vecK$ such that \cref{eq:countingcondition1} and \cref{eq:countingcondition2} hold.
Then we have
\begin{equation}\label{eq:finalwfomc}
    \symwfomc(\sentence, n, \weight, \negweight) = \sum_{\substack{\vecK\in\nat^{\dimensions}: |\vecK| = n, \vecK\in\mathcal{K}}} T_n(\vecK).
\end{equation}
\ouralgo{} computes $T_h(\vecK)$ incrementally, similar to \incwfomc{}, in a dynamic programming fashion, as presented in \Cref{alg:ouralgo}.

\begin{algorithm}[tbp]
    \DontPrintSemicolon
    \caption{\ouralgo($\sentence, n, \weight, \negweight$)}\label{alg:ouralgo}
    \KwIn{A \sctwo{} sentence $\sentence$, a positive integer $n$, and weights $(\weight, \negweight)$}
    \KwOut{$\symwfomc(\sentence, n, \weight, \negweight)$}
    Initialize $T$ as an empty dictionary with default value 0\;
    \tcp{Base case for $h=1$}
    \ForEach{$l\in[p]$}{
       $T[\unittensor_{(l, \expctype{\tau_l}{x})}] \gets w_{\tau_l}$\;
    }
    \tcp{General case for $h>1$}
    \ForEach{$h = 2, \ldots, n$\label{line:forh}}{
        Initialize $T_{new}$ as an empty dictionary with default value 0\;
        \ForEach{$(\vecK_{old}, W_{old}) \in T$}{
            \ForEach{$l\in[p]$}{
                \ForEach{$(\vecK, W)\in F(\vecK_{old}, l)$ such that $|\vecK| = h$}{
                    $T_{new}[\vecK] \gets T_{new}[\vecK] + T[\vecK_{old}]\cdot W \cdot w_{\tau_l} $ \;
                }
            }
        }
        $T \gets T_{new}$\;
    }
    \Return \cref{eq:finalwfomc}\;
\end{algorithm}

\subsubsection{Base Case}

When $h=1$, there is a single constant $h$ in the domain.
Any model in $\fomodels{\Psi}{1}$ can be written as $\tau_l(h)$ for some $l\in[p]$, where the constant $h$ realizes the 1-type $\tau_l$.
For a set of literals $L$ over logical variables $x$ and $y$, define
\begin{equation}\label{eq:expctype} 
    \expctype{L}{x} = \Big(\indicator{R_1(x,\cdot)\in L}, \ldots, \indicator{R_M(x,\cdot)\in L}\Big),
\end{equation}
as a vector of size $M$ where $R_i(x,\cdot)\in L$ means that $L$ contains the literal $R_i(x,x)$ or $R_i(x,y)$.
Then the c1-type configuration of $\tau_l(h)$ is $\unittensor^\dimensions_{(l, \expctype{\tau_l}{x})}$, where $\unittensor^\dimensions_{(l, \vecc)}$ is the unit tensor of size $\dimensions$ with only the entry $(l, \vecc)$ being $1$.
The weight of the model $\tau_l(h)$ is $w_{\tau_l}$ defined in \cref{eq:weights}.
Thus, $T_1$ is initialized as $T_1(\unittensor^\dimensions_{(l, \expctype{\tau_l}{x})}) = w_{\tau_l}$ for each $l\in[p]$.

\subsubsection{General Case}

Consider the case $h>1$.
Following the computation in \incwfomc{}, for any $l\in[p]$, a model $\mu'\in\fomodels{\Psi}{h-1}$ can be always extended to a (possibly empty) set of models $\mu\in\fomodels{\Psi}{h}$ by adding the 1-type $\tau_l(h)$ and the 2-tables $\pi_{al}^{\mu'}$ between the constants $a\in[h-1]$ and $h$ as shown in \cref{eq:incwfomcmodel}.
However, now the c1-type configuration $\vecK$ of $\mu$ depends not only on $\mu'$ and $l$, but also on \emph{the choices of $\pi^{\mu'}_{al}$} for each $a\in[h-1]$.
First, the choices of $\pi^{\mu'}_{al}$ determine the c-type of $h$ in $\mu$.
Moreover, the c-types of other constants $a$ in $\mu$ might also depend on the choices of $\pi^{\mu'}_{al}$. For example, the possible ground literal $R_j(a,h)$ in $\pi^{\mu'}_{al}(a,h)$ will affect the $j$-th entry of the c-type of $a$ in $\mu$, and thus affect the c1-type configuration of $\mu$.

To this end, we need a fine-grained incremental computation of $T_h(\vecK)$.
Recall that $M_{\mu',\tau_l}$ is the set of all models $\mu\in\fomodels{\Psi}{h}$ constructed from $\mu'\in\fomodels{\Psi}{h-1}$ and $\tau_l(h)$.
We define $W_{\mu',\tau_l}(\vecK)$ as the weight of all models $\mu\in M_{\mu',\tau_l}$ with c1-type configuration $\vecK$.
In the next subsection, we will show that $W_{\mu',\tau_l}(\vecK)$ can be written as
\begin{equation}\label{eq:rewriteW}
    W_{\mu',\tau_l}(\vecK) = \typeweight{\mu'} \cdot w_{\tau_l} \cdot F_{\vecK', \tau_l}(\vecK),
\end{equation}
where $\vecK'$ is the c1-type configuration of $\mu'$, and $F_{\vecK', \tau_l}(\vecK)$ is a function conditioned on $\vecK'$ and $\tau_l$ that maps $\vecK$ to a real number.
Then we can compute $T_h(\vecK)$ as
\begin{equation}\label{eq:incwfomc2}
    T_h(\vecK) = \sum_{l\in[p]}\ \sum_{\substack{\vecK'\in\nat^\dimensions: |\vecK'| = h-1}} T_{h-1}(\vecK')\cdot w_{\tau_l} \cdot F_{\vecK', \tau_l}(\vecK).
\end{equation}
\Cref{fig:running_example} provides an illustration of this incremental computation from $T_5(\vecK')$ to $T_6(\vecK)$ for the 2-regular 2-colored graph example.

\begin{figure}[tbp]
    \centering
    \includegraphics[width=\textwidth]{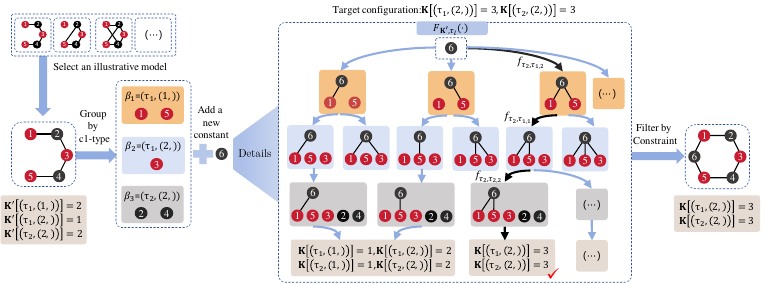}
    \caption{An illustration of the incremental computation of $T_6[\vecK]$ from $T_5[\vecK']$ for the 2-regular 2-colored graph example.
    Each model from $T_5[\vecK']$ (left) can be extended by adding a new vertex 6 with a specific 1-type (black, $\tau_2$ in this case) to form new models in $T_6[\vecK]$ (right).
    For clarity of presentation, only one example model and its extension are shown, as other models with the same c1-type configuration $\vecK'$ will be treated similarly.
    The extension process involves adding edges between the new vertex and existing vertices, whose contributed weights are computed by the function $F_{\vecK', l}(\vecK)$ (center).
    The computation of $F_{\vecK', l}(\vecK)$ is further performed incrementally through c-type updates as described in \Cref{sec:computingF}.
    Intuitively, we view the elements in the same c1-type as a group, and use the c-type update to capture the overall effect of adding edges from this group to the new vertex.
    }
    \label{fig:running_example}
\end{figure}

\subsection{Computing \texorpdfstring{$W_{\mu',\tau_l}(\vecK)$}{W underscore mu prime tau l vecK}}
We show how to compute $W_{\mu',\tau_l}(\vecK)$, which is the main part of \ouralgo{}.
We first need the notion of \emph{c-type update}.

\subsubsection{C-type Update}\label{sec:c_type_update}

Recall that by \cref{eq:expctype}, $\expctype{\pi}{x}$ and $\expctype{\pi}{y}$ are the vectors of the numbers of $R_i(x,y)$ and $R_i(y,x)$, respectively, in the 2-table $\pi$.

\begin{definition}[C-type update]\label{def:c-type-update}
    Let $\tau_i$ and $\tau_j$ be two 1-types of $\Psi$.
    A \emph{$d$-order c-type update} (or simply a \emph{$d$-order c-update}) of $\tau_i$ on $\tau_j$ is a tuple $(\vecu, \vecU)$, where
    \begin{equation*}
        \vecu = \sum_{\pi\in\mathcal{B}} \expctype{\pi}{x} \qquad \text{ and } \qquad \forall \vecsig\in\{0,1\}^M, \vecU[\vecsig] = |\{\pi\in\mathcal{B}: \expctype{\pi}{y} = \vecsig\}|,
    \end{equation*}
    for a d-tuple of 2-tables $\mathcal{B}\in\Pi_{\tau_i, \tau_j}^d$ (here $\Pi_{\tau_i, \tau_j}^d$ is the Cartesian product of $\Pi_{\tau_i, \tau_j}$ with $d$ copies).
    The tuple of 2-tables $\mathcal{B}$ is called a $d$-order $(\vecu, \vecU)$-\emph{c-type updater} (or simply $(\vecu, \vecU)$-updater) of $\tau_i$ on $\tau_j$.
\end{definition}

Suppose that there is a constant $a$ with c1-type $(\tau_i, \vecc_a)$ and $d$ constants $b_1, \ldots, b_d$ with the same c1-type $(\tau_j, \vecc_b)$.
Intuitively, a $d$-order c-type update of $\tau_i$ on $\tau_j$ describes how the c-type of $a$ and the c-type configuration of $b_1, \dots, b_d$ are updated if we choose the 2-tables between $a$ and $b_1, \ldots, b_d$ to be some $\mathcal{B}\in\Pi_{\tau_i, \tau_j}^d$.
Formally, given a $d$-order c-type update $(\vecu, \vecU)$ of $\tau_i$ on $\tau_j$, the c-type of $a$ updated by $(\vecu, \vecU)$ is $\vecc_a + \vecu$, and the updated c-type configuration of $b_1, \ldots, b_d$ is given by $\update(\vecc_b, \vecU)\in\nat^{(k_1 + 1) \times \ldots \times (k_M + 1)}$, where
\begin{equation}\label{eq:updateconf}
    \update(\vecc_b, \vecU)[\vecc] = \sum_{\vecsig\in\{0, 1\}^M: \vecc_b + \vecsig = \vecc} \vecU[\vecsig],
\end{equation}
for each $\vecc\in\mathcal{C}$.

\subsubsection{\texorpdfstring{Rewriting $W_{\mu',\tau_l}(\vecK)$}{Rewriting W mu prime U vecK}}

Now we are ready to show how to rewrite $W_{\mu',\tau_l}(\vecK)$ into the form of \cref{eq:rewriteW}.
Recall that $\realizedele{\mu'}{\tau,\vecc}$ is the set of constants realizing the c1-type $(\tau, \vecc)$ in $\mu'$, and $\mathcal{C}$ and $\mathcal{T}$ are the sets of all c-types and c1-types of $\sentence$, respectively.
We can rewrite \cref{eq:incwfomcmodel} as
\begin{equation}\label{eq:incwfomcmodel2}
    \mu = \mu' \cup \tau_l(h) \cup \bigcup_{(\tau,\vecc)\in\mathcal{T}} \left(\bigcup_{a\in \realizedele{\mu'}{\tau, \vecc}} \pi_{\tau a}(a,h)\right),
\end{equation}
where $\pi_{\tau a}\in \Pi_{\tau \tau_l}$ for each $a\in \realizedele{\mu'}{\tau, \vecc}$.
Let $(\vecu_{\tau, \vecc}, \vecU_{\tau, \vecc})$ be a $|\realizedele{\mu'}{\tau, \vecc}|$-order c-type update of $\tau_l$ on $\tau$ for each $(\tau, \vecc)\in\mathcal{T}$.
Then the c1-type configuration $\vecK$ of $\mu$ is determined by the 1-type $\tau_l$ and the c-type updates $(\vecu_{\tau, \vecc}, \vecU_{\tau, \vecc})$, i.e., for each $(\tau, \vecc)\in\mathcal{T}$,
\begin{equation}\label{eq:computeK}
    \vecK[(\tau, \vecc)] = \begin{cases}
        \sum_{\vecc' \in\mathcal{C}} \update(\vecc', \vecU_{\tau, \vecc'})[\vecc] & \text{if } \tau \neq \tau_l,\\
        \sum_{\vecc' \in\mathcal{C}} \update(\vecc', \vecU_{\tau, \vecc'})[\vecc] + \indicator{\vecc = \expctype{\tau_l}{x} + \sum_{(\tau', \vecc')\in\mathcal{T}} \vecu_{\tau', \vecc'}} & \text{if } \tau = \tau_l.
    \end{cases}
\end{equation}
The term $\update(\vecc', \vecU_{\tau, \vecc'})[\vecc]$ accounts for the number of constants with the 1-type $\tau$, whose c-type is updated from $\vecc'$ in $\mu'$ to $\vecc$ in $\mu$ by the c-type updates $(\vecu_{\tau, \vecc'}, \vecU_{\tau, \vecc'})$.
Moreover, the c-type of the new constant $h$ is $\expctype{\tau_l}{x} + \sum_{(\tau', \vecc')\in\mathcal{T}} \vecu_{\tau', \vecc'}$, contributing $1$ to the entry $\vecK[(\tau_l, \expctype{\tau_l}{x} + \sum_{(\tau', \vecc')\in\mathcal{T}} \vecu_{\tau', \vecc'})]$.
Note that $|\realizedele{\mu'}{\tau, \vecc}| = \vecK'[(\tau, \vecc)]$.
Thus, the value of $\vecK$ does not depend on the actual model $\mu'$ but on its c1-type configuration $\vecK'$. 

\begin{example}\label{ex:f-computation-detail}
    Consider the 2-regular 2-colored graph example from \Cref{ex:c1type}.
    The update of $W_{\mu', \tau_2}(\vecK)$ from an example model $\mu'$ with c1-type configuration $\vecK'$ is illustrated in \Cref{fig:running_example}.
    Rather than processing the connections between the new vertex $6$ and all vertices simultaneously, the algorithm groups them by c1-type as shown at the left of the figure.
    The c1-type configuration is updated incrementally through c-type updates from each group, with the overall weight computed as the product of weights from these updates (shown by the black arrows in the center).
\end{example}

Let $\mathfrak{B}_{\tau_i, \tau_j, d}(\vecu, \vecU)$ denote the set of all $d$-order $(\vecu, \vecU)$-updaters of $\tau_i$ on $\tau_j$.
We define $f_{\tau_i, \tau_j, d}(\vecu, \vecU)$ as the sum of the weights of $\mathcal{B}$ over $\mathfrak{B}_{\tau_i, \tau_j, d}(\vecu, \vecU)$, that is,
\begin{equation}\label{eq:f_definition}
    f_{\tau_i, \tau_j, d}(\vecu, \vecU) = \sum_{\mathcal{B} \in \mathfrak{B}_{\tau_i, \tau_j, d}(\vecu, \vecU)}\ \prod_{\pi\in\mathcal{B}} \typeweight{\pi}.
\end{equation}
Note that a c-type update only depends on the number of each 2-table in $\mathcal{B}$, rather than the actual $\mathcal{B}$.
This will help the efficient computation of $f_{\tau_i, \tau_j, d}(\vecu, \vecU)$, which we leave to \Cref{sec:computef}.

Suppose that $\vecbeta_1, \ldots, \vecbeta_D$ are all the c1-types in $\mathcal{T}$, and denote by $\vecbeta_i^1$ and $\vecbeta_i^2$ the first and second entries, i.e., the 1-type and the c-type, of $\vecbeta_i$ respectively.
Let $\mathfrak{U}_{\vecK}$ be the set of c-type updates $((\vecu_{\vecbeta_1}, \vecU_{\vecbeta_1}), \ldots, (\vecu_{\vecbeta_D}, \vecU_{\vecbeta_D}))$ such that the resulting c1-type configuration as in \cref{eq:computeK} is $\vecK$.
Then by the same argument (\cref{eq:modelweight,eq:weightprod}) as \incwfomc{}, we can rewrite $W_{\mu',\tau_l}(\vecK)$ into
\begin{equation}\label{eq:rewriteW2}
    W_{\mu',\tau_l}(\vecK) = \typeweight{\mu'} \cdot w_{\tau_l} \cdot F_{\vecK', \tau_l}(\vecK),
\end{equation}
where $F_{\vecK', \tau_l}(\vecK)$ is defined as
\begin{equation}\label{eq:FKl}
    F_{\vecK', \tau_l}(\vecK) = \sum_{((\vecu_{\vecbeta_1}, \vecU_{\vecbeta_1}), \ldots, (\vecu_{\vecbeta_D}, \vecU_{\vecbeta_D}))\in\mathfrak{U}_{\vecK}}\ \prod_{i\in[D]} f_{\tau_l, \vecbeta_i^1, \vecK'[\vecbeta_i]}(\vecu_{\vecbeta_i}, \vecU_{\vecbeta_i}).
\end{equation}

\subsection{\texorpdfstring{Fast Computation of $F_{\vecK', \tau_l}(\vecK)$}{Fast Computation of F vecK prime tau l vecK}}
\label{sec:computingF}

Note that the dimensions of $\vecu$ and $\vecU$ in any c-type update, as well as the number $D$, are all constants determined by the sentence $\sentence$ (recall that $D$ is the number of c1-types).
It is easy to check that the size of $\mathfrak{U}_\vecK$ is polynomial in $n$.
Thus one can directly compute $F_{\vecK', \tau_l}(\vecK)$ by \cref{eq:FKl}, which was the approach used in~\cite{wang2024lifted}.
Though theoretically tractable, directly computing $F_{\vecK', \tau_l}(\vecK)$ involves a large exponent on $n$ in the complexity, prohibiting the practical use in real applications.
\ouralgo{} rather uses a more efficient way again in an incremental manner.
We omit the subscripts $\vecK'$ and $\tau_l$ in $F_{\vecK', \tau_l}(\vecK)$ in this subsection as there is no confusion.

For any $t\in[D]$, let $G_t$ be the set of tuples $((\vecu^*, \vecK^*), W)$ such that there is a list of c-type updates $((\vecu_{\vecbeta_1}, \vecU_{\vecbeta_1}), \ldots, (\vecu_{\vecbeta_t}, \vecU_{\vecbeta_t}))$ such that $\vecu^* = \sum_{i\in[t]} \vecu_{\vecbeta_i}$ and for each $(\tau, \vecc)\in\mathcal{T}$,
\begin{equation*}
    \vecK^*[(\tau, \vecc)] = \sum_{j\in[t]: \vecbeta_j^1 = \tau} \update(\vecbeta_j^2, \vecU_{\vecbeta_j})[\vecc],
\end{equation*}
and $W$ is the summation of $\prod_{i\in[t]} f_{\tau_l, \vecbeta_i^1, \vecK'[\vecbeta_i]}(\vecu_{\vecbeta_i}, \vecU_{\vecbeta_i})$ over all such lists of c-type updates.
Intuitively, the sets $G_t$ contain the information of how the c1-type configuration $\vecK$ is updated by the partial c-type updates $((\vecu_{\vecbeta_1}, \vecU_{\vecbeta_1}), \ldots, (\vecu_{\vecbeta_t}, \vecU_{\vecbeta_t}))$, and the corresponding weight that the c-type updates contribute to the final value of $F(\vecK)$.
Aligned with $T_h$, we also write $G_t((\vecu^*, \vecK^*)) = W$.
Then the value of $F(\vecK)$ can be computed from $G_D$ as follows:
\begin{equation}\label{eq:compute_F}
    F(\vecK) = \sum_{\vecu^*\in\mathcal{C}, \vecK^*\in\nat^\dimensions:\\ \vecK^* + \unittensor^\dimensions_{(l, \expctype{\tau_l}{x}+\vecu^*)} = \vecK} G_D((\vecu^*, \vecK^*)),
\end{equation}
where $(l, \expctype{\tau_l}{x} + \vecu^*)$ is the c1-type of the new constant $h$ in $\mu$.

\begin{algorithm}[tbp]
    \DontPrintSemicolon
    \caption{\texorpdfstring{$F(\vecK', \tau_l)$}{ConfigWeight(K prime, tau l)}}\label{alg:ConfigWeight}
    Initialize $G$ as an empty dictionary with default value 0\;
    $G[(\boldsymbol{0}^M, \boldsymbol{0}^D)] \gets 1$\;
    \ForEach{$t \in [D]$\label{line:fort}}{
        Initialize $G_{new}$ as an empty dictionary with default value 0\;
        \ForEach{$((\vecu^*_{old}, \vecK^*_{old}), W_{old}) \in G$}{
            \ForEach{$((\vecu, \vecU), W) \in f(\tau_l, \vecbeta_t^1, \vecK'[\vecbeta_t])$}{
                \If{$\vecu^*_{old} + \vecu\in \mathcal{C}$}{
                    $\vecu^*_{new} \gets \vecu^*_{old} + \vecu$; $\vecK^*_{new} \gets$ \cref{eq:updateK}\;
                    $G_{new}[(\vecu^*_{new}, \vecK^*_{new})] \gets G_{new}[(\vecu^*_{new}, \vecK^*_{new})] + W_{old} \cdot W$\;
                }
                }
        }
        $G \gets G_{new}$\;
    }
    \Return \cref{eq:compute_F} with $G$\;
\end{algorithm}

The computation of $G_t$ is done incrementally as shown in \Cref{alg:ConfigWeight}, similar to computing $T_h$ in \Cref{sec:incwfomc}.
For $t = 0$, $G_0$ is initialized as $G_0 = \{((\boldsymbol{0}^M, \boldsymbol{0}^D), 1)\}$, where $\boldsymbol{0}^M$ and $\boldsymbol{0}^D$ are the zero vectors of appropriate dimensions.
For $t \in [D]$, we compute $G_t$ from $G_{t-1}$ by enumerating all possible $\vecK'[\vecbeta_t]$-order c-type updates $(\vecu, \vecU)$ of $\tau_l$ on $\vecbeta_t^1$.
According to the definition of $G_t$, for any $((\vecu^*_{old}, \vecK^*_{old}), W_{old}) \in G_{t-1}$, we can obtain a new tuple $((\vecu^*, \vecK^*), W)$ in $G_t$ by applying the c-type update $(\vecu, \vecU)$, where $\vecu^* = \vecu^*_{old} + \vecu$ and for each $(\tau, \vecc)\in\mathcal{T}$,
\begin{equation}
    \label{eq:updateK}
    \begin{aligned}
        \vecK^*[(\tau, \vecc)] = \begin{cases}
            \vecK^*_{old}[(\tau, \vecc)] + \update(\vecbeta_t^2, \vecU)[\vecc] & \text{if } \tau = \vecbeta_t^1,\\
            \vecK^*_{old}[(\tau, \vecc)] & \text{otherwise,}
        \end{cases}
    \end{aligned}
\end{equation}
and $W = W_{old} \cdot f_{\tau_l, \vecbeta_t^1, \vecK'[\vecbeta_t]}(\vecu, \vecU)$.
One example of this procedure for the 2-regular 2-colored graph example is illustrated in the center box of \Cref{fig:running_example}.

\subsection{\texorpdfstring{Computing $f_{\tau_i, \tau_j, d}(\vecu, \vecU)$}{Computing f tau i tau j d vecu vecU}}\label{sec:computef}

The remaining part is to compute $f_{\tau_i, \tau_j, d}(\vecu, \vecU)$ for any $d$-order c-type update $(\vecu, \vecU)$ of $\tau_i$ on $\tau_j$.
From \Cref{def:c-type-update}, we know that a $d$-order c-type update of $\tau_i$ on $\tau_j$ is induced by $\expctype{\pi}{x}$ and $\expctype{\pi}{y}$ for a d-tuple of 2-tables $\pi$.
Thus, we group the 2-tables in $\Pi_{\tau_i, \tau_j}$ by their c-type increments on both sides, denoted by
\begin{equation*}
    \Pi_{\tau_i,\tau_j,\vect,\vect'} = \{\pi\in\Pi_{\tau_i,\tau_j}: \expctype{\pi}{x} = \vect, \expctype{\pi}{y} = \vect'\},
\end{equation*}
and define $r_{i,j,\vect,\vect'} = \sum_{\pi\in\Pi_{\tau_i,\tau_j,\vect,\vect'}} \typeweight{\pi}$.
Then the value of $f_{\tau_i, \tau_j, d}(\vecu, \vecU)$ can be computed by aggregating weights over combinations of c-type increments $(\vect, \vect')$ that yield the given c-type update $(\vecu, \vecU)$.
Further observe that $f_{\tau_i, \tau_j, d}(\vecu, \vecU)$ can be obtained from $f_{\tau_i, \tau_j, d-1}(\cdot, \cdot)$ by adding one more 2-table.
This suggests a dynamic programming approach to compute $f_{\tau_i, \tau_j, d}(\vecu, \vecU)$ as we did for $F_{\vecK', \tau_l}(\vecK)$.

\begin{algorithm}[tbp]
\caption{$f(\tau_i, \tau_j, d)$}
\DontPrintSemicolon\label{alg:UpdateWeight}
    Initialize $H$ as an empty dictionary with default value 0\;
    $H[(\boldsymbol{0}^M, \boldsymbol{0}^{2^M})] \gets 1$\;
    \ForEach{$s \in [d]$}{
        Initialize $H_{new}$ as an empty dictionary with default value 0\;
        \ForEach{$((\vecu_{old}, \vecU_{old}), W_{old}) \in H$}{
            \ForEach{$\vect, \vect' \in \{0, 1\}^M$}{
                \If{$\vecu_{old} + \vect \in \mathcal{C}$}{
                    $\vecu_{new} \gets \vecu_{old} + \vect$; $\vecU_{new} \gets \vecU_{old} + \unittensor^{2^M}_{\vect'}$\;
                    $H_{new}[(\vecu_{new}, \vecU_{new})] \gets H_{new}[(\vecu_{new}, \vecU_{new})] + W_{old} \cdot r_{i,j}^{\vect, \vect'}$\;
                }
            }
        }
        $H \gets H_{new}$\;
    }
    \Return $H$\;
\end{algorithm}

The detailed procedure is shown in \Cref{alg:UpdateWeight}, where we denote the dimension $\overbrace{2\times 2\times\dots \times 2}^M$ by $2^M$ for simplicity.
Let $H_s$ be the set of tuples $((\vecu, \vecU), W)$ such that there is a $s$-order c-type update $(\vecu, \vecU)$ of $\tau_i$ on $\tau_j$ with weight $W = f_{\tau_i, \tau_j, s}(\vecu, \vecU)$.
For the base case $s = 0$, we have $H_0[(\boldsymbol{0}^M, \boldsymbol{0}^{2^M})] = 1$.
When $s > 0$, we compute $H_s$ from the tuples $((\vecu_{old}, \vecU_{old}), W_{old})$ in $H_{s-1}$ by enumerating all possible $\vect, \vect' \in \{0, 1\}^M$, and updating the value of $H_s[(\vecu, \vecU)]$ such that $\vecu_{old} + \vect = \vecu$ and $\vecU_{old} + \unittensor^{2^M}_{\vect'} = \vecU$ by adding $W_{old} \cdot r_{i,j}^{\vect, \vect'}$.
Finally, we return $H_d$, which contains the values of $f_{\tau_i, \tau_j, d}(\vecu, \vecU)$ for all possible c-type updates $(\vecu, \vecU)$.



\subsection{Complexity Analysis and Practical Optimizations}\label{sec:complexity_optimization}

Since the sentence $\sentence$ is fixed, 
the numbers of 1-types, c-types, and c1-type dimensions are all independent of the domain size $n$. 
As a result, the dynamic programs in \ouralgo{} only need to maintain polynomially many configurations in $n$.

\subsubsection{Complexity Analysis}\label{sec:complexity_analysis}

\begin{theorem}\label{thm:complexity}
    Given a sentence $\sentence$ in the normal form as in \cref{eq:sctwo}, \ouralgo{} computes $\symwfomc(\sentence, n, \weight, \negweight)$ in time $O(n^{2\cdot D + 2^M - 2})$, where $D = p \cdot \prod_{i=1}^{M}(k_i + 1)$.
\end{theorem}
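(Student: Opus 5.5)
The plan is to bound separately (i) the number of entries stored in each dynamic-programming table and (ii) the work per entry, then multiply through the three nested loops. Throughout, $p$, $M$, the $k_i$ and hence $D$ are constants fixed by $\sentence$. Arithmetic operations and dictionary accesses are taken to cost $O(1)$. The basic counting fact I will use repeatedly is this: the number of tensors in $\nat^{m}$ whose entries sum to a prescribed value $s\le n$ is $\binom{s+m-1}{m-1}=O(n^{m-1})$.

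\textbf{Precomputation of $f$.} Run \Cref{alg:UpdateWeight} once per ordered pair $(\tau_i,\tau_j)$ up to $d=n$, and keep every intermediate $H_s$; together these give $f_{\tau_i,\tau_j,s}$ for all $s\le n$.
\begin{itemize}
\item In $H_s$, the vector $\vecu$ ranges over the constant-size set $\mathcal{C}$.
\item $\vecU\in\nat^{2^M}$ satisfies $|\vecU|=s$, so $H_s$ has $O(n^{2^M-1})$ entries.
\item Each entry spawns $4^M=O(1)$ successors.
\end{itemize}
The whole precomputation therefore costs $O(n\cdot n^{2^M-1})=O(n^{2^M})$. This is dominated by the target bound because $D\ge 1$.

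\textbf{Cost of one call $F(\vecK',\tau_l)$ (\Cref{alg:ConfigWeight}).} This is the step needing the most care. A naive count treats $\vecK^*$ as an arbitrary tensor in $\nat^D$ with norm at most $n$, giving $O(n^D)$ states, which loses a factor of $n$.
\begin{itemize}
\item The fix: after processing $\vecbeta_1,\dots,\vecbeta_t$, the tensor $\vecK^*$ has norm exactly $\sum_{j\le t}\vecK'[\vecbeta_j]$. This holds because $\update$ preserves the total count $|\vecU|=\vecK'[\vecbeta_j]$.
\item So $\vecK^*$ lies on a simplex of fixed total, and $G_t$ has $O(|\mathcal{C}|\cdot n^{D-1})=O(n^{D-1})$ entries.
\item Each entry of $G_{t-1}$ is combined with every entry of $f(\tau_l,\vecbeta_t^1,\vecK'[\vecbeta_t])$. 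By the count above there are $O(n^{2^M-1})$ of these.
\item Each combination costs $O(1)$, since $\update$ touches only $2^M$ fixed cells.
\end{itemize}
With $D$ iterations over $t$, one call costs $O(n^{D-1}\cdot n^{2^M-1})=O(n^{D+2^M-2})$. The final summation \cref{eq:compute_F} is linear in $|G_D|$ and is absorbed.

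\textbf{Outer loop (\Cref{alg:ouralgo}).}
\begin{itemize}
\item For each $h\le n$, the table $T$ holds configurations with $|\vecK'|=h-1$, so it has $O(n^{D-1})$ entries.
\item For each entry and each of the $p$ choices of $l$, we make one call to $F$, whose output has at most $O(n^{D-1})$ pairs and is accumulated at unit cost per pair.
\end{itemize}
Hence each $h$ costs $O(n^{D-1}\cdot n^{D+2^M-2})$. Summing over $h\le n$ gives $O(n^{2D+2^M-2})$. The final evaluation of \cref{eq:finalwfomc} is a sum over $O(n^{D-1})$ configurations, each checked against \cref{eq:countingcondition1,eq:countingcondition2} in constant time.

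Adding the three contributions yields the claimed $O(n^{2\cdot D+2^M-2})$. Correctness of the returned value follows from \cref{eq:incwfomc2,eq:FKl,eq:compute_F} and is not the issue here. The main obstacle is the fixed-norm refinement for $G_t$ (and the analogous one for $H_s$), which is what makes the exponent exactly $2D+2^M-2$ rather than one larger.
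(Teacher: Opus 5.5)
Your decomposition is the paper's: bound the three tables $H_s$, $G_t$, $T_h$ by $O(n^{2^M-1})$, $O(n^{D-1})$ and $O(n^{D-1})$ entries, then multiply through the nested loops. Your explicit precomputation of all $f_{\tau_i,\tau_j,s}$ for $s\le n$, in $O(n^{2^M})$ time, is a sound way to account for the paper's caching of $f$.

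The problem is the justification of the step you yourself call the crux: $\update$ does \emph{not} preserve $|\vecU|$. By \cref{eq:updateconf}, $\update(\vecc_b,\vecU)$ has entries only at $\vecc\in\mathcal{C}$. So for every $\vecsig$ with $\vecc_b+\vecsig\notin\mathcal{C}$ (an existing constant pushed past some $k_i$), the mass $\vecU[\vecsig]$ is silently discarded. Nothing in \Cref{alg:UpdateWeight} rules such $\vecU$ out: it only checks $\vecu$, and it never sees $\vecc_b$. Hence you only get $|\vecK^*|\le\sum_{j\le t}\vecK'[\vecbeta_j]$. Your simplex count then degrades to exactly the $O(n^{D})$ bound you wanted to avoid.

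This is not merely a presentational gap. Take a single 1-type, $M=2$, $k_1=k_2=1$ (so $D=4$), unconstrained 2-tables, and all four c-type groups of size $\Theta(n)$. If the group of c-type $(1,1)$ is processed first and that of c-type $(0,0)$ second, then $|G_2|=\Theta(n^4)=\Theta(n^D)$. The third step already costs $\Theta(n^{D+2^M-1})$, one factor of $n$ above your per-call bound.

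The repair is to prune. In \Cref{alg:ConfigWeight}, skip every update $(\vecu,\vecU)$ with $\vecU[\vecsig]>0$ for some $\vecsig$ such that $\vecbeta_t^2+\vecsig\notin\mathcal{C}$; this is the paper's ``Filter Invalid C-type Updates'' step. It does not change the output: any dropped unit forces the final configuration to satisfy $|\vecK|\le h-1$, and \Cref{alg:ouralgo} keeps only $|\vecK|=h$. With the filter, the norm of $\vecK^*$ is exactly $\sum_{j\le t}\vecK'[\vecbeta_j]$, so $|G_t|=O(n^{D-1})$, and the rest of your argument goes through unchanged.

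The paper's own proof simply asserts $|G_t|\le D\cdot n^{D-1}$, so it tacitly relies on the same filtering. Once you add it, your version is the more complete of the two.
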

\begin{proof}
    We first show that the time complexity of computing $f_{\tau_i, \tau_j, d}(\vecu, \vecU)$ is $O(n^{2^M - 1})$: The number of all possible $(\vecu, \vecU)$ is at most $D\cdot n^{2^M - 1}$, and each for-loop at \Cref{alg:UpdateWeight} iterates over $O(n^{2^M - 1})$ entries in $H$ and $O(1)$ entries in $r$.

    Then we consider the time complexity of computing $F_{\vecK', \tau_l}(\vecK)$.
    The number of all possible $(\vecu^*, \vecK^*)$ in $G_t$ is at most $D\cdot n^{D-1}$ for each $t\in[D]$.
    In each for-loop at \Cref{line:fort} in \Cref{alg:ConfigWeight}, we enumerate all possible $D\cdot n^{D-1}$ tuples in $G_{t-1}$ and $O(n^{2^M - 1})$ entries from $f$, resulting in totally $O(n^{D + 2^M - 2})$ iterations for the whole for-loop.

    Finally, there are totally $O(n^{D-1})$ c1-type configurations in $T_h$ for each $h\in[n]$, and each for-loop at \Cref{line:forh} in \Cref{alg:ouralgo} iterates over $O(n^{D-1})$ entries in $T$, $p$ 1-types and $O(n^{D + 2^M - 2})$ entries from $F$, resulting in totally $O(n^{2\cdot D + 2^M - 2})$ iterations for the whole algorithm.
\end{proof}

\begin{remark}
    Compared to the previous best algorithm~\cite{KR2024-64} with time complexity $O(n^{1 + 3M + p\prod_{i=1}^{M}(k_i^2 + 2k_i + 1)})$\footnote{The original paper analyzed the complexity for general \ctwo{} sentences, where the authors also used the same technique to reduce the input sentence to the normal form as in~\cref{eq:sctwo}. Therefore, the complexity presented for the normal form is easily derived from the original one.}, \ouralgo{} improves the exponent on $n$ from $O(p\cdot \prod_{i=1}^{M}(k_i^2 + 2k_i + 1))$ to $O(2\cdot p\cdot \prod_{i=1}^{M}(k_i + 1) + 2^M - 2)$.
    This is a significant improvement even for small $k_i$'s.
\end{remark}

\begin{remark}\label{remark:lessthan4}
    Involving $\exists^{\leq k}$ quantifiers in the input sentence of \ouralgo{} does not increase the time complexity, as the computation of $T_h$ remains the same.
\end{remark}

\subsubsection{Practical Optimizations}
\label{sec:optimization}
When implementing \ouralgo{}, we further utilize several optimizations to speed up the computation in practice.

\paragraph{Top-Down Compute $T_n$ with Memoization}
Instead of computing $T_n$ in a bottom-up manner as in \Cref{alg:ouralgo}, we can compute $T_n(\vecK)$ in a top-down manner with memoization.
By \cref{eq:finalwfomc}, only $\vecK\in\mathcal{K}$ contributes $\symwfomc(\sentence, n, \weight, \negweight)$.
Thus, we can only compute $T_n(\vecK)$ for each $\vecK\in\mathcal{K}$ to avoid unnecessary computations.
This is particularly useful when $\mathcal{K}$ is much smaller than $n^{D-1}$, which is often the case in practice.

\paragraph{Filter Invalid C-type Updates}
Incrementally computing $F_{\vecK', \tau_l}(\vecK)$ involves applying many c-type updates.
However, not all c-type updates are valid for a given $\vecK'$.
From \cref{eq:updateconf}, only $\sum_{\vecc\in\mathcal{C}} \update(\vecc_b, \vecU_b)[\vecc] = b$ (i.e., $\vecU[\vecsig] > 0$ only if $\vecc_b + \vecsig\in\mathcal{C}$) can result in a valid c1-type configuration.
Thus, when enumerating all possible c-type updates in \Cref{alg:ConfigWeight}, we can filter out those invalid ones to reduce the number of iterations.

\paragraph{Precompute $r_{i,j,\vect,\vect'}$ and Cache $f_{\tau_i, \tau_j, d}$ Results}
When computing $f_{\tau_i, \tau_j, d}(\vecu, \vecU)$ in \Cref{alg:UpdateWeight},
we need to enumerate all possible $\vect, \vect' \in \{0, 1\}^M$ and access the corresponding $r_{i,j, \vect, \vect'}$.
However, most pairs of $(\vect, \vect')$ do not correspond to any 2-table in $\Pi_{\tau_i, \tau_j}$, leading to $r_{i,j, \vect, \vect'} = 0$.
To avoid unnecessary computations, we precompute all non-zero $r_{i,j, \vect, \vect'}$ and store them in a dictionary.
Furthermore, since the result of $f_{\tau_i, \tau_j, d}(\vecu, \vecU)$ only depends on $\tau_i$, $\tau_j$, and $d$, we cache the results of $f_{\tau_i, \tau_j, d}$ calls in a global dictionary.
When $f(\tau_i, \tau_j, d)$ is invoked, we first find the largest $d' \leq d$ such that $f_{\tau_i, \tau_j, d'}$ is already cached, and then compute $f_{\tau_i, \tau_j, d}$ incrementally from $f_{\tau_i, \tau_j, d'}$.

\section{Extension to Modulo Counting Quantifiers} \label{sec:modulo_counting} 
In this section, we extend \ouralgo{} to the fragment of \ctwomod{}, which supports the modulo counting quantifiers $\exists^{=r, k}, \exists^{\geq r, k}$, and $\exists^{\leq r, k}$.
The normal form we work with is
\begin{equation}\label{eq:sctwo_mod}
    \sentence = \Psi \land \bigwedge_{i\in[M']} \left(\forall x\exists^{=r'_i, k'_i} y: R'_i(x,y) \right)\land\bigwedge_{j\in[N']} \left(\exists^{=v'_j, u'_j} x: U'_j(x)\right),
\end{equation}
where $\Psi$ is a sentence in \sctwo{} (i.e., \cref{eq:sctwo}), $R'_i$ are binary predicates, $U'_j$ are unary predicates, $k'_i,u'_j\ge 1$, and
$r'_i\in\{0,\ldots,k'_i-1\}$, $v'_j\in\{0,\ldots,u'_j-1\}$.
Following a similar reduction as in \Cref{sec:reductions}, any \ctwomod{} sentence can be transformed into the normal form as in \cref{eq:sctwo_mod}, while preserving the data complexity of the WFOMC problem.
The details are provided in \Cref{app:modk_reduction}.

To extend the c-types defined in \cref{sec:countingtype}, let $M_{\mathrm{mod}}=M+M'$. 
We denote the set of all \emph{modulo c-types} of $\sentence$ by $\mathcal{C}_{\mathrm{mod}}$:
\begin{align}\label{eq:modulo_ctypes}
\mathcal{C}_{\mathrm{mod}} = 
    \left\{
        \vecc\in \mathbb{N}^{M_{\mathrm{mod}}}:
        \begin{array}{l} 
            \vecc[i]\in\{0,1,\ldots,k_i\} \text{ for each } i\in[M],\\
            \vecc[M+i]\in\{0,1,\ldots,k'_i - 1\} \text{ for each } i\in[M']
        \end{array}
    \right\},
\end{align}
and the set of all \emph{modulo c1-types} is $\mathcal{T}_{\mathrm{mod}}=\left\{(\tau_i,\vecc):i\in[p],\ \vecc\in\mathcal{C}_{\mathrm{mod}}\right\}$.
The first $M$ dimensions of the c-types correspond to the counting quantified formulas $\forall x \exists^{=k_i}y:R_i(x,y)$ in $\Psi$, while the remaining $M'$ dimensions correspond to the modulo counting formulas $\forall x\exists^{=r'_i, k'_i} y: R'_i(x,y)$ in \cref{eq:sctwo_mod}.

We define modulo c1-type configurations analogously to the c1-type configurations in \Cref{def:c1typeconfig}.
For an interpretation $\mathcal{I}$, its modulo \emph{c1-type configuration} is a non-negative integer tensor $\vecK_{\textrm{mod}}\in\mathbb{N}^{D_{\mathrm{mod}}}$ given by $\vecK_{\textrm{mod}}[(i,\mathbf{c})] = |\mathcal{I}\langle \tau_i,\mathbf{c}\rangle|$ for each $i\in[p]$ and $\mathbf{c}\in\mathcal{C}_{\mathrm{mod}}$, where $D_{\mathrm{mod}} = p\cdot \prod_{i=1}^{M}(k_i+1) \cdot \prod_{i=1}^{M'}k_i'$.

Now, we extend \Cref{alg:ouralgo} to compute $\symwfomc(\sentence, n, \weight, \negweight)$ for a sentence $\sentence$ in the normal form as in \cref{eq:sctwo_mod}.
Define $T_h(\vecK_{\textrm{mod}})$ as the total weight of all interpretations $\mathcal{I}$ over the domain $[h]$ such that $\mathcal{I}\langle \tau_i, \mathbf{c}\rangle = \vecK_{\textrm{mod}}[(i,\mathbf{c})]$ for each $i\in[p]$ and $\mathbf{c}\in\mathcal{C}_{\mathrm{mod}}$.
Then we can compute
\begin{equation}
    \symwfomc(\sentence, n, \weight, \negweight) = \sum_{\substack{\vecK_{\textrm{mod}}\in\nat^{\dimensions_{\textrm{mod}}}: |\vecK_{\textrm{mod}}| = n, \vecK_{\textrm{mod}}\in\mathcal{K}_{\textrm{mod}}}} T_n(\vecK_{\textrm{mod}}),
\end{equation}
where $\mathcal{K}_{\textrm{mod}}$ is the set of all c1-type configurations that satisfy the counting conditions for the modulo counting quantifiers that is defined similar to \cref{eq:countingcondition1} and \cref{eq:countingcondition2} as described below.
For $\exists^{=v'_j, u'_j} x: U'_j(x)$, the condition in~\cref{eq:countingcondition1} is extended to
\begin{equation}\label{eq:countingcondition1_mod}
    \sum_{i\in[p]} \indicator{U'_j(x)\in\tau_i}\cdot \vecK_{\textrm{mod}}[(i, \cdot)] = v'_j \pmod{u'_j}.
\end{equation}
Similarly, for the modulo counting quantified formulas $\forall x\exists^{=r'_i, k'_i} y: R'_i(x,y)$, the condition in \cref{eq:countingcondition2} is extended to
\begin{equation}\label{eq:countingcondition2_mod}
    \sum_{i\in[p]} \vecK_{\textrm{mod}}[i,k_1,\ldots,k_M,r'_1,\ldots,r'_{M'}] = n.
\end{equation}

It remains to adapt the c-type update used in \cref{sec:c_type_update} such that it can handle the modulo counting quantifiers.
Given $\vecc \in \mathcal{C}_{\mathrm{mod}}$ and $\vecsig \in\{0,1\}^{\Mmod}$, we defined $\vecc+_{\bmod}\vecsig$ as
\begin{equation}\label{eq:modulo_update}
\begin{aligned}
(\vecc+_{\mathrm{mod}}\boldsymbol{\sigma})[i] &=
\vecc[i]+\boldsymbol{\sigma}[i], && \forall i\in[M],\\
(\vecc+_{\mathrm{mod}}\boldsymbol{\sigma})[M+i] &= 
\big(\vecc[M+i]+\boldsymbol{\sigma}[M+i]\big) \pmod{k'_i}, && \forall i\in[M'].
\end{aligned}
\end{equation}
Then, the definition of $\update$ in \cref{eq:updateconf} is modified to
\begin{equation}\label{eq:updateconf_mod}
    \update(\vecc_b, \vecU)[\vecc] = \sum_{\vecsig\in \{0, 1\}^{M_{\bmod}}: \vecc_b +_{\textrm{mod}} \vecsig = \vecc} \vecU[\vecsig], \quad \forall \vecc\in\mathcal{C}_{\bmod}.
\end{equation}
The rest of the algorithm remains unchanged. Clearly, the modifications above do not change the overall complexity of \ouralgo{}, which thus proves the domain-liftability of \ctwomod{}.

\begin{theorem}\label{thm:modulo_complexity}
    The fragment \ctwomod{} is domain-liftable using \ouralgo{}.
    Specifically, for a \ctwomod{} sentence $\sentence$ in the normal form of \cref{eq:sctwo_mod}, the modified \ouralgo{} computes $\symwfomc(\sentence, n, \weight, \negweight)$ in time $O(n^{2 \cdot D_{\textrm{mod}} + 2^{\Mmod} -2})$, where $D_{\textrm{mod}} = p \cdot \prod_{i=1}^{M}(k_i+1) \cdot \prod_{j=1}^{M'} k'_j$.
\end{theorem}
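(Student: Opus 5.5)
The proof has two parts: correctness of the modified \ouralgo{} on the normal form \cref{eq:sctwo_mod}, and a running-time bound obtained by redoing the analysis of \Cref{thm:complexity} with the enlarged state space. Domain-liftability of \ctwomod{} then follows from the transformation of \Cref{app:modk_reduction}. That transformation maps an arbitrary \ctwomod{} sentence to the normal form with only polynomially many calls (in $n$), and it is modular in the sense of \Cref{sec:reductions}.

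\emph{Correctness.} I will mirror the argument for \sctwo{}, replacing c-types by modulo c-types. The c-type of an element in a model is now defined with the last $M'$ coordinates read modulo $k'_i$: an element $a$ has $\vecc[M+i]=\remaind_{k'_i}(|\{b: R'_i(a,b)\}|)$. The first $M$ coordinates are unchanged; they count $R_i$-neighbours, capped at $k_i$, and are pruned whenever $\vecu^*\notin\mathcal{C}$.

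The key invariant is that adding a new element $h$ with 2-tables $\pi(a,h)$ changes each coordinate of $a$'s c-type by $\expctype{\pi}{y}$ under $+_{\mathrm{mod}}$. This holds because $\remaind_{k}(m+\sigma)=\remaind_k(\remaind_k(m)+\sigma)$. By the same reasoning, the c-type of $h$ is $\expctype{\tau_l}{x}+_{\mathrm{mod}}\sum \vecu$, where the sum is taken coordinatewise modulo $k'_i$ on the modulo coordinates. In \Cref{alg:ConfigWeight} and \Cref{alg:UpdateWeight}, the accumulation of $\vecu$ should therefore also reduce those coordinates modulo $k'_i$, and I will state this explicitly.

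With the invariant in hand, the partition argument of \cref{eq:incwfomcmodel2}, the product formula \cref{eq:rewriteW2}, and the definitions of $G_t$ and $H_s$ carry over verbatim. The reason is that the only step using the semantics of c-types is \cref{eq:computeK}, which now holds with the modified $\update$ from \cref{eq:updateconf_mod}.

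Finally, a model of $\Psi$ satisfies all $\forall x\exists^{=r'_i,k'_i}y\colon R'_i(x,y)$ iff every element has modulo c-type $(k_1,\ldots,k_M,r'_1,\ldots,r'_{M'})$, which is \cref{eq:countingcondition2_mod}. The unary modulo quantifiers depend only on the 1-type configuration, giving \cref{eq:countingcondition1_mod}. Summing $T_n$ over $\mathcal{K}_{\mathrm{mod}}$ then yields the WFOMC. The main care point is checking that the modular wrap-around does not break the bijection between extensions and tuples of updaters. I expect it does not, since updaters are defined purely from the 2-tables $\mathcal{B}$ and the map to resulting configurations is just a function of them.

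\emph{Complexity.} Here I redo the counting of \Cref{thm:complexity}. For $H$, a pair $(\vecu,\vecU)$ has $\vecu$ ranging over a set of constant size and $\vecU$ a vector of $2^{\Mmod}$ nonnegative entries summing to $d\le n$. This gives $O(n^{2^{\Mmod}-1})$ entries, with $O(1)$ work per entry per step.

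For $G_t$, the pair $(\vecu^*,\vecK^*)$ has $\vecu^*\in\mathcal{C}_{\mathrm{mod}}$ of constant size and $\vecK^*$ of total mass at most $n$ over $D_{\mathrm{mod}}$ cells. This gives $O(n^{D_{\mathrm{mod}}-1})$ entries. Multiplying by the number of updates yields $O(n^{D_{\mathrm{mod}}+2^{\Mmod}-2})$ per call.

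The outer loop visits $n$ values of $h$ and $O(n^{D_{\mathrm{mod}}-1})$ configurations. The total is $O(n^{2D_{\mathrm{mod}}+2^{\Mmod}-2})$, absorbing the factor $n$ as in \Cref{thm:complexity}. All exponents depend only on the fixed sentence, so the algorithm runs in polynomial time.
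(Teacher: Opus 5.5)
Your proposal is correct and follows essentially the same route as the paper. The paper's proof just combines the $+_{\mathrm{mod}}$ version of $\update$, the modulo acceptance conditions, the normal-form reduction of \Cref{app:modk_reduction}, and the unchanged counting from \Cref{thm:complexity}. Your requirement that the accumulated $\vecu$ (in \Cref{alg:ConfigWeight,alg:UpdateWeight} and in the c-type of $h$) also be reduced modulo $k'_i$ supplies a detail the paper's ``rest of the algorithm remains unchanged'' leaves implicit: a literal test $\vecu_{old}+\vect\in\mathcal{C}_{\mathrm{mod}}$ would wrongly discard extensions where $h$ has at least $k'_i$ $R'_i$-neighbours. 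One small slip: ``mass at most $n$'' gives $O(n^{D_{\mathrm{mod}}})$ states for $G_t$, so the $O(n^{D_{\mathrm{mod}}-1})$ count needs the mass to be fixed, e.g.\ by discarding updates that push a non-modular coordinate past $k_i$ (as the paper's filtering does).
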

\begin{proof}
    The proof directly follows from the discussion above (with the normal form transformation presented in \Cref{app:modk_reduction}) and the complexity analysis in \Cref{sec:complexity_analysis}.
\end{proof}
\begin{remark}
    The extension of \ouralgo{} also applies to the normal form of \ctwomod{} sentences that include $\exists^{\leq r, k}$ quantifiers without changing the complexity.
    The required modifications are similar to those for the ordinary $\exists^{\leq r}$ quantifiers as mentioned in \Cref{remark:lessthan1,remark:lessthan2,remark:lessthan3}.
\end{remark}

\section{Experiments}\label{sec:experiments}

We empirically evaluate \ouralgo{} against lifted 
and propositional baselines on a diverse set of benchmark families spanning both \ctwo{} and \ctwomod{} sentences.

\subsection{Experimental Setup}\label{sec:setup}

\paragraph{Baselines.}
We compare \ouralgo{} against different state-of-the-art baselines depending on the fragment being evaluated:
\begin{itemize}
    \item \textbf{\ctwo{} Baselines:} 
    \begin{itemize}
        \item \Fast{}: The current state-of-the-art reduction-based approach implemented using the encoding framework of~\cite{KR2024-64} on top of the lifted counting procedure of~\cite{kuzelka2021weighted}. We re-implemented \Fast{} by ourselves based on the description in the original paper, as the public implementation by~\cite{KR2024-64} is written in Julia, and thus not directly comparable with our Python implementation of \ouralgo{}.
        \item \Recursive{}: A domain-recursion-based approach from~\cite{Meng2024}. We include this baseline because it is the best-known domain-recursion-based method for \ctwo{} with a linear-order axiom. We use their public implementation.\footnote{\url{https://github.com/yuanhong-wang/WFOMC}}
    \end{itemize}

    \item \textbf{\ctwomod{} Baselines:} Since neither \Fast{} nor \Recursive{} supports modulo counting, we compare against two propositional model counters applied after grounding:
    \begin{itemize}
        \item \Ganak{}~\citep{sharma2019ganak}: A state-of-the-art exact propositional model counter.\footnote{\url{https://github.com/meelgroup/ganak}}
        \item \ApproxMC{}~\citep{ChakrabortyMV13, PoteM025}: A state-of-the-art approximate propositional model counter.\footnote{\url{https://github.com/meelgroup/approxmc}}
    \end{itemize}
\end{itemize}

\paragraph{Benchmarks.}

We select benchmarks that vary in the size of the c1-type space, since this is a key factor influencing the complexity of the algorithms.
The characteristics of the selected benchmarks are summarized in \Cref{tab:benchmarks}.

\begin{table}[htbp]
    \centering
    \caption{Summary of Benchmark Characteristics.}
    \label{tab:benchmarks}
    \begin{tabular}{@{}llcl@{}}
    \toprule
    Benchmark & Fragment & No. of c1-types & Description \\ \midrule
    $k$-regular graphs & \ctwo{} & $k+1$ & Undirected graphs where each vertex has degree $k$ \\
    $k$-regular $l$-colored & \ctwo{} & $l\cdot (k+1)$ & $k$-regular graphs with an $l$-vertex-coloring \\
    $k$-regular digraphs & \ctwo{} & $(k+1)^2$ & Directed graphs where each vertex has out-degree and in-degree $k$ \\
    BA graphs & \ctwo{} & $8$ & Barab\'asi--Albert models\\
    $r$-mod-$k$ regular & \ctwomod{} & $k$ & Undirected graphs where each vertex degree is $r \pmod{k}$ \\
    $m$-odd-degree & \ctwomod{} & $4$ & Undirected graphs with exactly $m$-odd-degree vertices \\ \bottomrule
    \end{tabular}
\end{table}

\paragraph{Evaluation Metrics.}
We report the wall-clock runtime as the primary performance metric. We also evaluate correctness (see \Cref{app:correctness}) and memory usage (see \Cref{app:memory_results}).

\paragraph{Evaluation Protocol.}

For each benchmark, we measure the scaling behavior of the methods by varying the domain size $n$.
Unless stated otherwise, each benchmark instance is executed once.
For $\Fast{}$ and $\Recursive{}$, there is no additional parameter tuning, and we use the default solver invocation in our evaluation pipeline.
For $\Ganak{}$, we also use the default solver invocation, again without additional parameter tuning.
For $\ApproxMC{}$, we use the default configuration with $\epsilon = 0.05$ and $\delta = 0.1$, using a fixed random seed of $42$.
A timeout of $10^4$ seconds is imposed for each run.
Timeouts and failed runs are excluded from the plotted runtime curves.
No explicit memory limit is imposed during evaluation.
Since the reported runtimes are based on single runs, small differences between methods should be interpreted with caution, whereas substantial and consistent gaps across benchmark families remain indicative of the overall scalability trend.

\paragraph{Implementation and Hardware.}
Our implementation of \ouralgo{} is written in Python~3.9 and 
incorporates the algorithm-specific optimizations described in \Cref{sec:optimization}.
We do not include a separate ablation study; the reported results, therefore, reflect the end-to-end performance of the complete implementation.
All experiments were conducted on an Ubuntu server equipped with $4$ Intel Xeon Gold 5218 CPUs and 512~GB of RAM.

\subsection{\ctwo{} Benchmarks}\label{sec:c2_benchmarks}
We evaluate \ouralgo{} on four benchmark families of graph-counting tasks expressible in $\ctwo{}$ (possibly with a linear order axiom), following the prior work~\cite{KR2024-64}.

\subsubsection{$K$-Regular Graphs and Variants}

We first consider the problem of counting $k$-regular graphs and their variants.
\begin{itemize}
    \item \textbf{$k$-regular graphs:}
    This is the basic benchmark in $\ctwo{}$, encoding the problem of counting undirected graphs where each vertex has degree $k$.
\begin{align*}
    \sentence_{\text{$k$-regular}} = \left(\forall x: \neg E(x,x) \right) \land  \left(\forall x \forall y: E(x,y) \to E(y,x) \right)\land \left(\forall x \exists^{=k} y: E(x,y) \right).
\end{align*}

\item \textbf{$k$-regular $l$-colored graphs:}
This benchmark extends the $k$-regular graphs task with assignments of $l$ colors to vertices, such that adjacent vertices must have different colors. 
This extension increases the number of c1-types by a factor of $l$.
The formula below illustrates the special case of $k$-regular $2$-colored graphs:
\begin{align*}
    \sentence_{\text{$k$-regular $2$-colored}} = &\sentence_{\text{$k$-regular}} \land \left(\forall x: R(x) \lor B(x)\right) \land \left(\forall x: \neg R(x) \lor \neg B(x) \right) \land \\
    &\forall x \forall y: E(x,y) \to \left(\neg (R(x) \land R(y)) \land \neg (B(x) \land B(y))\right).
\end{align*}

\item \textbf{$k$-regular digraphs:}
This benchmark removes the undirectedness constraint from the $k$-regular graphs, and instead requires that each vertex has out-degree $k$ and in-degree $k$. 
Note the number of c1-types increases to $(k+1)^2$ since we need to track both out-degree and in-degree counts for each vertex.
\begin{align*}
    \sentence_{\text{digraph}} = \left(\forall x: \neg E(x,x) \right)\land \left(\forall x \exists^{=k} y: E(x,y) \right) \land \left(\forall x \exists^{=k} y: E(y,x) \right).
\end{align*}
\end{itemize}

The runtime results for these benchmarks are shown in \Cref{fig:time_kregular,fig:k_regular_l_colored,fig:k_regular_digraphs}, and are discussed in detail below.
Correctness results and peak-memory measurements are reported separately in the \Cref{app:correctness} and \Cref{app:memory_results}.

\paragraph{$k$-regular graphs. } 
The results are shown in \Cref{fig:time_kregular}. 
Across all tested settings ($k=3,4,5$), \ouralgo{} is consistently faster than both \Fast{} and \Recursive{}, 
and the gap grows as either $n$ or $k$ increases. 
In the $5$-regular setting (\Cref{fig:time_5_regular}), 
\Recursive{} does not scale beyond domain size $35$ and \Fast{} beyond $44$, 
whereas \ouralgo{} continues to substantially larger domains.

\begin{figure}[htbp]
    \centering
    \begin{subfigure}[b]{0.32\textwidth}
        \includegraphics[width=\linewidth]{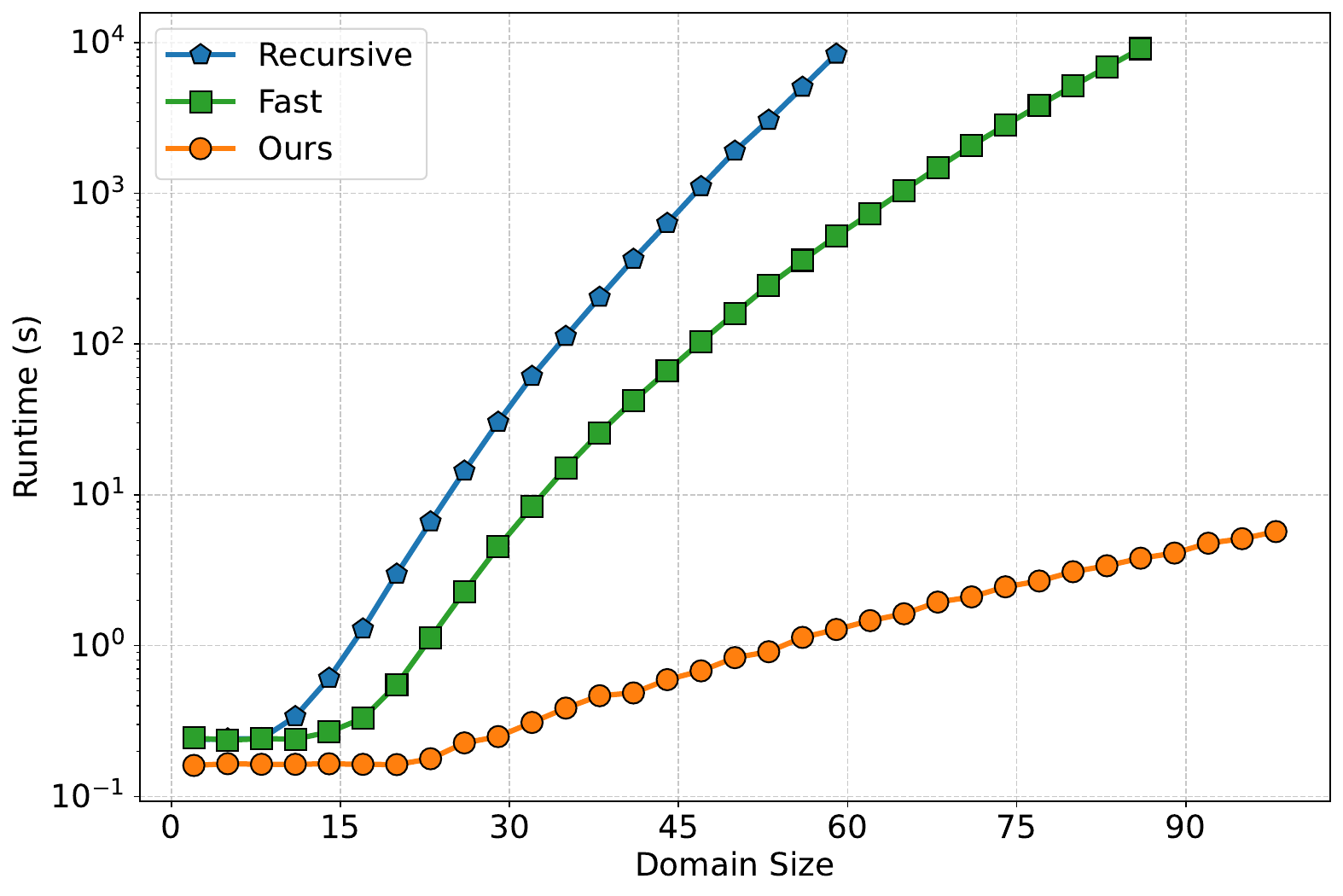}
        \caption{$3$-regular graph}
    \end{subfigure}
    \begin{subfigure}[b]{0.32\textwidth}
        \includegraphics[width=\linewidth]{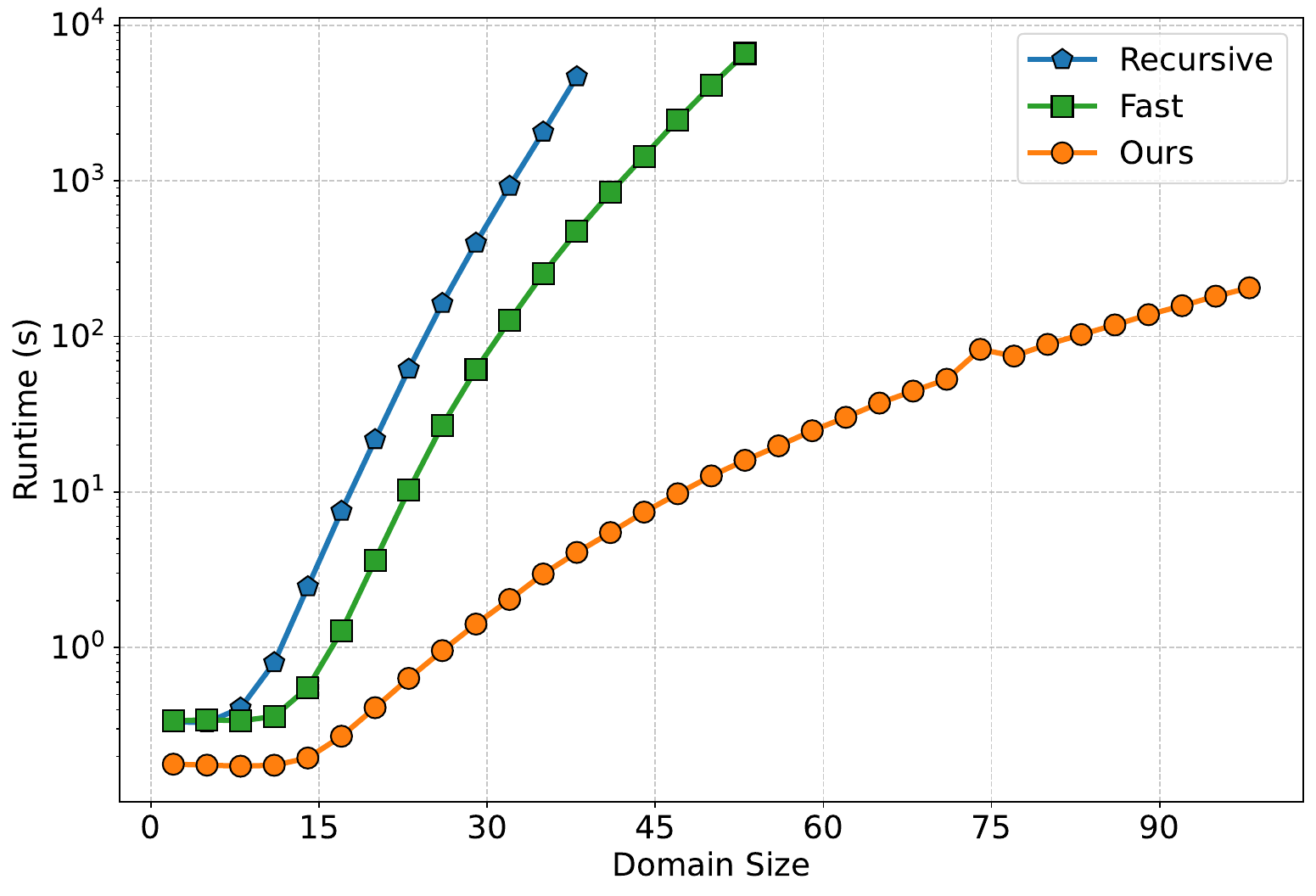}
        \caption{$4$-regular graph}
    \end{subfigure}
    \begin{subfigure}[b]{0.32\textwidth}
        \includegraphics[width=\linewidth]{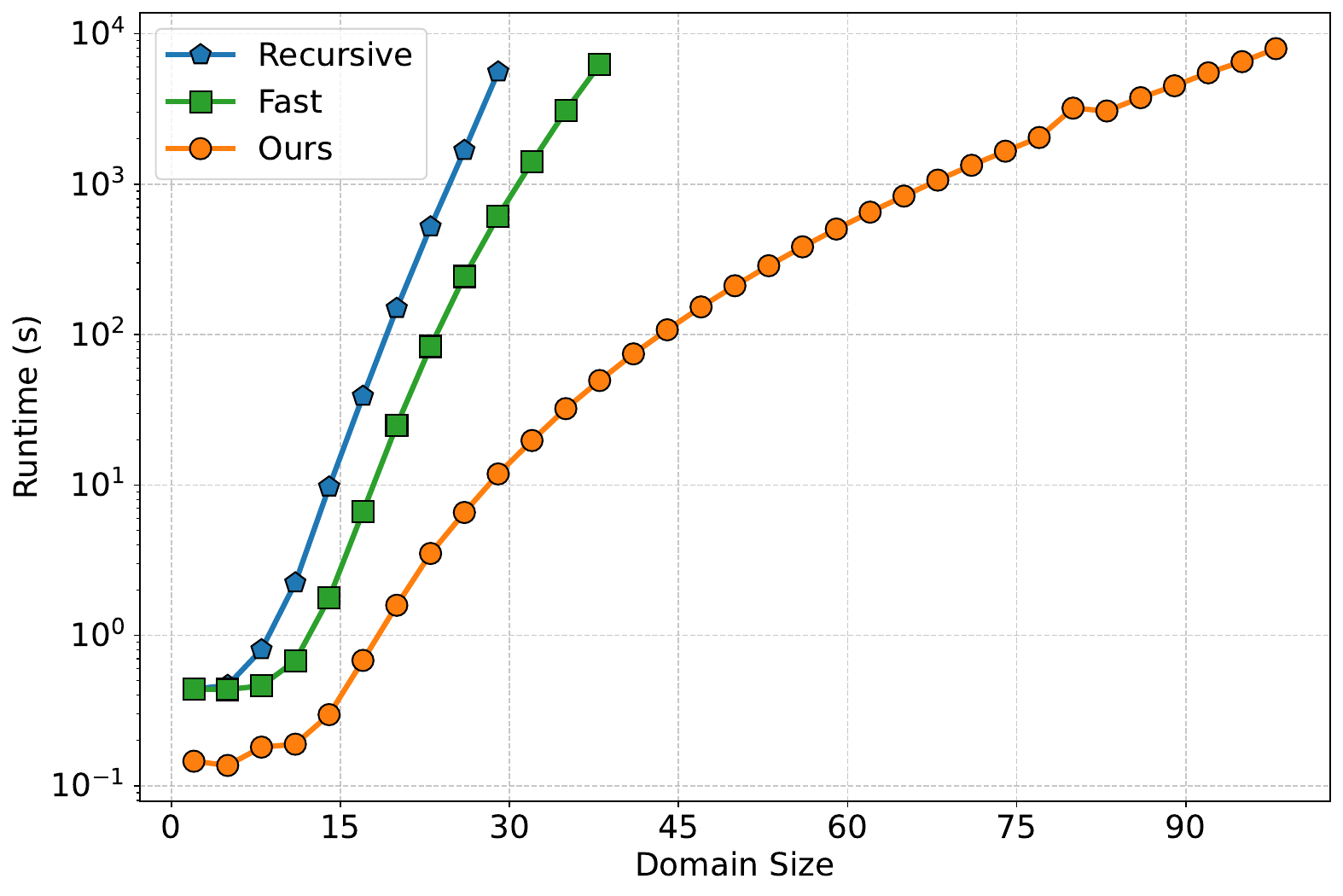}
        \caption{$5$-regular graph}
        \label{fig:time_5_regular}
    \end{subfigure}
    \caption{Runtime (log scale) comparison for counting $k$-regular graphs.}\label{fig:time_kregular}
\end{figure}

\paragraph{$k$-regular $l$-colored graphs.} 
The results in \Cref{fig:k_regular_l_colored} show a similar overall pattern to the $k$-regular case. 
\ouralgo{} remains the fastest method across the tested settings. 
The advantage is especially clear in the $2$-colored cases, while increasing the number of colors reduces the maximum domain sizes reachable by all methods. 
Overall, the results indicate that \ouralgo{} remains effective even as the induced 1-type space becomes larger.

\begin{figure}[htbp]
    \centering
    \begin{subfigure}[b]{0.32\textwidth}
        \includegraphics[width=\linewidth]{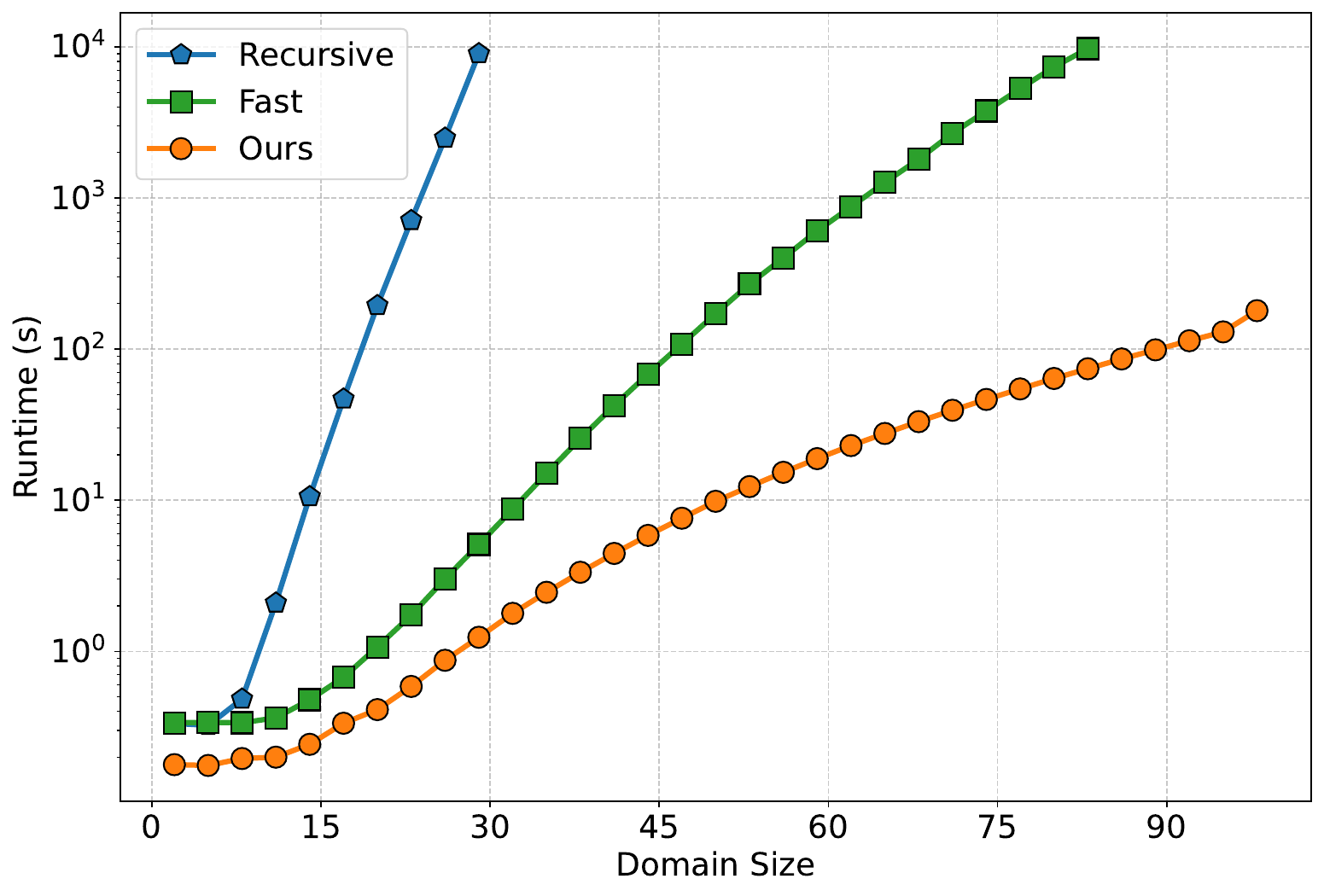}
        \caption{$3$-regular $2$-colored graph}
    \end{subfigure}
    \begin{subfigure}[b]{0.32\textwidth}
        \includegraphics[width=\linewidth]{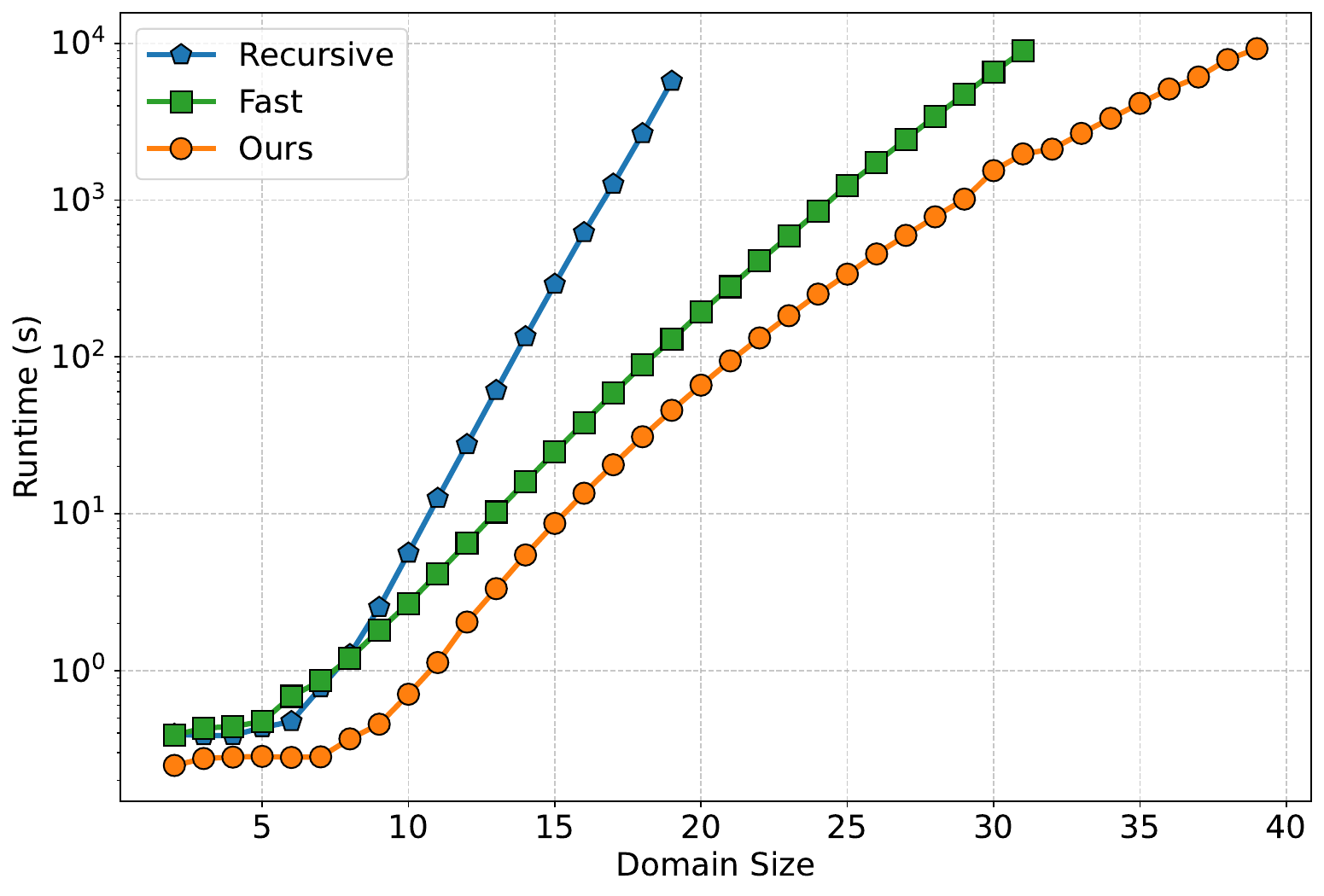}
        \caption{$3$-regular $3$-colored graph}
    \end{subfigure}
    \begin{subfigure}[b]{0.32\textwidth}
        \includegraphics[width=\linewidth]{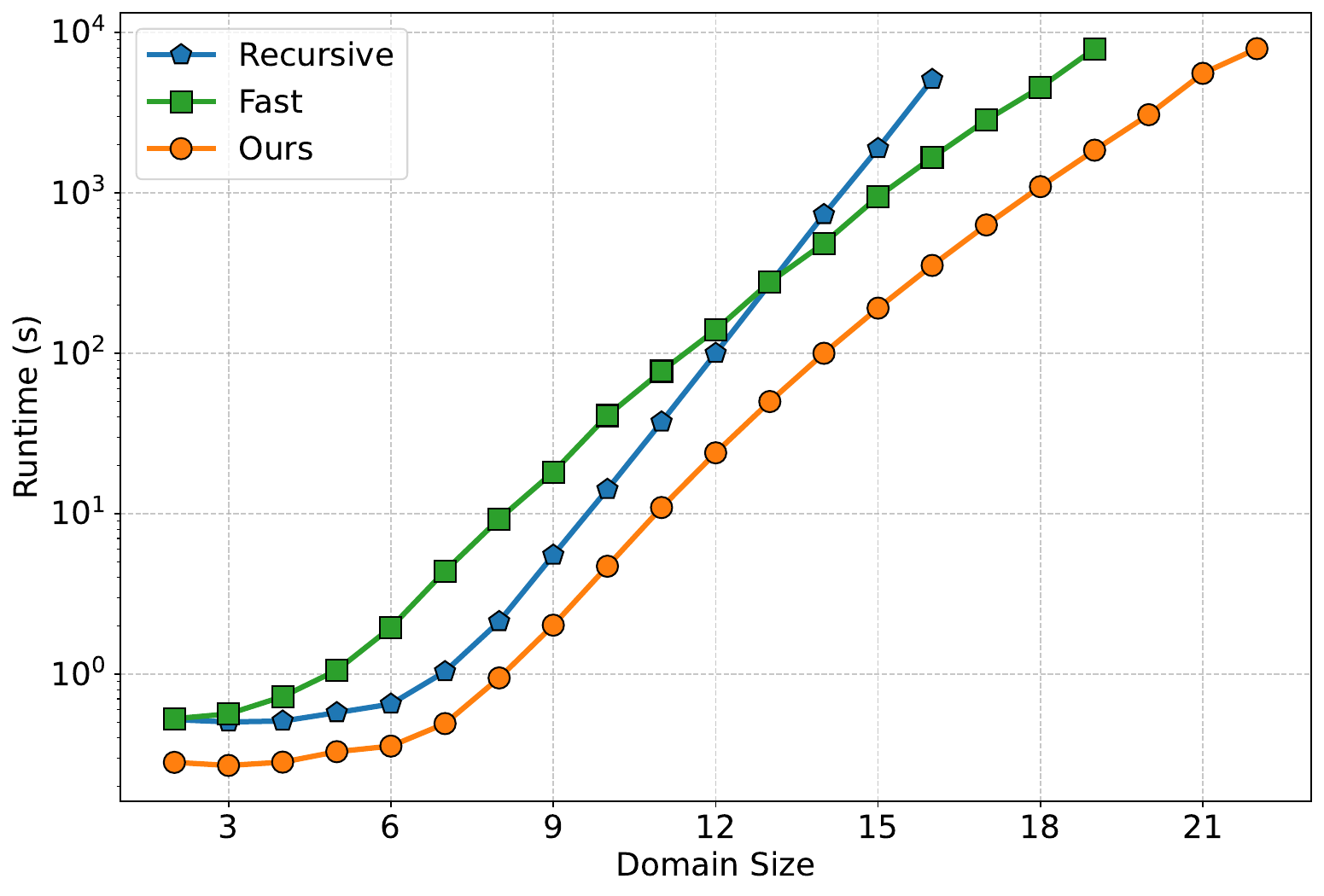}
        \caption{$3$-regular $4$-colored graph}
    \end{subfigure}
    \\[1ex] 
    \begin{subfigure}[b]{0.32\textwidth}
        \includegraphics[width=\linewidth]{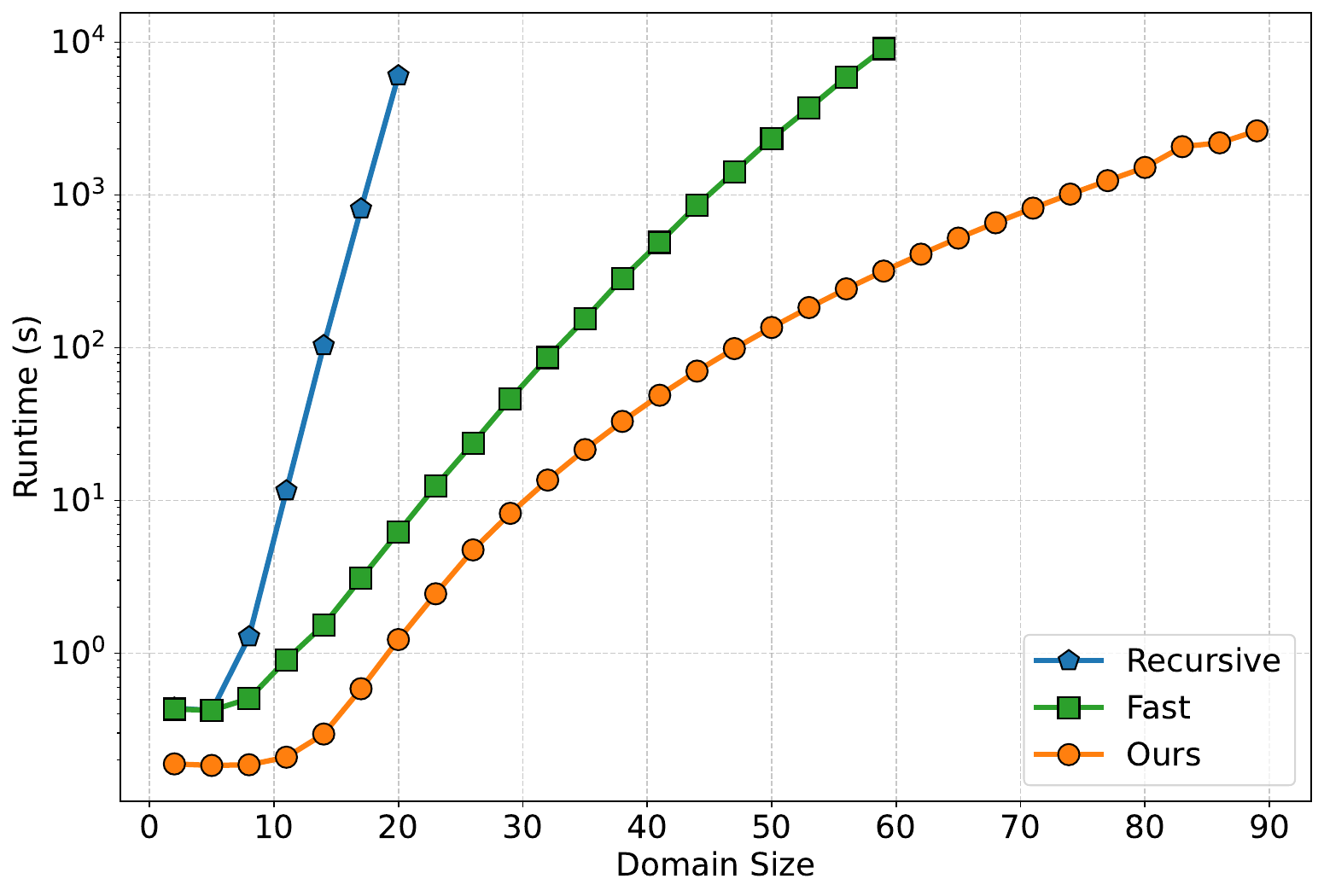}
        \caption{$4$-regular $2$-colored graph}
    \end{subfigure}
    \begin{subfigure}[b]{0.32\textwidth}
        \includegraphics[width=\linewidth]{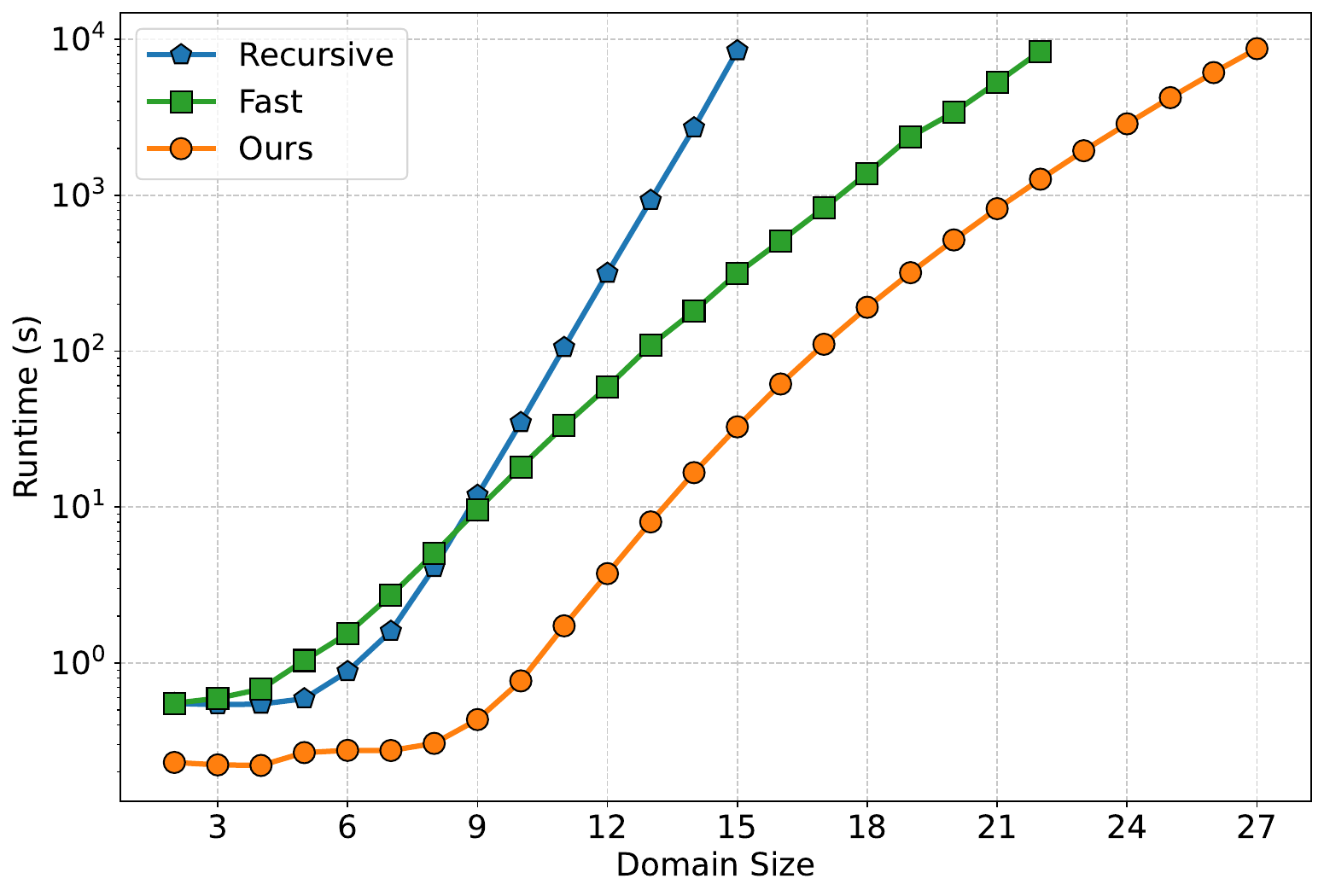}
        \caption{$4$-regular $3$-colored graph}
    \end{subfigure}
    \begin{subfigure}[b]{0.32\textwidth}
        \includegraphics[width=\linewidth]{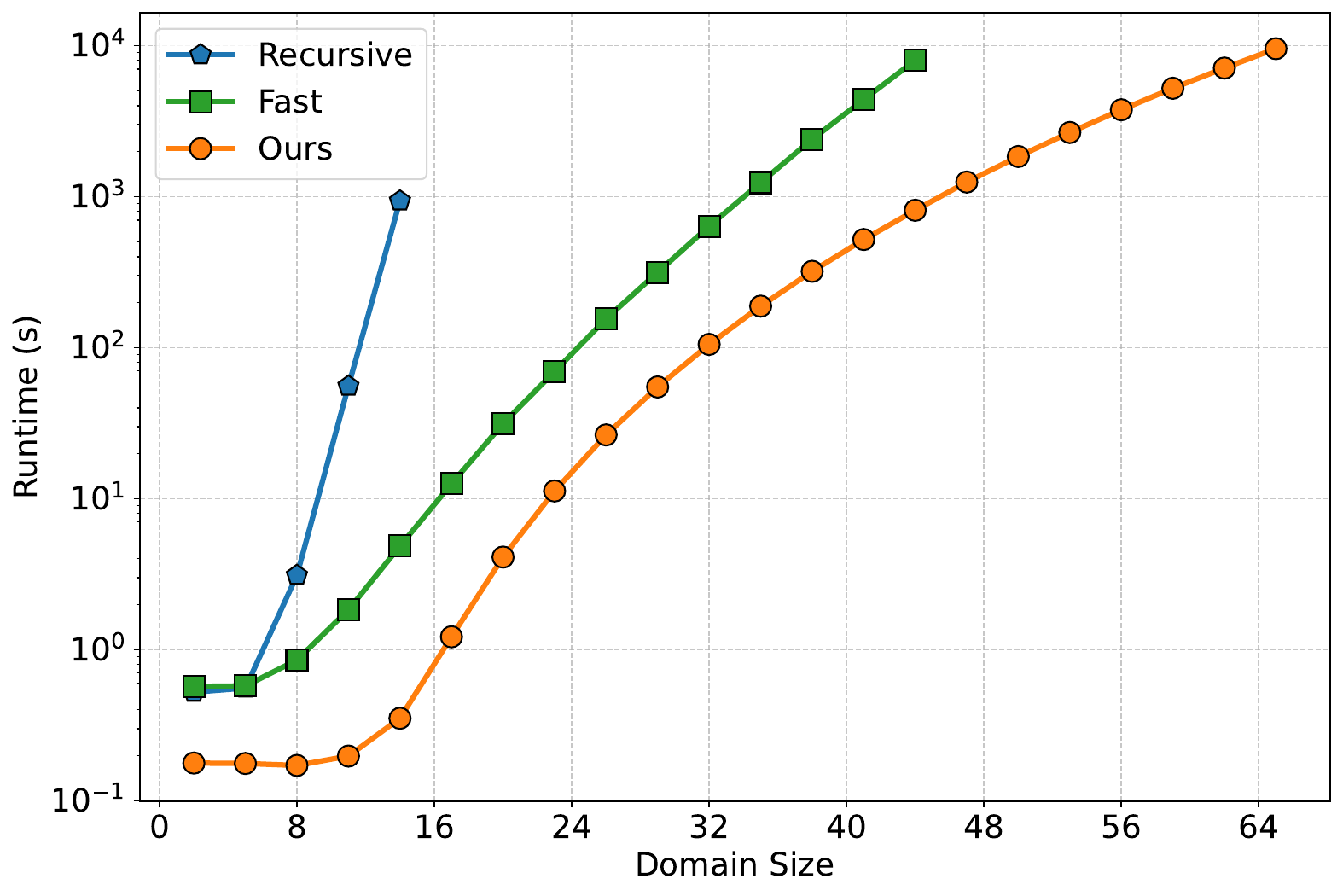}
        \caption{$5$-regular $2$-colored graph}\label{fig:time_5_regular_2_colored}
    \end{subfigure}
    \caption{Runtime (log scale) comparison for counting $k$-regular $l$-colored graphs,
    across varying degree $k \in \{3, 4, 5\}$ and number of colors $l \in \{2, 3, 4\}$.}
    \label{fig:k_regular_l_colored}
\end{figure}

\paragraph{$k$-regular digraphs.} 
\Cref{fig:k_regular_digraphs} shows that \ouralgo{} remains consistently faster than both \Fast{} and \Recursive{} on directed graphs. 
The gap is already visible in the $2$-regular case and becomes clearer as the domain size increases. 
In the $3$-regular setting, the three methods are closer on small domains, but the baselines deteriorate more quickly as $n$ grows.

\begin{figure}[htbp]
    \centering
    \begin{subfigure}[b]{0.32\textwidth}
        \includegraphics[width=\linewidth]{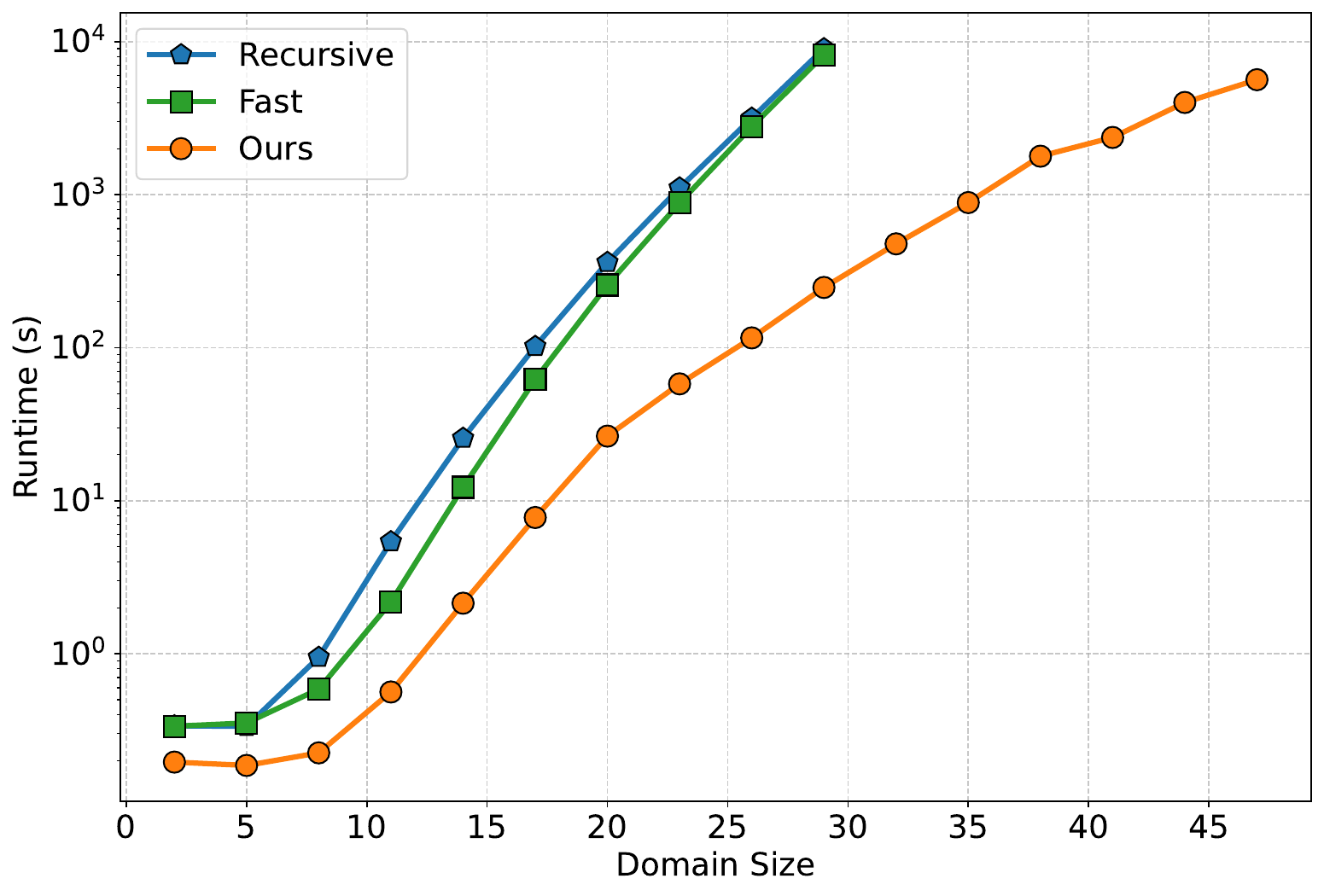}
        \caption{$2$-regular digraph}\label{fig:time_2_regular_digraph}
    \end{subfigure}
    \begin{subfigure}[b]{0.32\textwidth}
        \includegraphics[width=\linewidth]{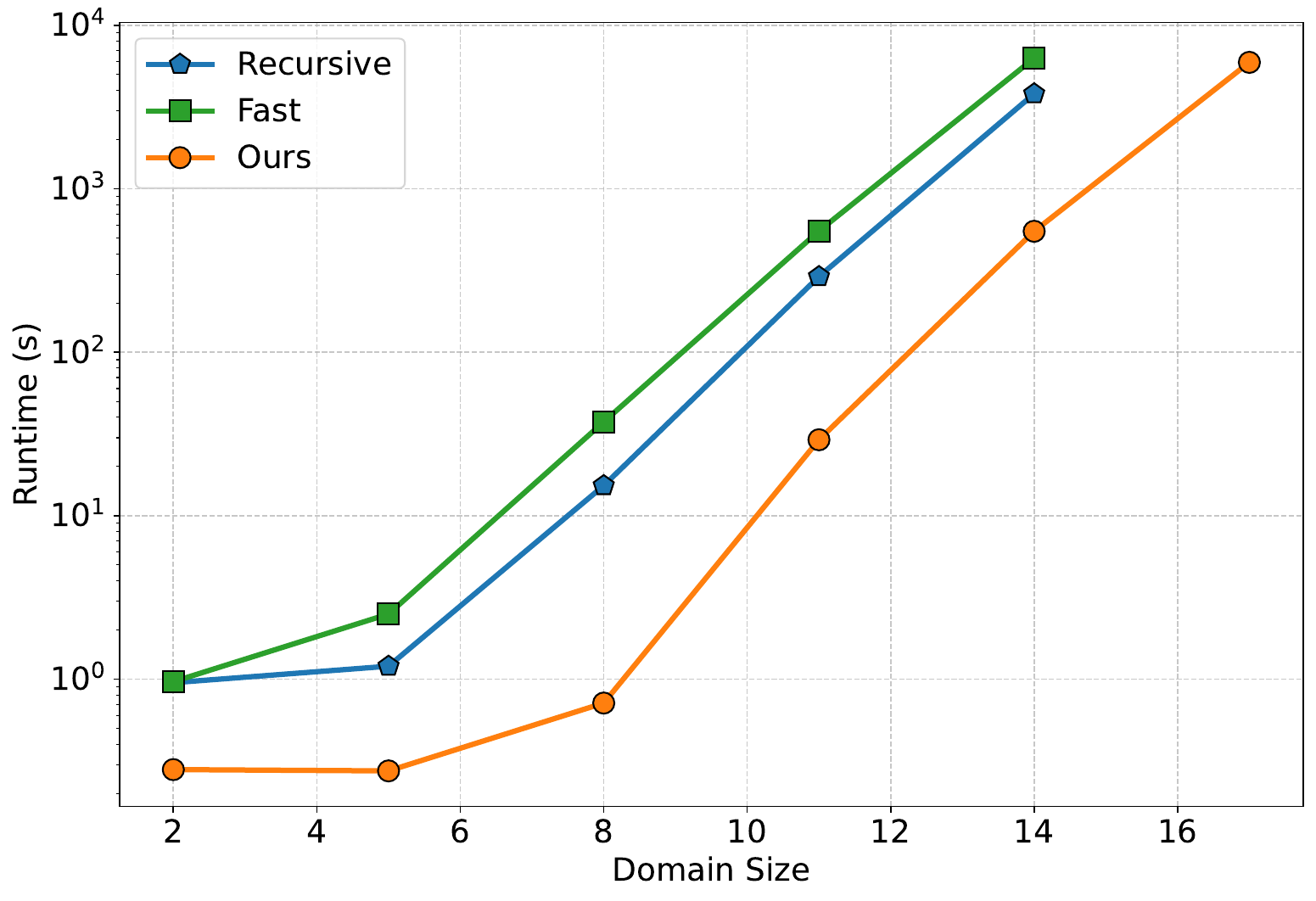}
        \caption{$3$-regular digraph}\label{fig:time_3_regular_digraph}
    \end{subfigure}
    \caption{
        Runtime comparison for counting (a) $2$-regular and (b) $3$-regular digraphs.
    }\label{fig:k_regular_digraphs}
\end{figure}

\subsubsection{BA graphs.}

Barabási--Albert (BA) graphs are widely used to study the emergence of complex networks, 
including social networks and the internet~\cite{albert2002statistical}. 
Following prior lifted inference work, we consider a simplified version of the Barabási--Albert (BA) model, whose encoding in $\ctwo{}$ with a linear order axiom is:
\begin{equation}\label{eq:ba_encoding}
    \begin{aligned}
    \sentence_{\text{BA}} = {} & \left(\forall x: Eq(x, x)\land \neg R(x,x)\right) \land (|Eq| = n) \land \left(\exists^{=k+1}x: K(x)\right) \land \\
    & \forall x \forall y: K(x) \land K(y) \land \neg Eq(x, y) \Rightarrow R(x, y) \land \\
    & \forall x \exists^{=k}y: R(x, y) \land \\
    & \forall x \forall y: R(x, y) \land \neg (K(x) \land K(y)) \Rightarrow y \leq x \land \\
    & \forall x \forall y: K(x) \land \neg K(y) \Rightarrow x \leq y,
    \end{aligned}
\end{equation}
where $\le$ is a special binary predicate that is interpreted as a linear order over the domain.

Conceptually, the $n$-vertex graph encoded by $\sentence_{\text{BA}}$ is generated through a sequential construction process. 
First, the vertices are ordered using the linear order axiom, and a  $(k+1)$-vertex complete graph is formed on the initial $k+1$ vertices. The structure is then expanded by iteratively appending the remaining vertices $i \in \{k+2, k+3, \dots, n\}$. 
As each new vertex $i$ is introduced, it forms $k$ outgoing edges that connect strictly to preceding vertices; that is, all new edges are of the form $(i, j)$ where $j \in \{1, 2, \dots, i-1\}$. 
When enumerating these graphs, the final count is divided by $n!$ to account for the symmetries introduced by vertex labeling, ensuring that isomorphic graphs are not overcounted.
In our experiments, we focus on the case of $k=3$ for BA graphs aligned with \citep{KR2024-64}, which already yields a non-trivial c1-type space of size $8$.

In the above encoding \cref{eq:ba_encoding}, the equality predicate $Eq$ is enforced through $\forall x: Eq(x,x)$ and the cardinality constraint $|Eq| = n$.
We also consider a variant $\sentence_{\mathrm{BA-NOCC}}$, in which $|Eq| = n$ is replaced by $\forall x \forall y : Eq(x,y) \leftrightarrow \left((x \le y) \wedge (y \le x)\right)$.
This variant allows us to exclude the effect of cardinality constraints, since the reduction to eliminate cardinality constraints based on Lagrange interpolation~\cite{kuzelka2021weighted} usually leads to a degradation in performance, which might obscure the scalability advantage of \ouralgo{} over the lifted baselines.

For the $\sentence_{\text{BA}}$ and $\sentence_{\mathrm{BA-NOCC}}$ benchmarks, the theory includes a linear order axiom. 
Since \Fast{} does not support this setting, we only compare \ouralgo{} against \Recursive{}.
The results are shown in \Cref{fig:compare_ba}. 
For the BA benchmark encoded in $\sentence_{\mathrm{BA-NOCC}}$, the baseline method becomes infeasible at relatively small domain sizes, whereas \ouralgo{} remains feasible over the tested range. 
For the $\sentence_{\mathrm{BA}}$ variant, all methods incur higher runtimes, but \ouralgo{} still scales to larger domains than the baseline. 
These results suggest that the proposed method remains effective even when the theory combines counting quantifiers, 
linear order, and a global cardinality constraint.

\begin{figure}[htbp]
    \centering
    \begin{subfigure}[b]{0.32\textwidth}
        \includegraphics[width=\linewidth]{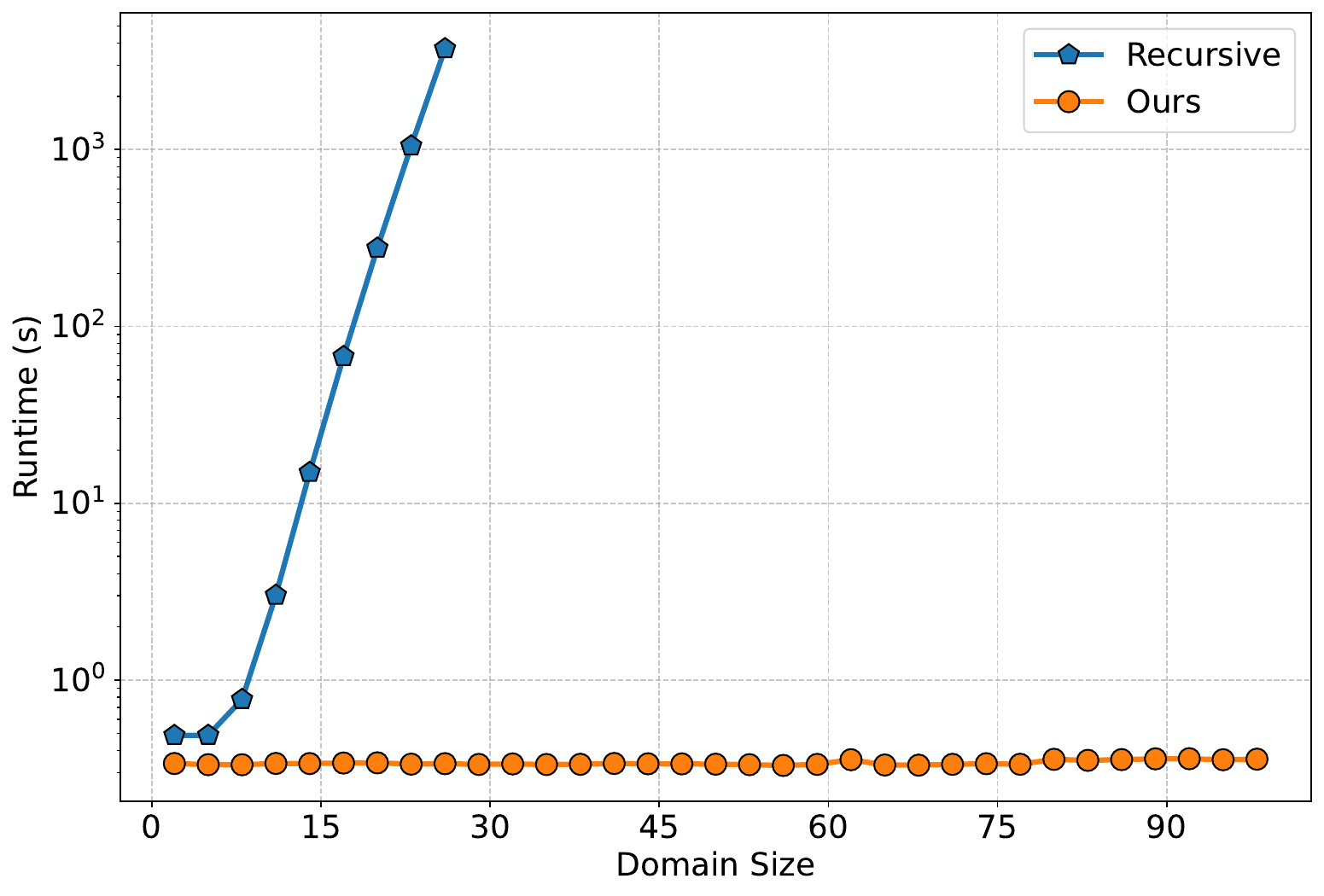}
        \caption{$\sentence_{\mathrm{BA-NOCC}}$}\label{fig:ba}
    \end{subfigure}
    \begin{subfigure}[b]{0.32\textwidth}
        \includegraphics[width=\linewidth]{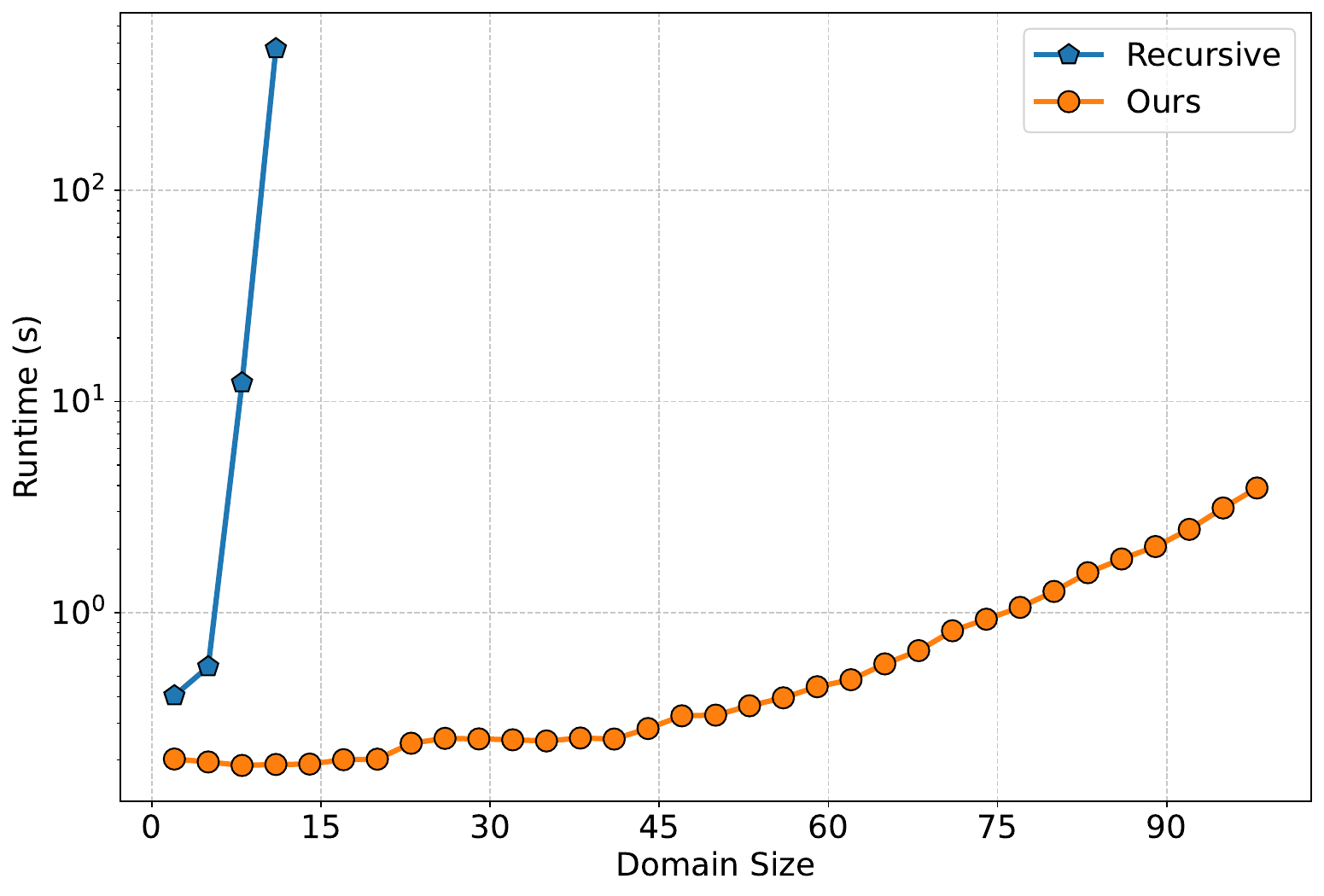}
        \caption{$\sentence_{\mathrm{BA}}$}\label{fig:ba_cc}
    \end{subfigure}
    \caption{Runtime (log scale) comparison for counting Barabási-Albert (BA) graphs under two encodings with and without the cardinality constraint on the equality predicate.
    Since \Fast{} is not applicable to benchmarks with linear order, the comparison here is against \Recursive{}.
    }\label{fig:compare_ba}
\end{figure}

\subsection{\ctwomod{} Benchmarks}\label{sec:ctwomod_performance}

We evaluate the modulo counting extension on two benchmark families:

\begin{itemize}
\item \textbf{$r$-mod-$k$-regular graphs:}
The task of counting $r$-mod-$k$-regular undirected graphs can be expressed by the following \ctwomod{} sentence:
\begin{align*}
    \sentence_{\text{$r$-mod-$k$-regular}} = &\left(\forall x: \neg E(x,x)\right) \land \left(\forall x \forall y: E(x,y) \to E(y,x)\right) \land \\
    &\forall x \exists^{=r, k} y: E(x,y).
\end{align*}

\item \textbf{$m$-odd-degree graphs:}
The task of counting graphs with exactly $m$-odd-degree vertices can be expressed by the following sentence, 
where $k$ denotes the number of undirected edges $E$. 
Since $E$ is represented as a symmetric binary relation, each undirected edge contributes two true ground atoms, 
and thus the corresponding cardinality constraint is $|E|=2k$. For readability, we first present the direct formulation:
\begin{align*}
    \sentence_{\text{$m$-odd-degree}} = &\left(\forall x: \neg E(x,x)\right) \land \left(\forall x \forall y: E(x,y) \to E(y,x)\right) \land\\
    & \forall x: Odd(x) \leftrightarrow \exists^{=1,2} y: E(x,y)\land \\
    &\exists^{=m} x: Odd(x) \land \\
    &|E| = 2k.
\end{align*}
The normalized sentence used by \ouralgo{} is obtained via the reductions in \Cref{app:modk_reduction} and
is given in \Cref{app:reduction_m_odd_degree}.
\end{itemize}

\paragraph{$r$-mod-$k$-regular graphs.} 
The runtime results are shown in \Cref{fig:time_modkregular}. 
Across all three modulo counting tasks,
namely $0$-mod-$2$-regular, $1$-mod-$2$-regular, and $2$-mod-$4$-regular graphs,
\ouralgo{} remains feasible 
over a substantially larger range of domain sizes than both \Ganak{} and \ApproxMC{}. 
The observed gap mainly reflects the difference between handling modulo counting directly at the first-order level and solving the grounded propositional instances. 
The results therefore indicate that lifted inference with direct support for modulo counting quantifiers can provide a significant practical advantage, not only in theory.

\begin{figure}[htbp]
    \centering
    \begin{subfigure}[b]{0.32\textwidth}
        \includegraphics[width=\linewidth]{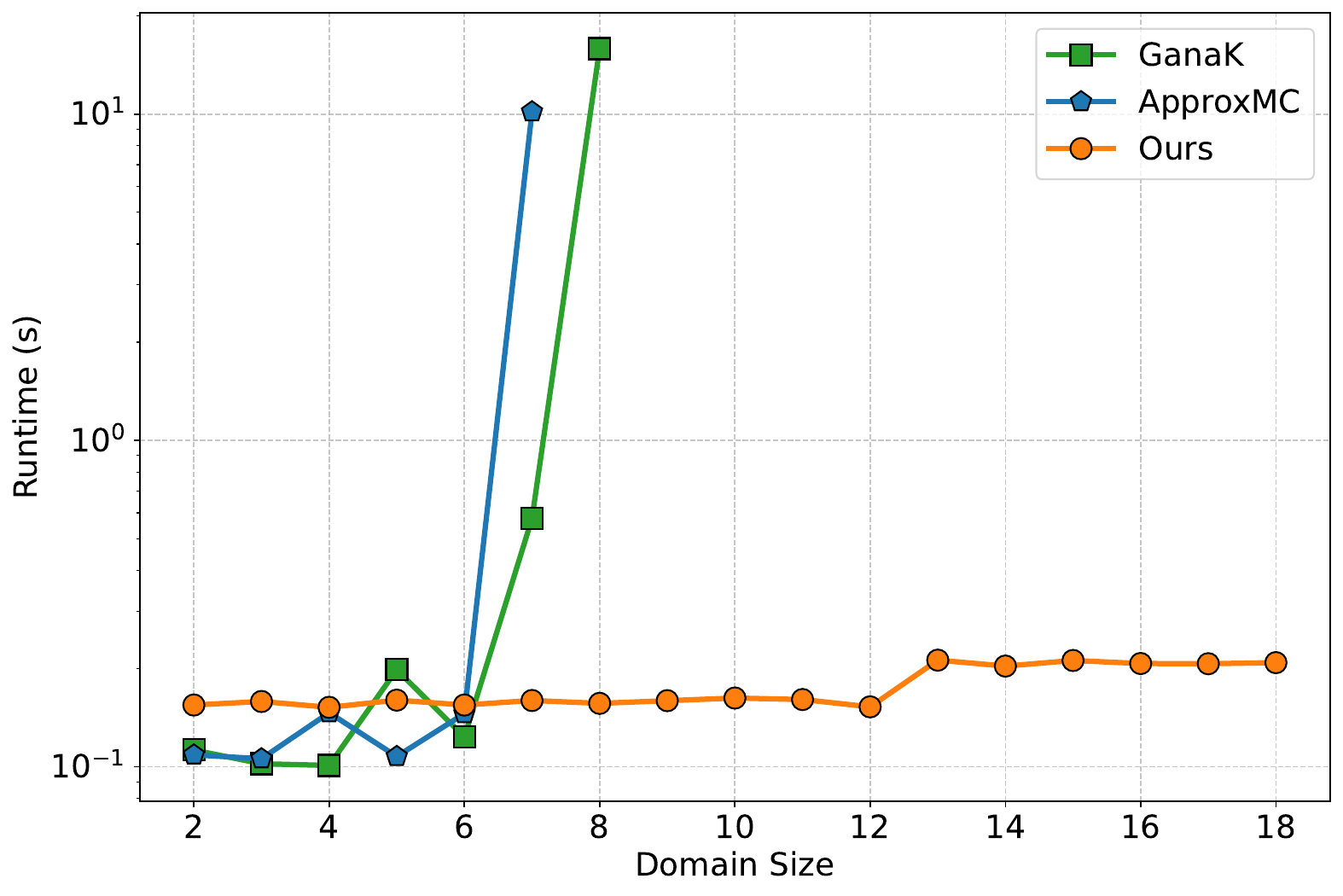}
        \caption{$0$-mod-$2$-regular graph}
    \end{subfigure}
    \begin{subfigure}[b]{0.32\textwidth}
        \includegraphics[width=\linewidth]{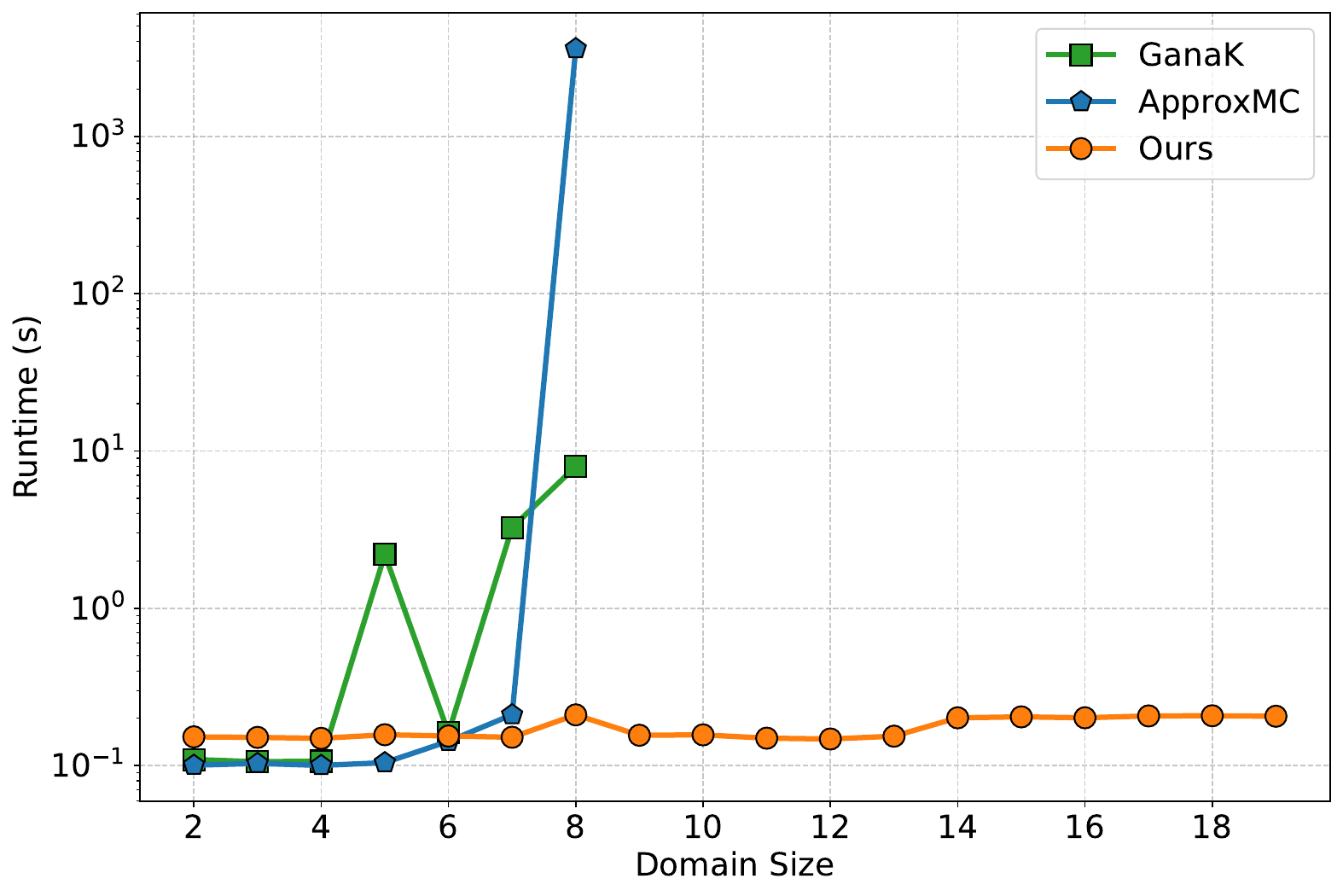}
        \caption{$1$-mod-$2$-regular graph}
    \end{subfigure}
    \begin{subfigure}[b]{0.32\textwidth}
        \includegraphics[width=\linewidth]{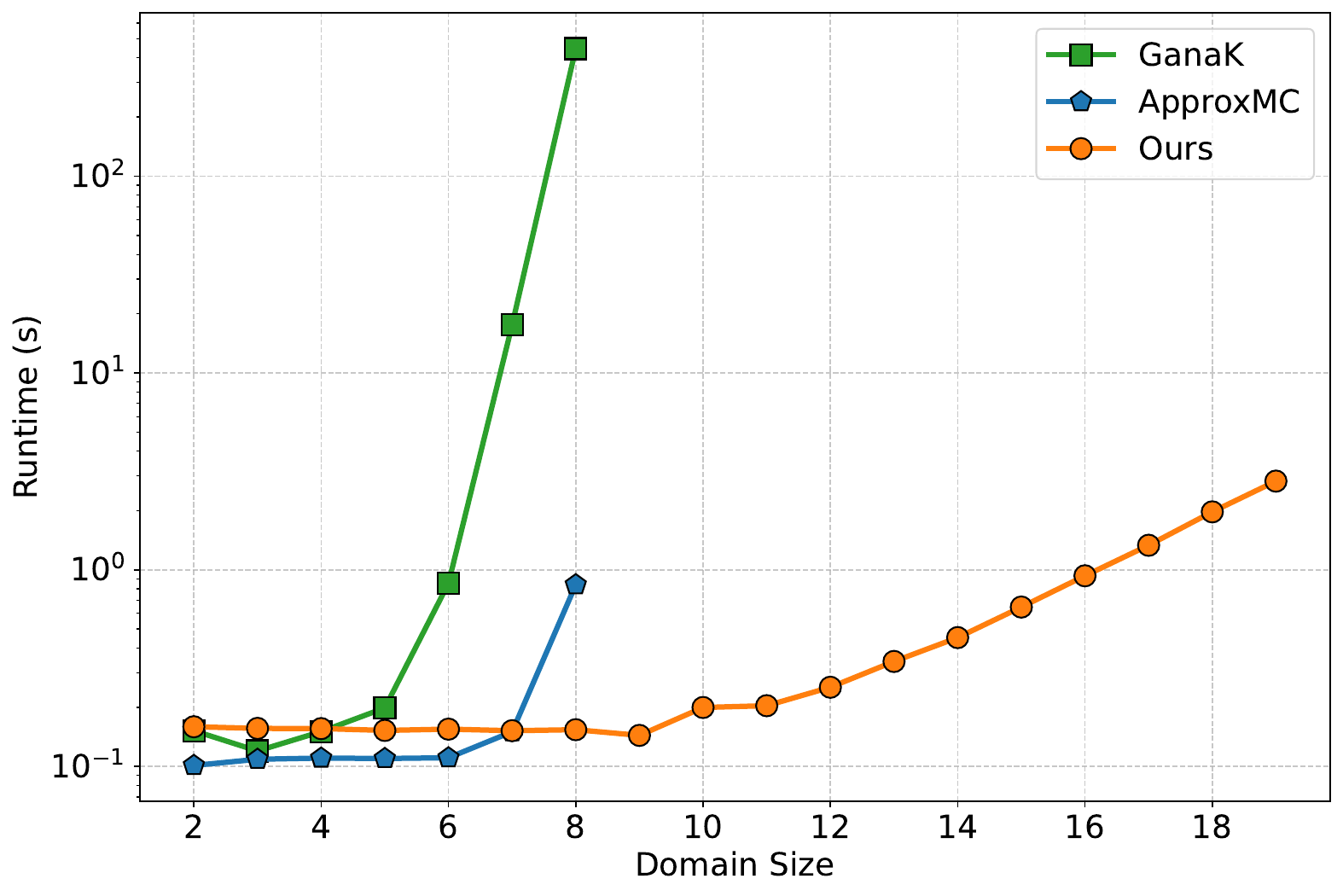}
        \caption{$2$-mod-$4$-regular graph}
    \end{subfigure}
    \caption{Runtime comparison for three modulo counting graph families.}\label{fig:time_modkregular}
\end{figure}

\paragraph{$m$-odd-degree graphs.}

The $m$-odd-degree task depends on three parameters: 
 the domain size $n$, 
 the number of odd-degree vertices $m$, 
 and the number of undirected edges $k$. 
 To keep the presentation focused, 
 we fix $k=2n$ and vary $m$, which we found to provide a representative subset of the parameter space in preliminary experiments.
 Since the number of odd-degree vertices in an undirected graph must be even, 
 we only consider even values of $m$; the cases $m=2,4,6$ already provide simple nontrivial slices of the parameter space.
The runtime comparison is shown in \Cref{fig:odd_degree_time}. 
Across all three settings, 
\ouralgo{} again scales to substantially larger domains than the propositional baselines. 
The runtimes of \Ganak{} and \ApproxMC{} 
increase rapidly and become infeasible on relatively small domains, 
whereas the growth of \ouralgo{} is markedly smoother. 
The same qualitative pattern appears for $m=2$, $m=4$, and $m=6$.

\begin{figure}[htbp]
    \centering
    \begin{subfigure}[b]{0.32\textwidth}
        \includegraphics[width=\linewidth]{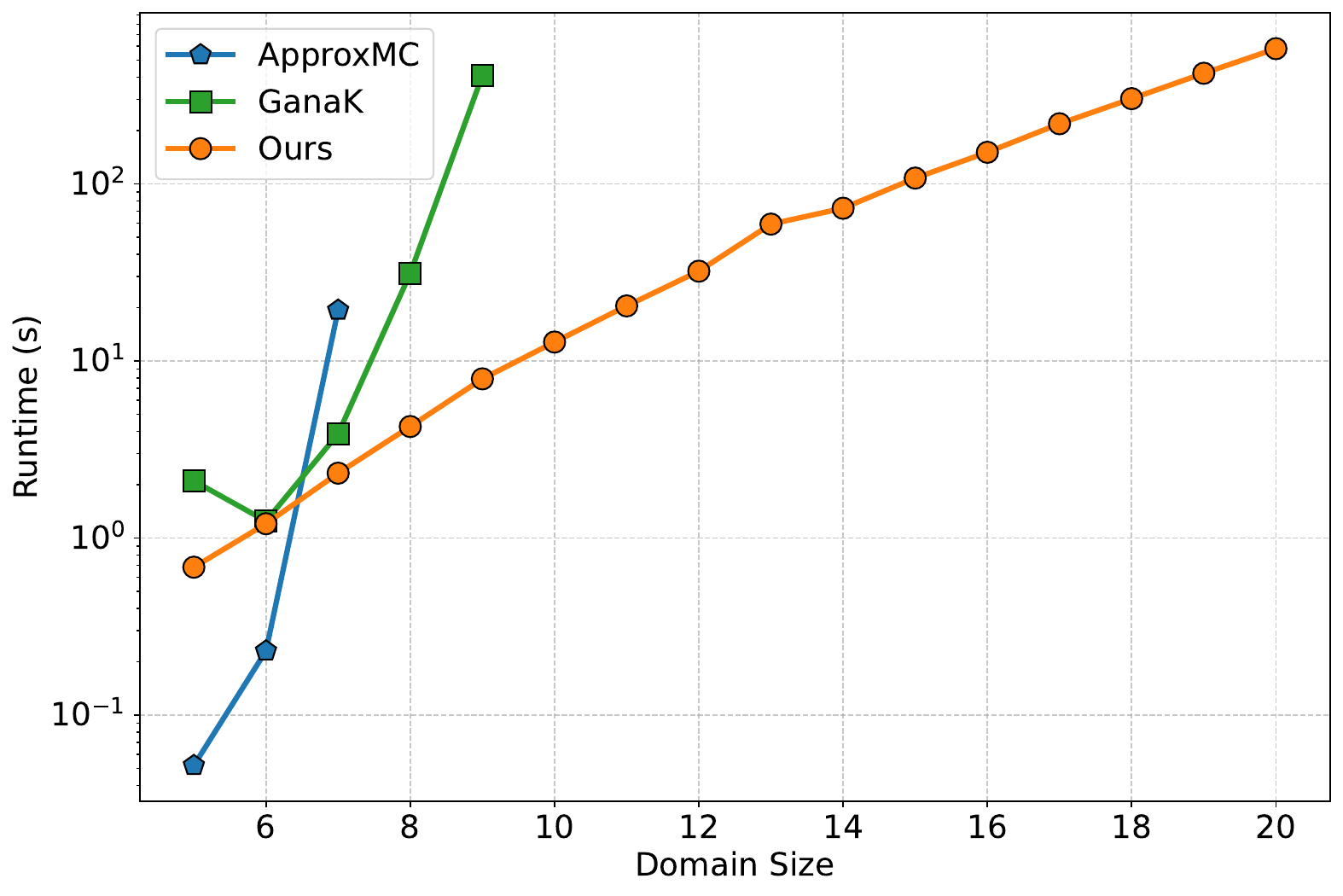}
        \caption{$2$-odd-degree graph}
    \end{subfigure}
    \begin{subfigure}[b]{0.32\textwidth}
        \includegraphics[width=\linewidth]{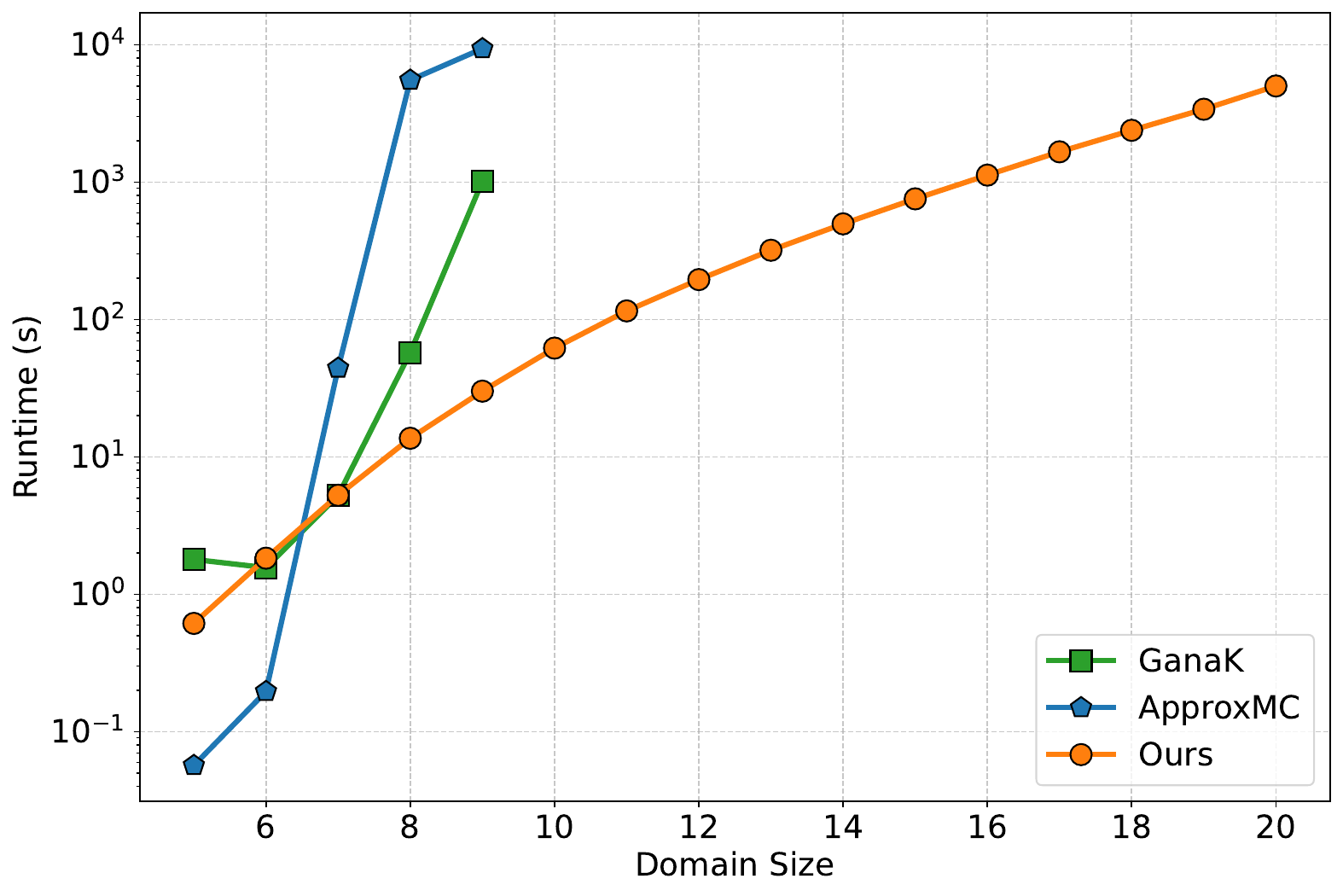}
        \caption{$4$-odd-degree graph}
    \end{subfigure}
    \begin{subfigure}[b]{0.32\textwidth}
        \includegraphics[width=\linewidth]{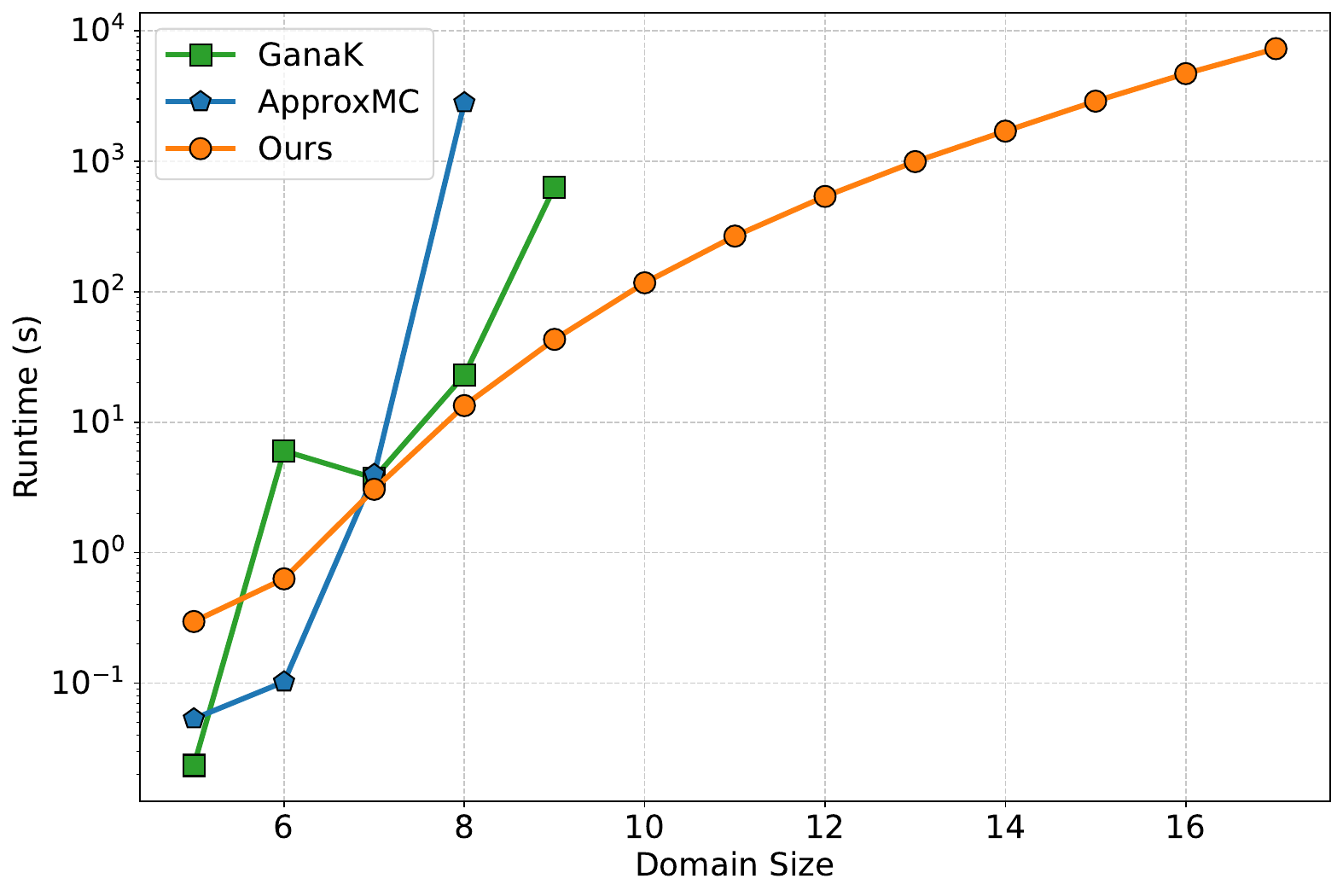}
        \caption{$6$-odd-degree graph}
    \end{subfigure}
    \caption{Runtime comparison for counting $m$-odd-degree graphs with undirected edges $k=2n$ and
    (a) $m=2$, (b) $m=4$, and (c) $m=6$.}\label{fig:odd_degree_time}
\end{figure}

Beyond runtime,
we also examined the model counts of $\sentence_{\text{$m$-odd-degree}}$, denoted by $T(n,m,k)$.
For some parameter settings, 
these counts coincide with known integer sequences in the \href{https://oeis.org/}{OEIS}.
For instance, $T(n,0,k)$ counts the number of labeled even-degree graphs with $n$ vertices and $k$ edges
and corresponds to OEIS sequence \href{https://oeis.org/A058878}{A058878},
while $T(n,2k,k)$ counts matchings of size $k$ in the complete graph $K_n$ and
corresponds to OEIS sequence \href{https://oeis.org/A100861}{A100861}.
We also provide a full table of $T(n,m,k)$ for $n \leq 8$, $k \leq 13$, and $m \leq n$ in \Cref{app:sequences}, which may be of independent combinatorial interest.

\section{Conclusion}\label{sec:conclusion}

In this paper,
we introduced \ouralgo{}, an efficient algorithm for weighted first-order model counting (\wfomc{}) on \ctwo{} and its modulo counting extension \ctwomod{}.
With \ouralgo{}, we obtain two main theoretical consequences:
For \ctwo{}, it improves the best known data-complexity bound by 
reducing the dependence on the counting parameters from quadratic to linear in the exponent,
and for \ctwomod{}, it establishes domain-liftability for the first time.
Across benchmark families ranging from standard counting tasks to linear-order and modulo counting settings, \ouralgo{} consistently scales to larger domains than existing lifted or propositional baselines, often by substantial margins. 

Several promising avenues for future work remain open.
On the theoretical side, it would be interesting to determine whether the dependence on the 
counting parameters can be reduced further. On the algorithmic side, 
a natural next step is to extend the same direct state-tracking viewpoint to 
richer liftable fragments that combine counting or modulo counting with additional axioms, 
and to investigate broader applications of the resulting methods in lifted probabilistic inference and combinatorial counting.

\section*{Acknowledgements}
Shixin Sun and Yuanhong Wang were supported by the National Natural Science Foundation of China (Grant No. 62506141). 
Astrid Klipfel and Ond\v{r}ej Ku\v{z}elka were supported by the Czech Science Foundation under Project 24-11820S (Automatic Combinatorialist).
Yi Chang was supported by the New Cornerstone Science Foundation through the XPLORER PRIZE.


\bibliographystyle{plainnat}
\bibliography{cas-refs}

\clearpage

\appendix

\section{\wfomc{} Reductions from \ctwomod{} to the Normal Form}
\label{app:modk_reduction}
In this section, we show how to transform a \ctwomod{} sentence into the normal form in \cref{eq:sctwo_mod}.
To simplify the presentation, 
throughout this section, 
whenever fresh auxiliary predicates are introduced 
and their weights are not specified explicitly, 
we extend the weighting functions by assigning them unit weights, 
i.e., for every such fresh predicate $P$, we set $w(P)=\bar{w}(P)=1$.
The techniques used in this section mainly follow the same principles as the reduction from \ctwo{} to \sctwo{} in \citep{KR2024-64}.
Throughout this section, we assume without loss of generality that all counting and modulo counting parameters are smaller than domain size $n$, as the remaining cases concern only finitely many domain sizes and do not affect the data complexity.


\subsection{Transforming $\exists^{\le k}$, $\exists^{\ge k}$, $\exists^{\le r', k'}$, and $\exists^{\ge r', k'}$}

First, we transform counting quantifiers of the form $\exists^{\le k}$ and $\exists^{\ge k}$ into $\exists^{=k}$, and transform modulo counting quantifiers of the form $\exists^{\le r',k'}$ and $\exists^{\ge r',k'}$ into $\exists^{=r',k'}$.\footnote{Since our new algorithm can directly handle both $\exists^{=k}$ and $\exists^{\leq k}$ quantifiers, we can keep $\exists^{\leq k}$ and $\exists^{\le r',k'}$ quantifiers as they are. Obviously, this significantly simplifies the transformation.}
For $\exists^{\le k}$ and $\exists^{\le r',k'}$, it is easy: We rewrite $\exists^{\le k}$ as $\bigvee_{i=0}^k \exists^{=i}$, and rewrite $\exists^{\le r',k'}$ as $\bigvee_{i=0}^{r'} \exists^{=i,k'}$.
For $\exists^{\ge k}$ and $\exists^{\ge r',k'}$ , we can first rewrite $\exists^{\ge k}$ as $\neg \left(\exists^{\le k-1}\right)$ and $\exists^{\ge r',k'}$ as $\neg \left(\exists^{\le r'-1, k'}\right)$, and then apply the above transformation to $\exists^{\le k-1}$ and $\exists^{\le r'-1, k'}$.
Note that these transformations introduce only a constant number of subformulas for each counting or modulo counting quantifier, and the resulting formula is of size linear in the original formula (and independent of the domain size).

\subsection{Subformula Axiomatization}\label{subsec:subformula_isolation}

Next, we axiomatize counting and modulo counting subformulas, transforming the original \ctwomod{} sentence into:
\begin{align}
    \Psi \land &\bigwedge_{i=1}^M \forall x: \left(P_i(x) \leftrightarrow \exists^{= k_i} y: \psi_i(x,y)\right) \land \label{eq:counting_isolation}
    \bigwedge_{j=1}^N \left(Q_j() \leftrightarrow  \exists^{= u_j} x: \phi_j(x)\right) \land \\
    &\bigwedge_{i=1}^{M'} \forall x: \left(P'_i(x) \leftrightarrow \exists^{= r'_i, k'_i} y: \psi'_i(x,y)\right) \land \label{eq:mod_counting_isolation}
    \bigwedge_{j=1}^{N'} \left( Q'_j() \leftrightarrow \exists^{=v'_j,u'_j} x: \phi'_j(x) \right), 
\end{align}
where $\Psi$ is a \fotwo{} sentence, $P_i$ and $P'_i$ are fresh unary predicates, $Q_j$ and $Q'_j$ are fresh nullary predicates, and $\psi_i$, $\psi'_i$, $\phi_j$, and $\phi'_j$ are formulas that do not contain counting or modulo counting quantifiers.

The process is as follows: Starting from the innermost counting or modulo counting subformulas and working outwards, replace any subformula $\exists^{=k} y: \psi(x,y)$ and $\exists^{=r', k'} y: \psi'(x,y)$, where $\psi(x,y)$ and $\psi'(x,y)$ do not contain counting or modulo counting quantifiers, with fresh unary predicates $P(x)$ and $P'(x)$, respectively. We then add corresponding definitions $\forall x: \left(P(x) \leftrightarrow \exists^{=k} y: \psi(x,y)\right)$ and $\forall x: \left(P'(x) \leftrightarrow \exists^{=r', k'} y: \psi'(x,y)\right)$ to the resulting formula.
Similarly, replace any subformula $\exists^{=u} x: \phi(x)$ and $\exists^{=v', u'} x: \phi'(x)$, where $\phi(x)$ and $\phi'(x)$ do not contain counting or modulo counting quantifiers, with fresh nullary predicates $Q()$ and $Q'()$, respectively, and add corresponding definitions $Q() \leftrightarrow \exists^{=u} x: \phi(x)$ and $Q'() \leftrightarrow \exists^{=v', u'} x: \phi'(x)$ to the resulting formula.
It is easy to check that the above transformation preserves WFOMC, since each replacement introduces only a fresh predicate together with a defining equivalence, and all fresh predicates are assigned unit weights.

Before proceeding to the next step, we further replace the equivalences in \cref{eq:counting_isolation,eq:mod_counting_isolation} with $\land$, $\lor$ and $\neg$. 
For instance, $\forall x: \left(P_i(x) \leftrightarrow \exists^{= k_i} y: \psi_i(x,y)\right)$ is replaced by
\begin{align}
    &\forall x: \left(\neg P_i(x) \lor \left(\exists^{= k_i} y: \psi_i(x,y)\right)\right) \land \label{eq:implication1}\\
    &\forall x: \left(P_i(x) \lor \neg \left(\exists^{= k_i} y: \psi_i(x,y)\right)\right) \label{eq:implication2},
\end{align}
and $\left(Q_j() \leftrightarrow  \exists^{= u_j} x: \phi_j(x)\right)$ is replaced by
\begin{align}
    &\left(\neg Q_j() \lor \left(\exists^{= u_j} x: \phi_j(x)\right)\right) \land \label{eq:implication3}\\
    &\left(Q_j() \lor \neg \left(\exists^{= u_j} x: \phi_j(x)\right)\right) \label{eq:implication4}.
\end{align}

\subsection{Negation Elimination}

For the formulas of the form of \cref{eq:implication2,eq:implication4} (as well as their modulo counting counterparts), we can apply the following lemma from~\cite[Appendix A.2]{Beame_2015} such that the negation of counting and modulo counting subformulas is eliminated.
\begin{lemma}\label{lemma:neg_counting}
    Let $\neg \phi(x_1, \ldots, x_l)$ be a subformula of a first-order sentence 
    $\Phi$ with $l \ge 0$ free variables. 
    Let $A$ and $B$ be two new predicates of arity $l$.
    Denote by $\Phi'$ the sentence obtained from $\Phi$ by replacing $\neg \phi(x_1, \ldots, x_l)$ with $A(x_1, \ldots, x_l)$.
    Let
    \begin{align*}
        \Upsilon &= \forall x_1\forall x_2\dots\forall x_l: \Big(\left(A(x_1, \ldots, x_l) \lor \phi(x_1, \ldots, x_l)\right) \land \left(B(x_1, \ldots, x_l) \lor \phi(x_1, \ldots, x_l)\right)\land \\
        &\left(A(x_1, \ldots, x_l) \lor B(x_1, \ldots, x_l)\right)\Big),
    \end{align*}
    and extend the weight function $\weight, \negweight$ to $A$ and $B$ by setting $\weight(A) = \weight(B) = \negweight(A) = 1$ and $\negweight(B) = - 1$.
    Then for every domain size $n$, it holds that
    \begin{equation*}
        \symwfomc(\Phi, n, \weight, \negweight) = \symwfomc(\Phi' \land \Upsilon, n, \weight, \negweight).
    \end{equation*}
\end{lemma}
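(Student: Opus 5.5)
The plan is to fix an interpretation $\mu'$ of the predicates of $\Phi'$ other than $A$ and $B$, and compare the weighted contributions of its extensions in both sentences, one tuple of constants at a time. Since $A$ and $B$ are fresh, any interpretation of $\Phi' \land \Upsilon$ splits into an interpretation of the original predicates together with independent truth values of $A(\bar a)$ and $B(\bar a)$ for each $\bar a \in \domain^l$. The truth of $\phi(\bar a)$ is determined by the original predicates alone, because $\phi$ is a subformula of $\Phi$ and does not mention $A$ or $B$. The weight therefore factorises as $\typeweight{\mu'}\cdot\prod_{\bar a}\weight_A(\bar a)\weight_B(\bar a)$, where $\weight_A(\bar a)\weight_B(\bar a)$ is the product of the weights of the chosen literals of $A(\bar a)$ and $B(\bar a)$.

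The key step is a case split on $\phi(\bar a)$ for each tuple $\bar a$.
\begin{itemize}
\item If $\mu' \models \phi(\bar a)$, then $\Upsilon$ reduces to $A(\bar a)\lor B(\bar a)$ at $\bar a$. The three admissible assignments $(A,B)\in\{(1,1),(1,0),(0,1)\}$ have weights $1$, $-1$ and $1$, which sum to $1$. Among them, the assignments with $A(\bar a)$ true have total weight $1 + (-1) = 0$, and the only one with $A(\bar a)$ false has weight $1$.
\item If $\mu' \not\models \phi(\bar a)$, then $\Upsilon$ forces $A(\bar a)$ and $B(\bar a)$ to be true, with weight $1$.
\end{itemize}
So in either case the surviving weight is concentrated entirely on $A(\bar a) \equiv \neg\phi(\bar a)$, with total weight $1$.

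The only delicate point is that $\Phi'$ depends on $A$, so the per-tuple sums cannot be taken independently of whether $\Phi'$ holds. I would handle this by a product expansion. Fix $\mu'$ and write the sum over all assignments of $A,B$ satisfying $\Upsilon$ as a sum over assignments $\alpha$ of $A$, with $\indicator{\mu'\cup\alpha \models \Phi'}$ times $\prod_{\bar a} c(\bar a,\alpha(\bar a))$. Here $c(\bar a, v)$ is the total weight of the admissible $B$-choices given $A(\bar a)=v$. By the case analysis, $c(\bar a,v)=\indicator{v = \neg\phi(\bar a)}$ in $\mu'$; the value $c=0$ arises as the cancellation $1-1$.

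It follows that only the unique $\alpha$ with $A \equiv \neg\phi$ contributes, and its contribution is exactly $\typeweight{\mu'}\cdot\indicator{\mu' \models \Phi}$. This holds because substituting $A\equiv\neg\phi$ into $\Phi'$ recovers $\Phi$, and $\Phi'$ does not otherwise mention $A$. Summing over $\mu'$ gives the equality. The case $l = 0$, where $A$ and $B$ are nullary, is covered by the same argument with a single empty tuple.
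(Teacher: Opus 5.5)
Your argument is correct and is essentially the standard cancellation proof of Beame et al.\ that the paper cites without reproducing: once the original predicates and $A$ are fixed, the two $B$-choices at any tuple with $A(\bar a)\wedge\phi(\bar a)$ cancel ($1-1=0$), so only $A\equiv\neg\phi$ (with $B$ true everywhere) survives, and this is in weight-preserving bijection with the models of $\Phi$. One small wording fix: $\mu'$ should range over interpretations of all predicates of $\Phi$ (equivalently, those of $\Phi'\wedge\Upsilon$ other than $A,B$), not just those of $\Phi'$. A predicate occurring in $\Phi$ only inside $\phi$ disappears from $\Phi'$, yet it still appears in $\Upsilon$ and in the Herbrand base, and it is needed to determine $\phi(\bar a)$.
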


Applying the above lemma to the formulas of the form \cref{eq:implication2} and its modulo counting counterpart, the resulting formulas are of the forms, 
\begin{equation}\label{eq:negation_elimination1}
\begin{aligned}
    &\forall x: \left(\tilde{P}_i(x) \lor \left(\exists^{= k_i} y: \psi_i(x,y)\right)\right), \\
    &\forall x: \left(\tilde{P}'_i(x) \lor \left(\exists^{= r'_i, k'_i} y: \psi'_i(x,y)\right)\right).
\end{aligned}
\end{equation}
Similarly, applying the lemma to the formulas of the form \cref{eq:implication4} and its modulo counting counterpart, the resulting formulas are of the forms,
\begin{equation}\label{eq:negation_elimination2}
\begin{aligned}
    &\left(\tilde{Q}_j() \lor \left(\exists^{= u_j} x: \phi_j(x)\right)\right), \\
    &\left(\tilde{Q}'_j() \lor \left(\exists^{= v'_j, u'_j} x: \phi'_j(x)\right)\right).
\end{aligned}
\end{equation}
Note that we can further rewrite \cref{eq:implication1,eq:implication3} into the same form as \cref{eq:negation_elimination1,eq:negation_elimination2}, respectively, by replacing $\neg P_i(x)$ and $\neg Q_j()$ with fresh predicates $\tilde{P}_i(x)$ and $\tilde{Q}_j()$.
Now, the original \ctwomod{} sentence is transformed into a sentence, where the counting and modulo counting subformulas only appear in the form of \cref{eq:negation_elimination1} and \cref{eq:negation_elimination2}, and the rest of the sentence is a \fotwo{} sentence.

\subsection{Shannon Expansion of Nullary Predicates}

Let us first handle the formulas of the form \cref{eq:negation_elimination2}.
Since $\tilde{Q}_j$ and $\tilde{Q}'_j$ are nullary predicates, one can apply Shannon expansion to eliminate them, which decomposes the WFOMC computation into a constant number of subproblems.
That is, consider all possible truth assignments to the nullary predicates $\tilde{Q}_j$ and $\tilde{Q}'_j$, and for each assignment, replace the corresponding nullary predicates with their assigned truth values in the sentence.
One can easily check that the resulting sentence can only contain unary counting and modulo counting subformulas of the form
\begin{equation*}
    \bigwedge_{j=1}^N \left(\exists^{= u_j} x: \phi_j(x)\right) \land \bigwedge_{j=1}^{N'} \left(\exists^{= v'_j, u'_j} x: \phi'_j(x)\right),
\end{equation*}
which aligns with the normal form in \cref{eq:sctwo_mod} (after axiomatizing the subformulas $\phi_j(x)$ and $\phi'_j(x)$ by fresh predicates, as described in \Cref{subsec:subformula_isolation}).

\subsection{Transforming Disjunctions over Counting and modulo counting Subformulas}

Now, the remaining formulas that are not yet in the normal form are of the forms \cref{eq:negation_elimination1}.
Before we perform the transformation, we axiomatize the subformulas $\psi_i(x,y)$ and $\psi'_i(x,y)$ by adding $\forall x \forall y: \left(R_i(x,y) \leftrightarrow \psi_i(x,y)\right)$ and $\forall x \forall y: \left(R'_i(x,y) \leftrightarrow \psi'_i(x,y)\right)$, where $R_i$ and $R'_i$ are fresh binary predicates, and then replace $\psi_i(x,y)$ and $\psi'_i(x,y)$ with $R_i(x,y)$ and $R'_i(x,y)$, respectively.
This yields the formulas $\forall x: \left(\tilde{P}_i(x) \lor \exists^{= k_i} y: R_i(x,y)\right)$ and $\forall x: \left(\tilde{P}'_i(x) \lor \exists^{= r'_i, k'_i} y: R'_i(x,y)\right)$.

For the formulas of the form $\forall x: \left(\tilde{P}_i(x) \lor \exists^{= k_i} y: R_i(x,y)\right)$, we can apply the following lemma from~\cite[Lemma~4]{kuzelka2021weighted} to transform them into the normal form:

\begin{lemma}[{~\cite[Lemma 4]{kuzelka2021weighted}}]
    \label{lemma:counting_reduction}
    Let $\Phi$ be a first-order sentence, $A$ a unary predicate, $R$ a binary predicate, and $U$ and $B$ fresh unary and binary predicates, respectively, that do not occur in $\Phi$, and let $k$ be a non-negative integer.
    Define
    \begin{align*}
        \Upsilon_1 &= \forall x\exists^{= k} y: B(x,y), \\
        \Upsilon_2 &= (|U| = k), \\
        \Upsilon_3 &= \forall x\forall y: \left(A(x) \land B(x,y) \to U(y)\right), \\
        \Upsilon_4 &= \forall x\forall y: \neg A(x) \rightarrow \left(B(x,y) \leftrightarrow R(x,y)\right).
    \end{align*}
    Then for every domain size $n$, it holds that
    \begin{equation*}
        \symwfomc(\Phi \land \forall x: \left(A(x) \lor \exists^{= k} y: R(x,y)\right), n, \weight, \negweight) = \frac{1}{\binom{n}{k}} \symwfomc(\Phi \land \Upsilon_1 \land \Upsilon_2 \land \Upsilon_3\land \Upsilon_4, n, \weight, \negweight).
    \end{equation*}
\end{lemma}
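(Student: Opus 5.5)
The plan is a direct combinatorial argument: every model of the left-hand sentence extends to exactly $\binom{n}{k}$ models of the right-hand sentence, and each extension has the same weight as the model it extends. Following the convention of this appendix, the fresh predicates $U$ and $B$ get unit weights, $\weight(U)=\negweight(U)=\weight(B)=\negweight(B)=1$. As assumed throughout the appendix, $k\le n$, so $\binom{n}{k}\neq 0$.

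Write $\Phi_L = \Phi \land \forall x\colon \left(A(x)\lor \exists^{=k} y\colon R(x,y)\right)$ and $\Phi_R = \Phi\land\Upsilon_1\land\Upsilon_2\land\Upsilon_3\land\Upsilon_4$. Every interpretation $\mu$ of $\Phi_R$ splits uniquely as $\mu = \mu_0\cup\mu_U\cup\mu_B$, where $\mu_0$ interprets the predicates of $\Phi_L$ and $\mu_U$, $\mu_B$ interpret $U$ and $B$. Since the fresh predicates have unit weight, $\typeweight{\mu}=\typeweight{\mu_0}$. It therefore suffices to prove two things. First, if $\mu\models\Phi_R$ then $\mu_0\models\Phi_L$. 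Second, for each $\mu_0\models\Phi_L$, the number of pairs $(\mu_U,\mu_B)$ with $\mu_0\cup\mu_U\cup\mu_B\models\Phi_R$ is exactly $\binom{n}{k}$. Summing weights then gives $\symwfomc(\Phi_R,n,\weight,\negweight)=\binom{n}{k}\,\symwfomc(\Phi_L,n,\weight,\negweight)$.

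For the projection direction, take $\mu\models\Phi_R$ and any element $a$. If $A(a)$ holds, the disjunct $A(a)$ is satisfied. If $\neg A(a)$ holds, $\Upsilon_4$ forces $R(a,b)\leftrightarrow B(a,b)$ for all $b$. Then $\Upsilon_1$ gives exactly $k$ elements $b$ with $R(a,b)$. Since $\Phi$ does not mention $U$ or $B$, it holds in $\mu_0$, so $\mu_0\models\Phi_L$.

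For the extension count, fix $\mu_0\models\Phi_L$. By $\Upsilon_2$, $U$ must be interpreted as a $k$-subset $S$ of the domain, which gives $\binom{n}{k}$ choices. Fix $S$; I claim $B$ is then uniquely determined.
\begin{itemize}
\item For $a$ with $\neg A(a)$, $\Upsilon_4$ forces $B(a,\cdot)=R(a,\cdot)$. This row has exactly $k$ elements because $\mu_0\models\Phi_L$, so $\Upsilon_1$ holds for $a$, and $\Upsilon_3$ is vacuous for $a$.
\item For $a$ with $A(a)$, $\Upsilon_4$ imposes nothing. $\Upsilon_3$ requires $B(a,\cdot)\subseteq S$, and $\Upsilon_1$ requires $|B(a,\cdot)|=k=|S|$. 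Hence $B(a,\cdot)=S$, and this choice satisfies all constraints.
\end{itemize}
So each $S$ yields exactly one valid $\mu_B$, giving $\binom{n}{k}$ extensions in total.

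The only delicate step is the case $A(a)$: the fact that the sizes of $U$ and $B(a,\cdot)$ are both exactly $k$ is what makes the extension unique instead of merely possible. The rest is bookkeeping, namely checking that the weights factor because $U$ and $B$ have unit weights, and that $\Phi$ is untouched. Dividing both sides by $\binom{n}{k}$ gives the stated identity.
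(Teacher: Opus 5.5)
Your proof is correct and follows the same two-direction argument the paper gives for its modulo analogue, \Cref{lemma:counting_mod_reduction}, which it says mirrors this cited lemma. As there, models of the right-hand sentence are projected onto the original vocabulary using $\Upsilon_4$ and $\Upsilon_1$, and each model of the left-hand sentence is extended by choosing a $k$-subset for $U$ and setting $B(a,\cdot)=U$ when $A(a)$ holds and $B(a,\cdot)=R(a,\cdot)$ otherwise. Your explicit argument that $B$ is forced once $U$ is fixed ($B(a,\cdot)\subseteq S$ and $|B(a,\cdot)|=|S|$), together with the unit-weight bookkeeping, justifies the ``exactly $\binom{n}{k}$'' count that the paper's construction only asserts.
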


The formulas of the form $\forall x: \left(\tilde{P}'_i(x) \lor \exists^{= r'_i, k'_i} y: R'_i(x,y)\right)$ can be handled analogously, via the following lemma, whose proof essentially follows the same line of reasoning as that of \Cref{lemma:counting_reduction}.

\begin{lemma}
    \label{lemma:counting_mod_reduction}
    Let $\Phi$ be a first-order sentence, $A$ a unary predicate, $R$ a binary predicate, and $U$ and $B$ fresh unary and binary predicates, respectively, that do not occur in $\Phi$, and let $r$ and $k$ be non-negative integers with $0 \le r < k$.
    Define
    \begin{align*}
        \Upsilon_1 &= \forall x \exists^{=r, k} y : B(x, y), \\
        \Upsilon_2 &= (|U| = r), \\
        \Upsilon_3 &= \forall x \forall y : (A(x) \land B(x, y) \to U(y)), \\
        \Upsilon_4 &= \forall x \forall y : \neg A(x) \to (B(x, y) \leftrightarrow R(x, y)).
    \end{align*}
    Then for every domain size $n$, it holds that
    \begin{equation*}
        \symwfomc(\Phi \land \forall x : \left(A(x) \lor \exists^{=r,k} y : R(x, y)\right), n, \weight, \negweight) = \frac{1}{\binom{n}{r}} \symwfomc(\Phi \land \Upsilon_1 \land \Upsilon_2 \land \Upsilon_3 \land \Upsilon_4, n, \weight, \negweight).
    \end{equation*}
\end{lemma}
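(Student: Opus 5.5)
The plan is to mirror the proof of \Cref{lemma:counting_reduction}: set up a weight-preserving many-to-one correspondence between models of the right-hand sentence and models of the left-hand sentence. For a model $\mu$ of $\Phi \land \forall x: (A(x) \lor \exists^{=r,k} y: R(x,y))$ over $[n]$, we count its extensions to the fresh predicates $U$ and $B$ that satisfy $\Upsilon_1\land\Upsilon_2\land\Upsilon_3\land\Upsilon_4$. Since $U$ and $B$ are fresh and carry unit weights, each extension has weight $\typeweight{\mu}$. It therefore suffices to prove two things: every model of the right-hand sentence restricts to a model of the left-hand sentence, and every model $\mu$ of the left-hand sentence has exactly $\binom{n}{r}$ extensions.

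\textbf{Soundness.} Take a model of the right-hand sentence and an element $a$ with $\neg A(a)$. By $\Upsilon_4$, $B(a,\cdot)$ coincides with $R(a,\cdot)$, so by $\Upsilon_1$ the number of $y$ with $R(a,y)$ is congruent to $r$ modulo $k$. Hence the restriction satisfies $\forall x: (A(x)\lor \exists^{=r,k} y: R(x,y))$. It satisfies $\Phi$ trivially, because $\Phi$ does not mention $U$ or $B$.

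\textbf{Counting extensions.} Fix $\mu$ and choose $U$ to be any set of exactly $r$ elements; there are $\binom{n}{r}$ such choices, using $r<k$ and the standing assumption $r<n$. For $a$ with $\neg A(a)$, the row $B(a,\cdot)$ is forced to equal $R(a,\cdot)$, and $\Upsilon_1$ holds for that row because $\mu$ satisfies the original disjunction at $a$. For $a$ with $A(a)$, $\Upsilon_3$ requires $B(a,\cdot)\subseteq U$, and $\Upsilon_1$ requires $|B(a,\cdot)| \equiv r \pmod k$.

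\textbf{Main obstacle.} The difficulty is the rows with $A(a)$: in the non-modular lemma the row is forced to be exactly $U$, but here any subset of $U$ whose size is $\equiv r \pmod k$ is allowed. Since $|U| = r < k$, any subset of $U$ has size between $0$ and $r$, and the only value in that range congruent to $r$ modulo $k$ is $r$ itself. So $B(a,\cdot)=U$ is again forced. It follows that, for each choice of $U$, the extension is unique, giving exactly $\binom{n}{r}$ extensions per model. The case $r=0$ is consistent with this: $U=\emptyset$, $B(a,\cdot)=\emptyset$, and $\binom{n}{0}=1$. Summing over models and dividing by $\binom{n}{r}$ yields the identity.
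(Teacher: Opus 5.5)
Your proof is correct and takes the same route as the paper. Soundness comes from $\Upsilon_4$ and $\Upsilon_1$ on the rows with $\neg A(a)$, and the count of $\binom{n}{r}$ extensions per model comes from choosing $U$, then setting $B(a,\cdot)=U$ on $A$-rows and $B(a,\cdot)=R(a,\cdot)$ otherwise. Your version is slightly more complete than the paper's, which only exhibits one extension per choice of $U$: your observation that $B(a,\cdot)\subseteq U$, $|U|=r<k$ and $|B(a,\cdot)|\equiv r \pmod{k}$ force $B(a,\cdot)=U$ is exactly what rules out any other extensions.
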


\begin{proof}
We prove the equivalence in both directions.

$(\Leftarrow)$ First, we show that every model of $\Phi \land \Upsilon_1 \land \Upsilon_2 \land \Upsilon_3 \land \Upsilon_4$ satisfies $\Phi \land \forall x : (A(x) \lor \exists^{=r,k} y : R(x, y))$ by contradiction.
Assume there exists a model $\mu$ 
such that $\mu \models \Upsilon_1 \land \Upsilon_2 \land \Upsilon_3 \land \Upsilon_4$ 
but $\mu \not\models \forall x : (A(x) \lor \exists^{=r,k} y : R(x, y))$.
This assumption implies that there must be an element $t \in \Delta$ for which $\mu \models \neg(A(t) \lor \exists^{=r,k} y : R(t, y))$, which is equivalent to $\mu \models \neg A(t)$ and $\mu \models \neg(\exists^{=r,k} y : R(t, y))$.
From the former and $\Upsilon_4$, we have that $\mu \models \forall y : (B(t, y) \leftrightarrow R(t, y))$.
This combined with $\Upsilon_1$ ,implies that $\mu \models \exists^{=r,k} y : R(t, y)$, leading to a contradiction.

$(\Rightarrow)$
To complete the proof, we show that for every model of $\Phi \land \forall x : (A(x) \lor \exists^{=r,k} y : R(x, y))$, there are exactly $\binom{n}{r}$ corresponding models of $\Phi \land \Upsilon_1 \land \Upsilon_2 \land \Upsilon_3 \land \Upsilon_4$.
Let $\mu$ be any model satisfying $\forall x : (A(x) \lor \exists^{=r,k} y : R(x, y))$. 
We can extend $\mu$ to a model $\mu'$ over the language augmented with the predicates $U$ and $B$ such that $\mu' \models \Phi \land \Upsilon_1 \land \Upsilon_2 \land \Upsilon_3 \land \Upsilon_4$ and its projection back to the original language is $\mu$.
The interpretation of the new predicates $U$ and $B$ in $\mu'$ is defined as follows.
First, we choose exactly $r$ elements from the domain $\domain$ to be in $U$.
Then for every pair of domain elements $(t_1, t_2)$, the ground atom $B(t_1, t_2)$ is true in $\mu'$ if and only if either $\mu \models A(t_1)$ and $t_2$ is in $U$, or $\mu \models \neg A(t_1)$ and $\mu \models R(t_1, t_2)$.
We show that $\mu'$ satisfies each of $\Upsilon_1$, $\Upsilon_2$, $\Upsilon_3$, and $\Upsilon_4$.
Clearly, $\mu'$ satisfies $\Upsilon_2$ by construction.
It is also easy to check that $\mu'$ satisfies $\Upsilon_3$ and $\Upsilon_4$.
We now verify that $\mu'$ satisfies $\Upsilon_1$.
Consider any element $t \in \Delta$. If $\mu \models A(t)$, then by our construction $B(t, y)$ is true for exactly the $r$ elements in $U$, and thus $\mu' \models \exists^{=r,k} y : B(t, y)$.
If, on the other hand, $\mu \models \neg A(t)$, then the interpretation of $B(t, y)$ in $\mu'$ coincides with that of $R(t, y)$ in $\mu$, which satisfies $\exists^{=r,k} y : R(t, y)$ ,meaning that $\exists^{=r,k} y : B(t, y)$ also holds.
Thus, $\mu'$ satisfies $\Upsilon_1$, and $\mu' \models \Phi \land \Upsilon_1 \land \Upsilon_2 \land \Upsilon_3 \land \Upsilon_4$.
Recall that there are $\binom{n}{r}$ ways to choose the interpretation of $U$.
For each such choice, we obtain a distinct model $\mu'$.
Therefore, the \wfomc{} of $\Phi \land \Upsilon_1 \land \Upsilon_2 \land \Upsilon_3 \land \Upsilon_4$ counts each model of $\Phi \land \forall x : (A(x) \lor \exists^{=r,k} y : R(x, y))$ exactly $\binom{n}{r}$ times, from which the factor $\frac{1}{\binom{n}{r}}$ in the equivalence follows.
\end{proof}



Applying \Cref{lemma:counting_reduction,lemma:counting_mod_reduction} consecutively to all formulas of the form $\forall x: \left(\tilde{P}_i(x) \lor \exists^{= k_i} y: R_i(x,y)\right)$ and $\forall x: \left(\tilde{P}'_i(x) \lor \exists^{= r'_i, k'_i} y: R'_i(x,y)\right)$ followed by Lagrange interpolation to eliminate the cardinality constraints and Skolemization to eliminate the existential quantifiers, we can obtain a \ctwomod{} sentence in the normal form given by \cref{eq:sctwo_mod}.

Finally, we note that if we do not transform $\exists^{\ge k}$ (or $\exists^{\ge r', k'}$) quantifiers into $\exists^{\le k}$ (or $\exists^{\le r', k'}$) quantifiers at the beginning of the reduction, the above reduction can still be applied with slightly modified transformations in \Cref{lemma:counting_reduction,lemma:counting_mod_reduction}.
Specifically, we have the following variant of these two lemmas, whose proofs are analogous to those of \Cref{lemma:counting_reduction,lemma:counting_mod_reduction} and are therefore omitted.

\begin{lemma}
    Let $\Phi$ be a first-order sentence, $A$ a unary predicate, $R$ a binary predicate, and $B$ a fresh binary predicate that does not occur in $\Phi$, and let $r$ and $k$ be non-negative integers with $0 \le r < k$.
    Define
    \begin{align*}
        \Upsilon_1 &= \forall x\exists^{\le k} y: B(x,y), \\
        \Upsilon_2 &= \forall x: \left(A(x) \to \left(\forall y: \neg B(x,y)\right)\right), \\
        \Upsilon_3 &= \forall x\forall y: \neg A(x) \rightarrow \left(B(x,y) \leftrightarrow R(x,y)\right).
    \end{align*}
    Then for every domain size $n$, it holds that
    \begin{equation*}
        \symwfomc(\Phi \land \forall x: \left(A(x) \lor \exists^{\le k} y: R(x,y)\right), n, \weight, \negweight) = \symwfomc(\Phi \land \Upsilon_1 \land \Upsilon_2 \land \Upsilon_3, n, \weight, \negweight).
    \end{equation*}
    For modulo counting quantifiers, define $\Upsilon_1' = \forall x \exists^{\le r, k} y : B(x, y)$.Then, for every domain size $n$, it holds that
    \begin{equation*}
        \symwfomc(\Phi \land \forall x : (A(x) \lor \exists^{\le r,k} y : R(x, y)), n, \weight, \negweight) = \symwfomc(\Phi \land \Upsilon_1' \land \Upsilon_2 \land \Upsilon_3, n, \weight, \negweight).
    \end{equation*}
\end{lemma}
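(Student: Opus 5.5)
The plan is to exhibit a weight-preserving bijection between the models of $\Phi \land \forall x: (A(x) \lor \exists^{\le k} y: R(x,y))$ and the models of $\Phi \land \Upsilon_1 \land \Upsilon_2 \land \Upsilon_3$. The bijection is simply projection onto the original vocabulary, i.e., forgetting $B$. This follows the structure of the proof of \Cref{lemma:counting_mod_reduction}. The difference is that no auxiliary predicate $U$ is needed, so there is no $\binom{n}{k}$ overcounting. As everywhere in this appendix, the fresh predicate $B$ gets unit weights $\weight(B)=\negweight(B)=1$. Hence the weight of a model of the right-hand sentence equals the weight of its projection, and a bijection immediately gives equality of the two \wfomc{} values.

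First I would show that $B$ is uniquely determined by $A$ and $R$ in any model $\mu'$ of $\Upsilon_2 \land \Upsilon_3$. Fix $t\in\domain$.
\begin{itemize}
\item If $\mu'\models A(t)$, then $\Upsilon_2$ forces $B(t,s)$ to be false for every $s$.
\item If $\mu'\models\neg A(t)$, then $\Upsilon_3$ forces $B(t,s)\leftrightarrow R(t,s)$ for every $s$.
\end{itemize}
So the projection map is injective.

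Second, I would show that the projection of a model lands in the left-hand model set (soundness). Suppose $\mu'\models\Upsilon_1\land\Upsilon_3$ and take $t$ with $\neg A(t)$. Then the $R$-successors of $t$ coincide with its $B$-successors, and by $\Upsilon_1$ there are at most $k$ of them. Hence $A(t)\lor\exists^{\le k}y: R(t,y)$ holds for every $t$. Since $\Phi$ does not mention $B$, $\Phi$ is preserved under projection.

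Third, I would show surjectivity. Given a model $\mu$ of the left-hand sentence, define $B(t,s)$ to be true iff $\mu\models\neg A(t)\land R(t,s)$. Then $\Upsilon_2$ and $\Upsilon_3$ hold by construction. For $\Upsilon_1$, an element $t$ with $A(t)$ has $0\le k$ $B$-successors. An element $t$ with $\neg A(t)$ has exactly as many $B$-successors as $R$-successors, which is at most $k$ by the hypothesis on $\mu$.

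The modulo version with $\Upsilon_1'$ uses the same three steps, replacing ``at most $k$'' by ``remainder modulo $k$ at most $r$''. The only point needing a check, and the one step I would treat as the potential obstacle, is the case $A(t)$. There $t$ has $0$ $B$-successors, and $\remaind_k(0)=0\le r$ holds for every $r$, so $\Upsilon_1'$ is still satisfied. This is exactly why the ``$\le$'' variants need no $U$ predicate, whereas the ``$=$'' variants in \Cref{lemma:counting_reduction,lemma:counting_mod_reduction} do. Everything else is routine.
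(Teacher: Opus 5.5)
Your proof is correct and follows the argument the paper has in mind. The paper omits this proof as analogous to \Cref{lemma:counting_mod_reduction}: show that every model of the right-hand sentence satisfies the left-hand one, and that every left-hand model extends to the right-hand side. Here $\Upsilon_2\land\Upsilon_3$ determine $B$ uniquely, so the multiplicity $\binom{n}{r}$ becomes $1$, and $A$-elements with no $B$-successors satisfy $\Upsilon_1$ or $\Upsilon_1'$ because $0\le k$ and $\remaind_k(0)=0\le r$. Your explicit injectivity step and your note that $B$ must carry unit weights are details the paper's template argument leaves implicit.
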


\section{Additional Experimental Results}

We provide additional experimental results in this section, including correctness validation for our algorithm and peak memory usage results.

\subsection{Correctness Validation}\label{app:correctness}

We validate \ouralgo{} on three benchmark families 
chosen to cover the main logical settings studied in the paper: 
$k$-regular graphs for counting quantifiers, 
$r$-mod-$k$-regular graphs for modulo counting quantifiers, 
and $m$-odd-degree graphs for modulo counting combined with global cardinality constraints. 
Since independent verification is feasible only on small instances, 
we restrict these experiments to small domains and compare against propositional model counters.

The results are shown in \Cref{fig:check_kregular,fig:check_modregular,fig:check_odd_degree}.
For all tested instances, the counts returned by \ouralgo{} (\Ours{})
agree with the exact counts returned by \Ganak{} and fall within the $95\%$ confidence intervals returned by \ApproxMC{}.

\begin{figure}[tbp]
    \centering
    \begin{subfigure}[b]{0.32\textwidth}
        \includegraphics[width=\linewidth]{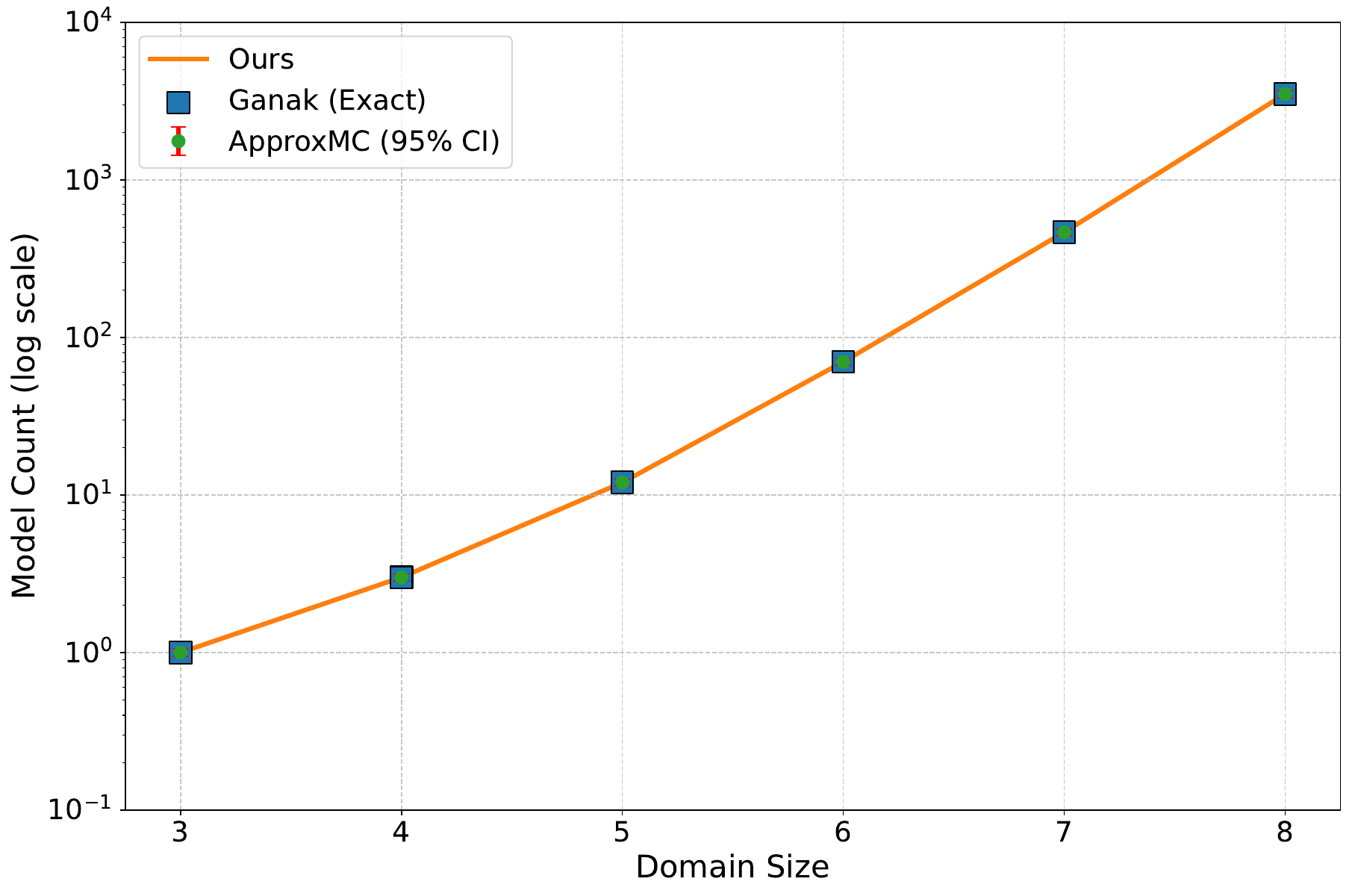}
        \caption{2-regular graphs}
    \end{subfigure}
    \begin{subfigure}[b]{0.32\textwidth}
        \includegraphics[width=\linewidth]{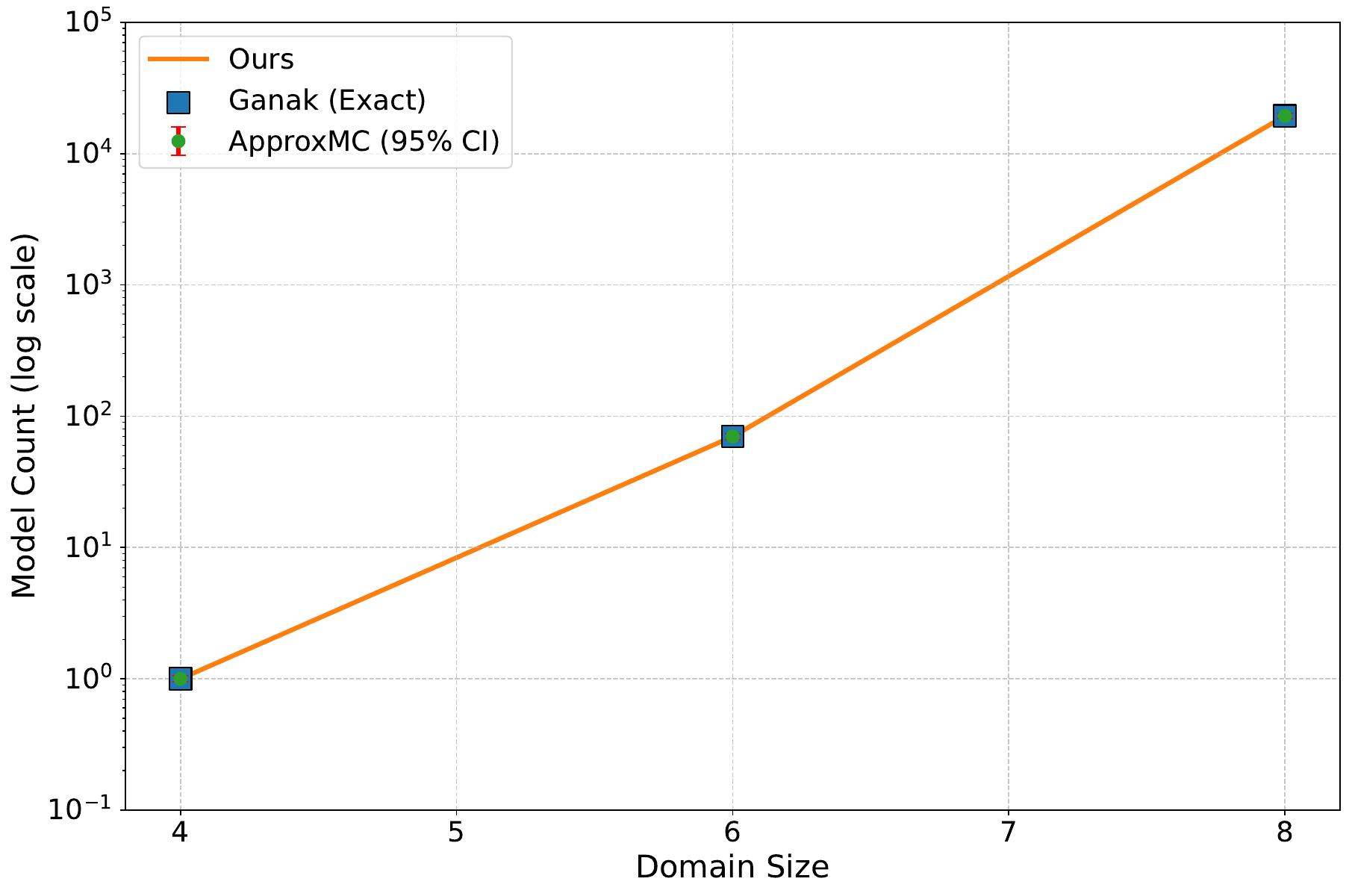}
        \caption{3-regular graphs}
    \end{subfigure}
    \begin{subfigure}[b]{0.32\textwidth}
        \includegraphics[width=\linewidth]{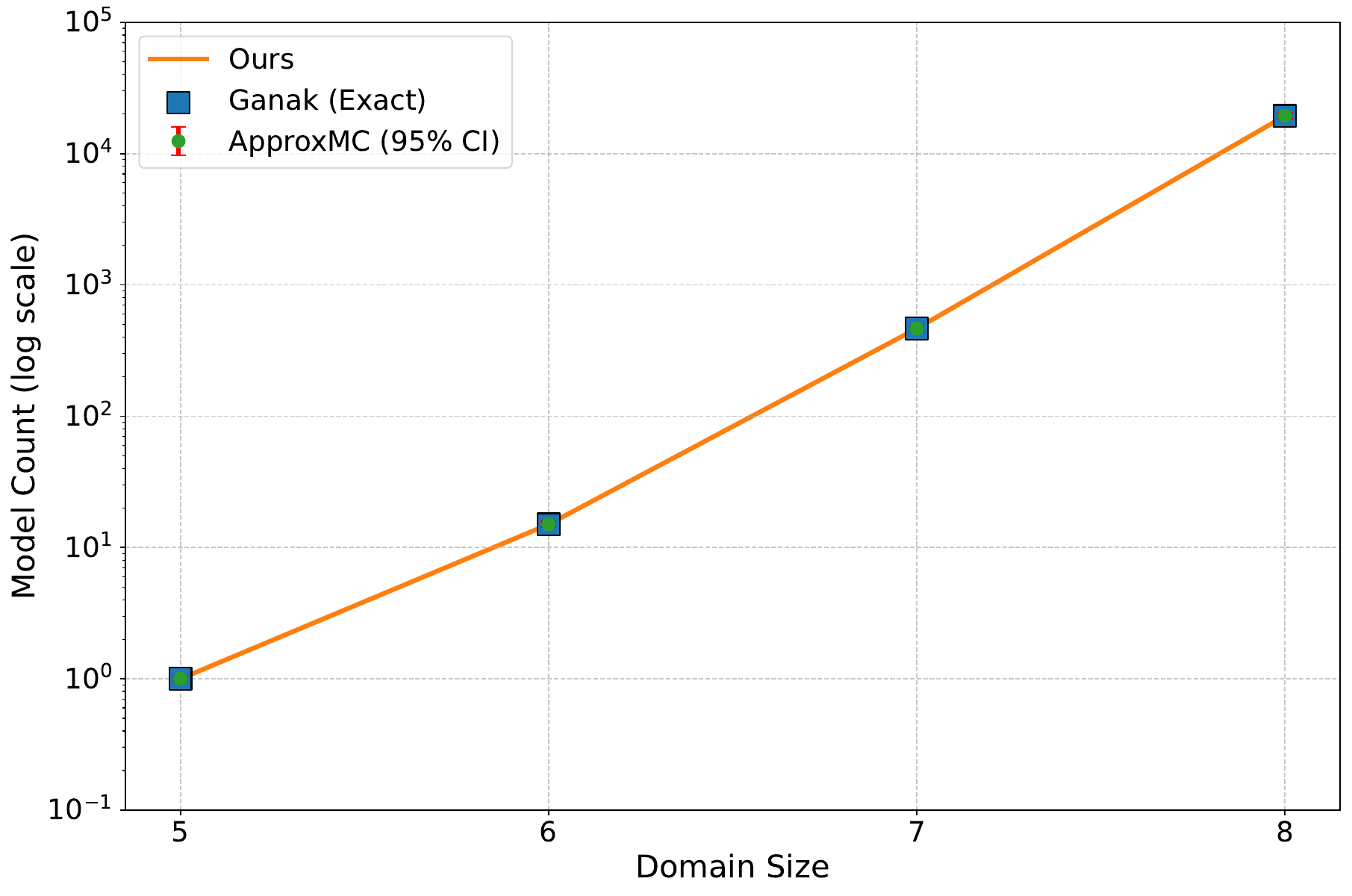}
        \caption{4-regular graphs}
    \end{subfigure}
    \caption{
    Correctness validation for counting (a) 2-regular, (b) 3-regular, and (c) 4-regular graphs. 
    }\label{fig:check_kregular}
\end{figure}

\begin{figure}[tbp]
    \centering
    \begin{subfigure}[b]{0.32\textwidth}
        \includegraphics[width=\linewidth]{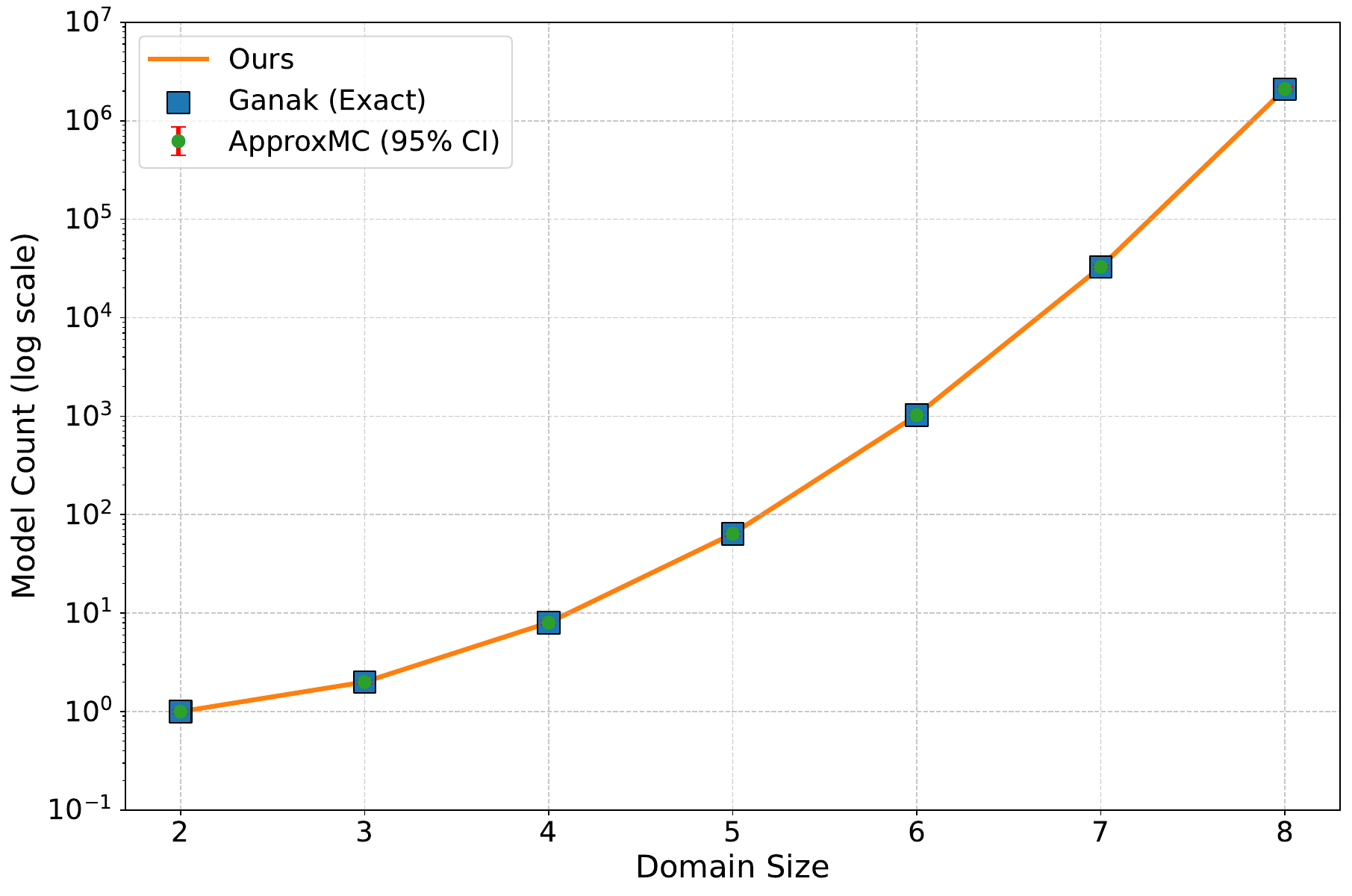}
        \caption{0-mod-2-regular graphs}
    \end{subfigure}
    \begin{subfigure}[b]{0.32\textwidth}
        \includegraphics[width=\linewidth]{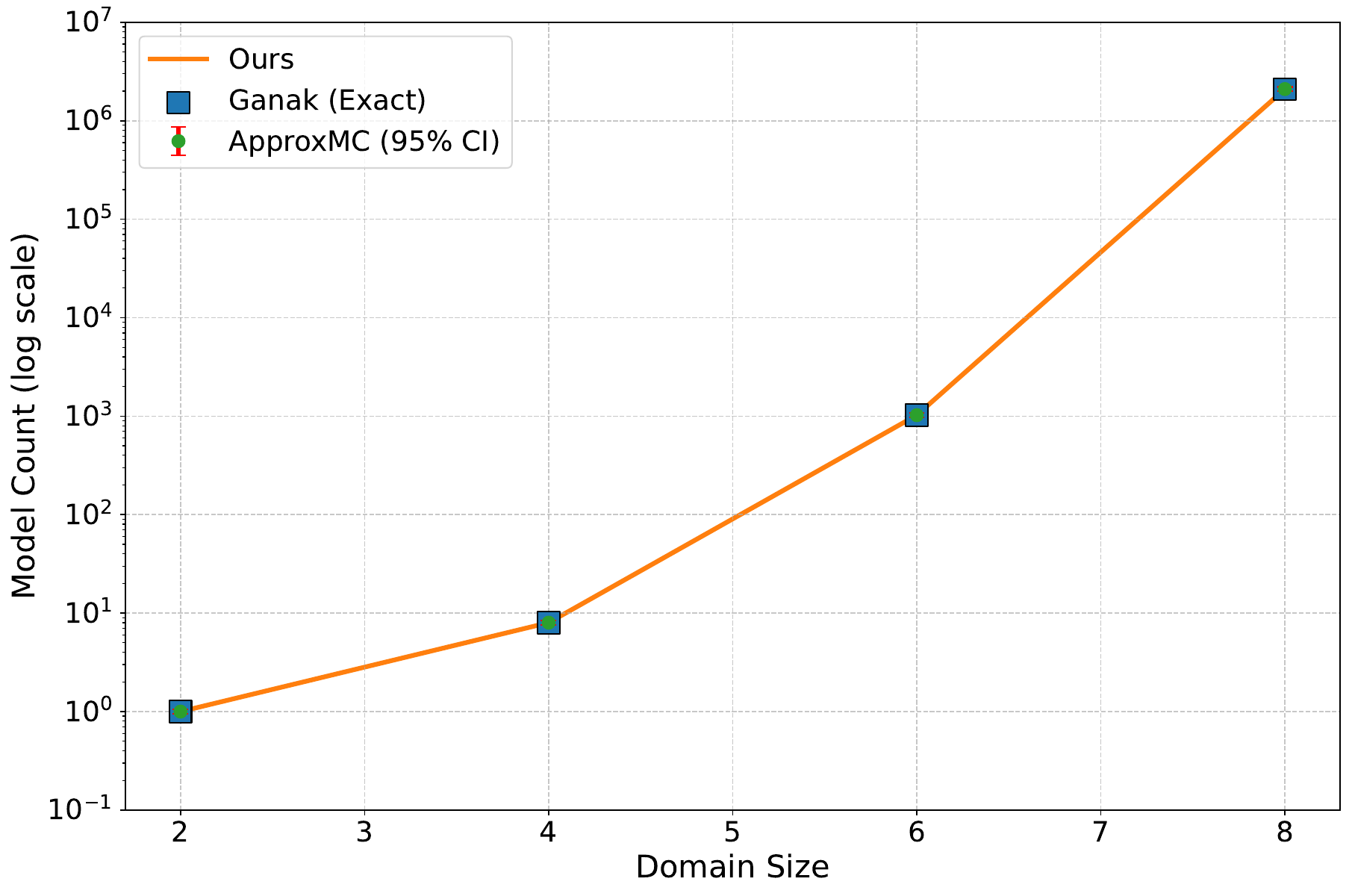}
        \caption{1-mod-2-regular graphs}
    \end{subfigure}
    \begin{subfigure}[b]{0.32\textwidth}
        \includegraphics[width=\linewidth]{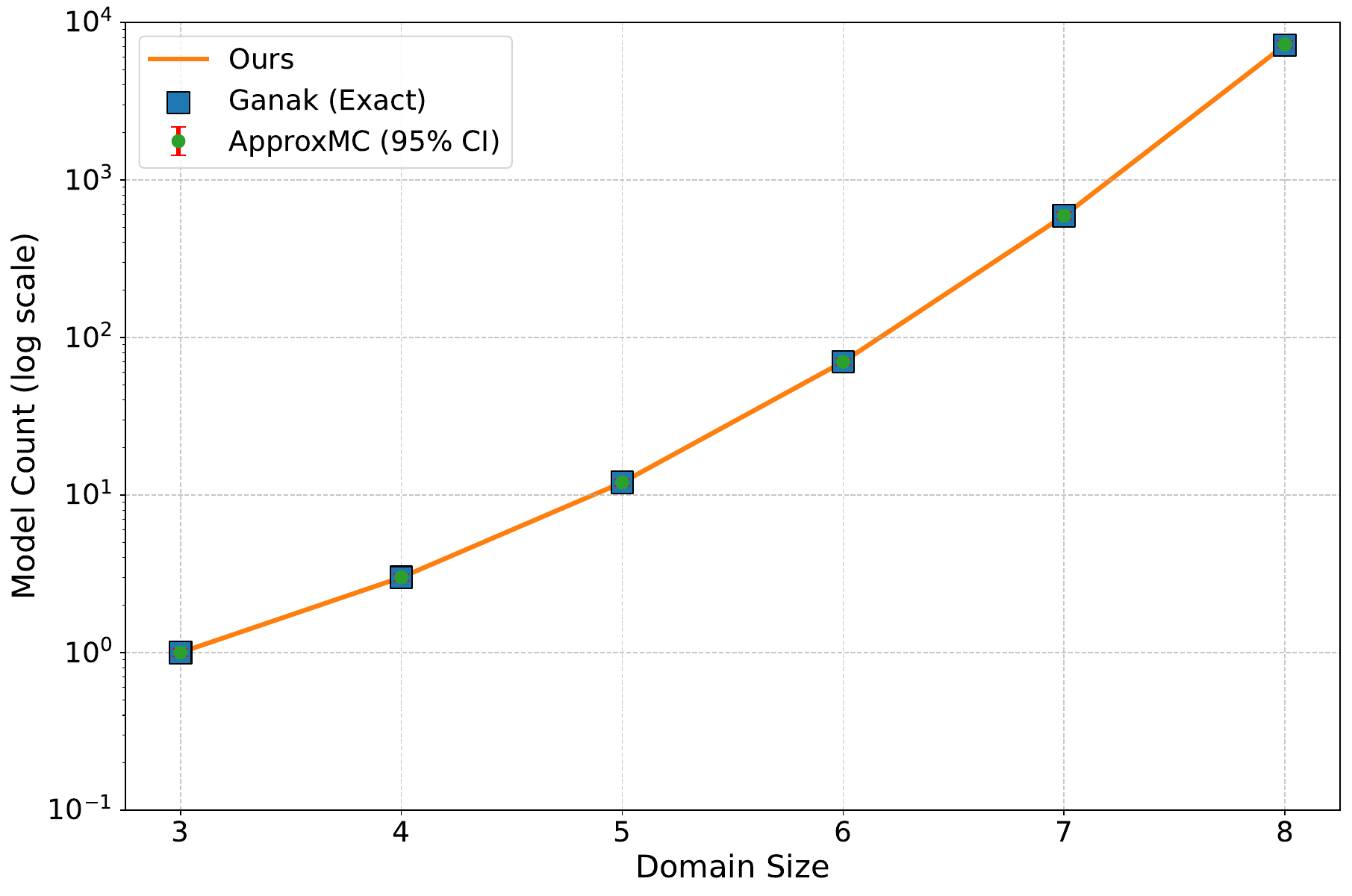}
        \caption{2-mod-4-regular graphs}
    \end{subfigure}
    \caption{
    Correctness validation for counting (a) 0mod2-regular, (b) 1mod2-regular, and (c) 2mod4-regular graphs. 
    }\label{fig:check_modregular}
\end{figure}

\begin{figure}[tbp]
    \centering
    \captionsetup[subfigure]{justification=centering}
    \begin{subfigure}[b]{0.32\textwidth}
        \includegraphics[width=\linewidth]{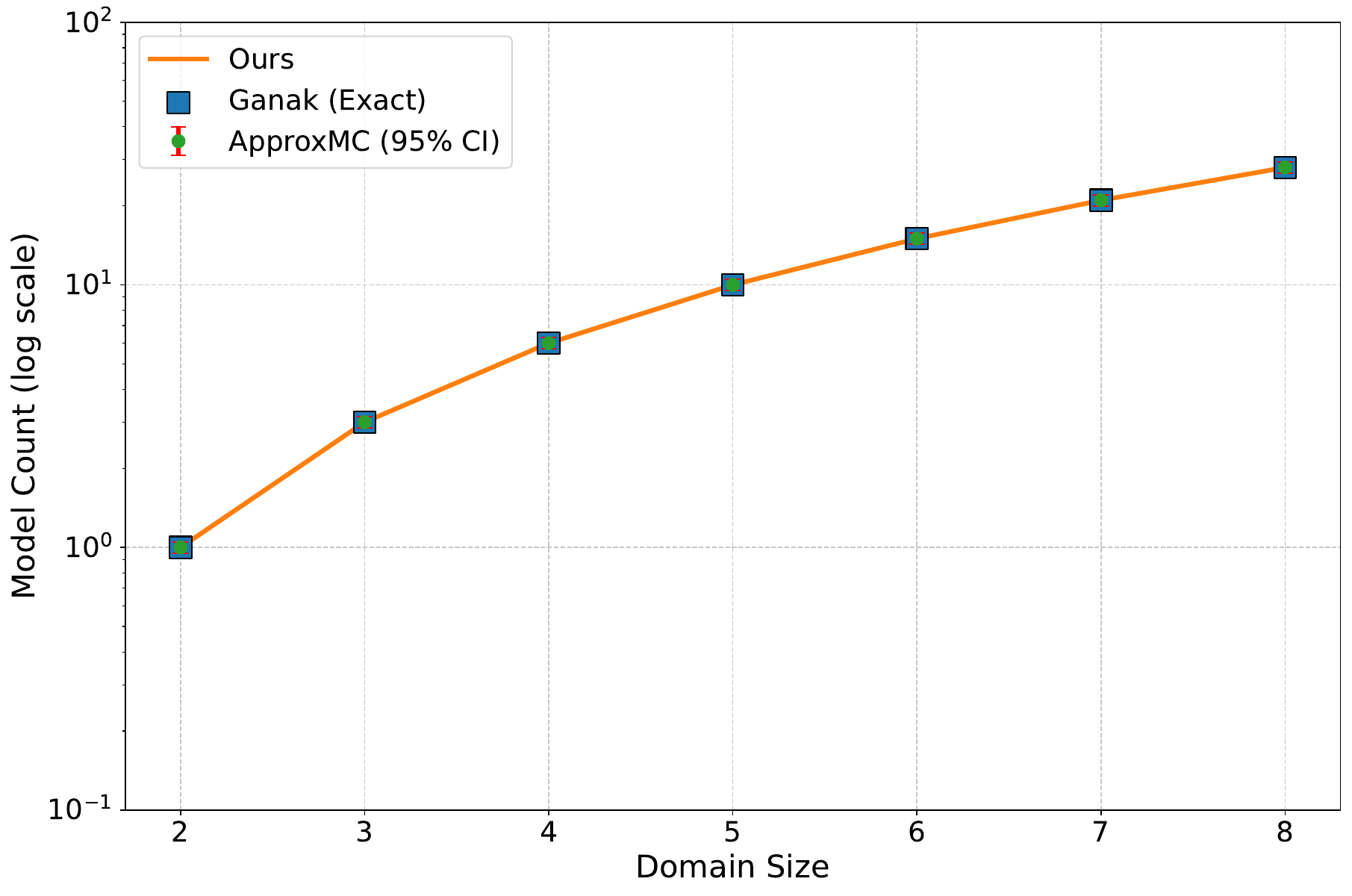}
        \caption{Varying domain size $n$ \\($m=2, k=1$)}
    \end{subfigure}
    \begin{subfigure}[b]{0.32\textwidth}
        \includegraphics[width=\linewidth]{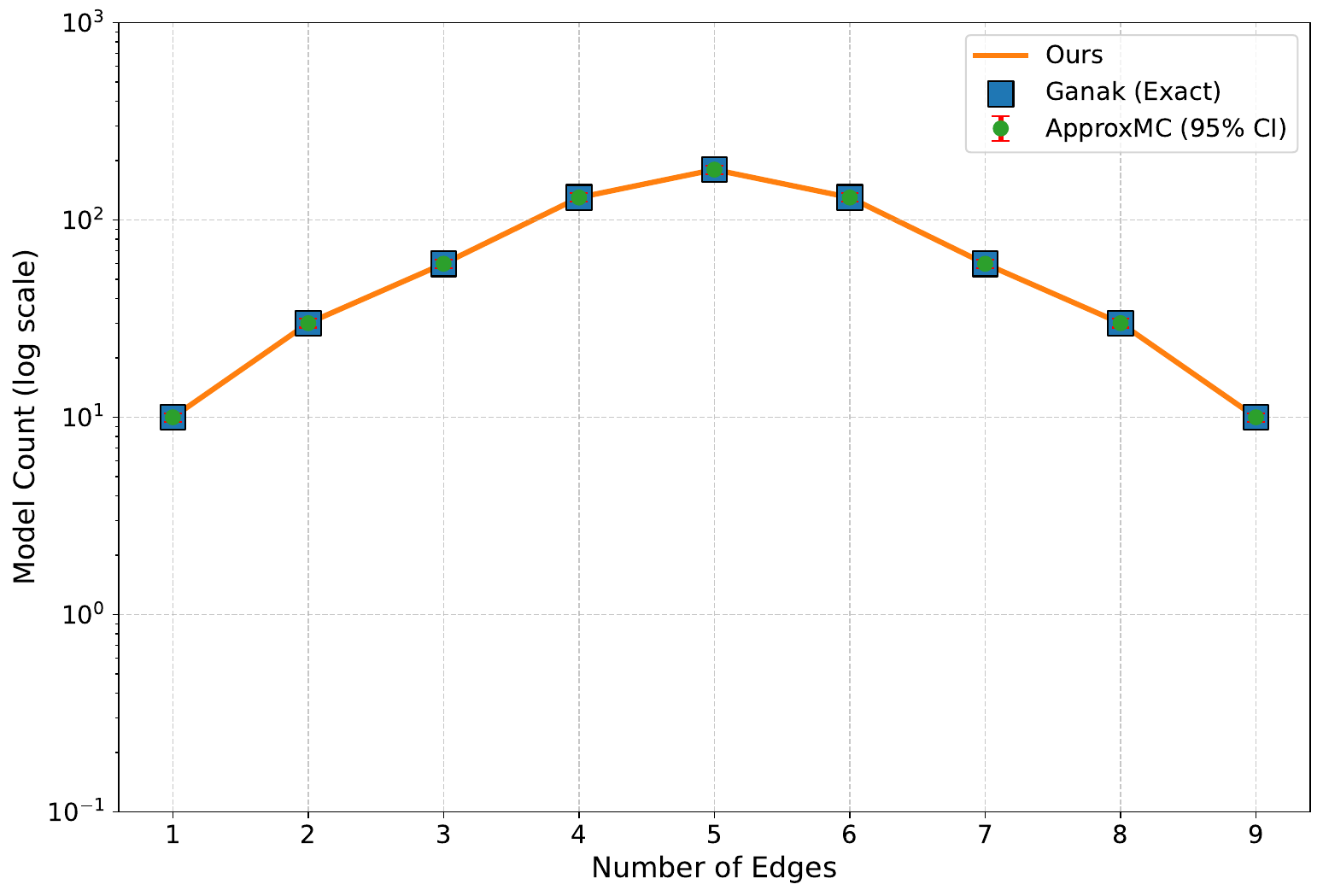}
        \caption{Varying number of edges $k$ \\ ($n=5, m=2$)}
    \end{subfigure}
    \begin{subfigure}[b]{0.32\textwidth}
        \includegraphics[width=\linewidth]{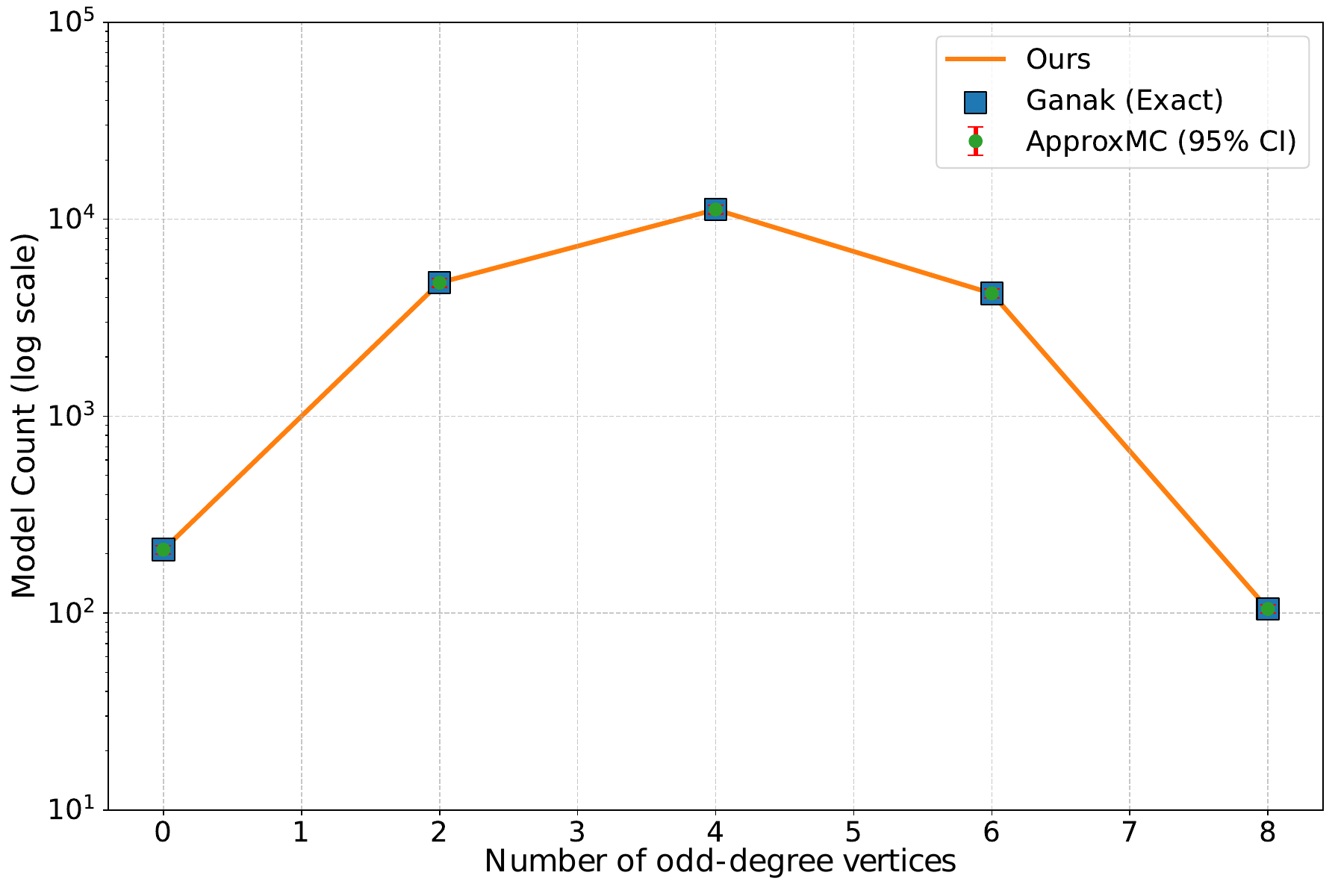}
        \caption{Varying number of odd-degree \\ vertices  $m$ ($n=8, k=3$)}
    \end{subfigure}
    \caption{
        Correctness validation for the $m$-odd-degree graph counting task from \Cref{sec:ctwomod_performance}. 
    }\label{fig:check_odd_degree}
\end{figure}

\subsection{Peak Memory Results}\label{app:memory_results}

Since \ouralgo{} is a typical dynamic programming algorithm, one would wonder about its memory usage as the domain size grows.
In this section, we report the peak memory usage of \ouralgo{} and compare it against baselines.

We measure peak memory usage using \texttt{psutil} in a dedicated monitoring thread, 
which periodically queried the target worker process via \texttt{memory\_full\_info().uss} and 
fell back to \texttt{rss} when \texttt{uss} was unavailable. 
The monitor tracked the worker process together with all of its child processes and
recorded the maximum aggregate memory usage observed over time. 
Thus, the reported peak memory excludes the parent controller process itself, 
but includes any subprocesses or solver processes spawned by the worker process. 
Since parsing, domain construction, grounding, preprocessing, 
and WFOMC execution were all carried out within the same worker process, 
the reported peak memory also includes these stages. Memory usage was sampled every 0.05 seconds throughout execution.

We report memory results on three representative \ctwo{} benchmark families from \Cref{sec:c2_benchmarks}: 
$k$-regular graphs, 
$k$-regular $l$-colored graphs, and 
$k$-regular digraphs. 
\Cref{fig:memory_kregular,fig:memory_k_regular_l_colored,fig:memory_k_regular_digraphs} show the results for these benchmarks, respectively.

Across all benchmark families, the same qualitative pattern emerges:
\ouralgo{} consistently maintains a relatively low peak memory usage as the domain size grows, while \textsc{RECURSIVE} shows a much steeper growth in memory usage, and \Fast{} falls in between, with higher memory usage than \ouralgo{} but still significantly below \textsc{RECURSIVE}.
The reason for the higher memory usage of both \textsc{RECURSIVE} and \Fast{} is that they both use the reduction of eliminating counting quantifiers via cardinality constraints.
Cardinality constraints require either multiple calls to the \wfomc{} solver with different weights for interpolation, or a single \wfomc{} call with symbolic weights, both of which can lead to increased memory usage due to the need to store multiple intermediate results or larger symbolic representations.

\begin{figure}[tbp] 
    \centering
    \begin{subfigure}[b]{0.32\textwidth}
        \includegraphics[width=\linewidth]{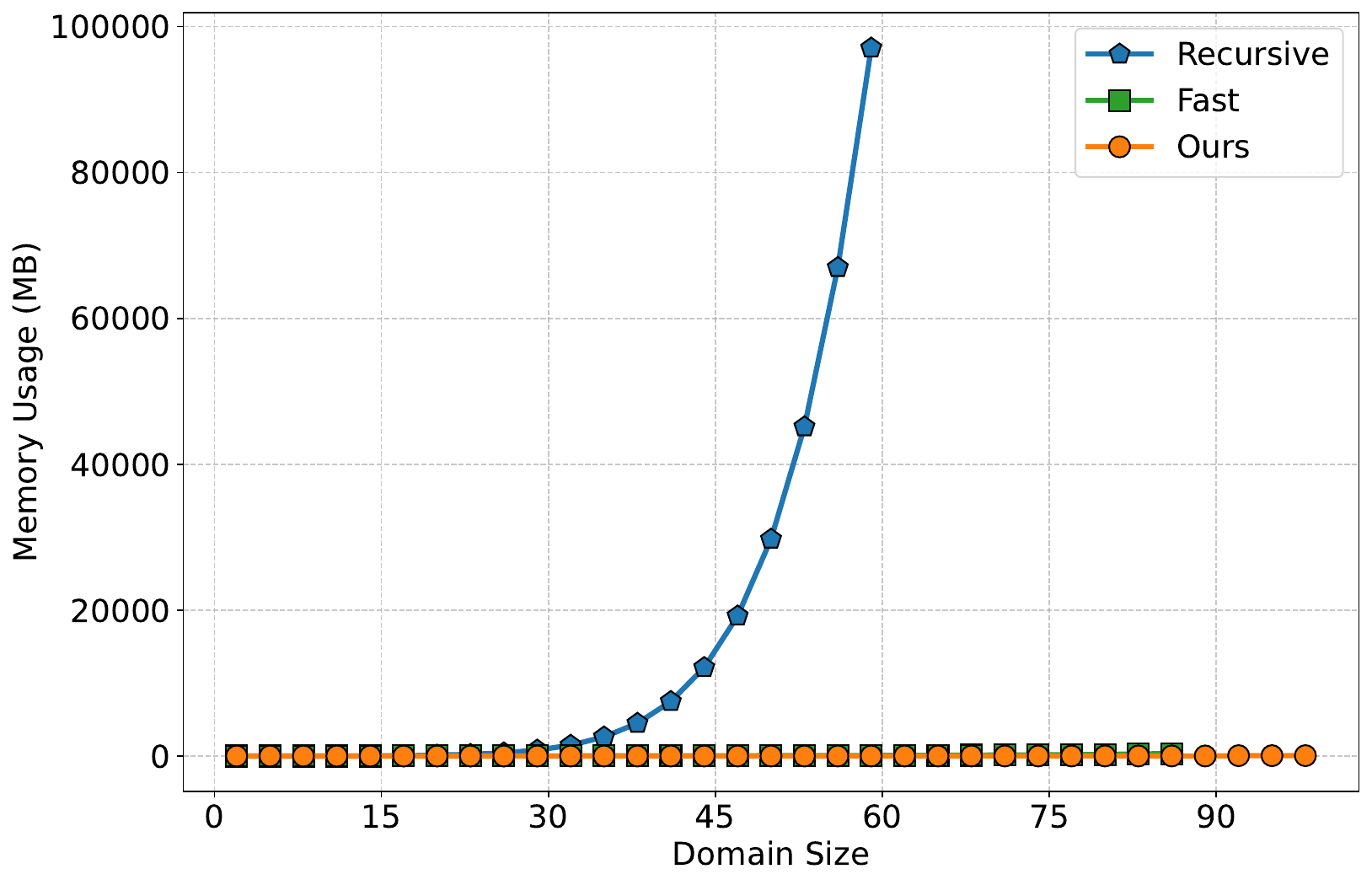}
        \caption{3-regular}
        \label{fig:memory_3_regular}
    \end{subfigure}
    \begin{subfigure}[b]{0.32\textwidth}
        \includegraphics[width=\linewidth]{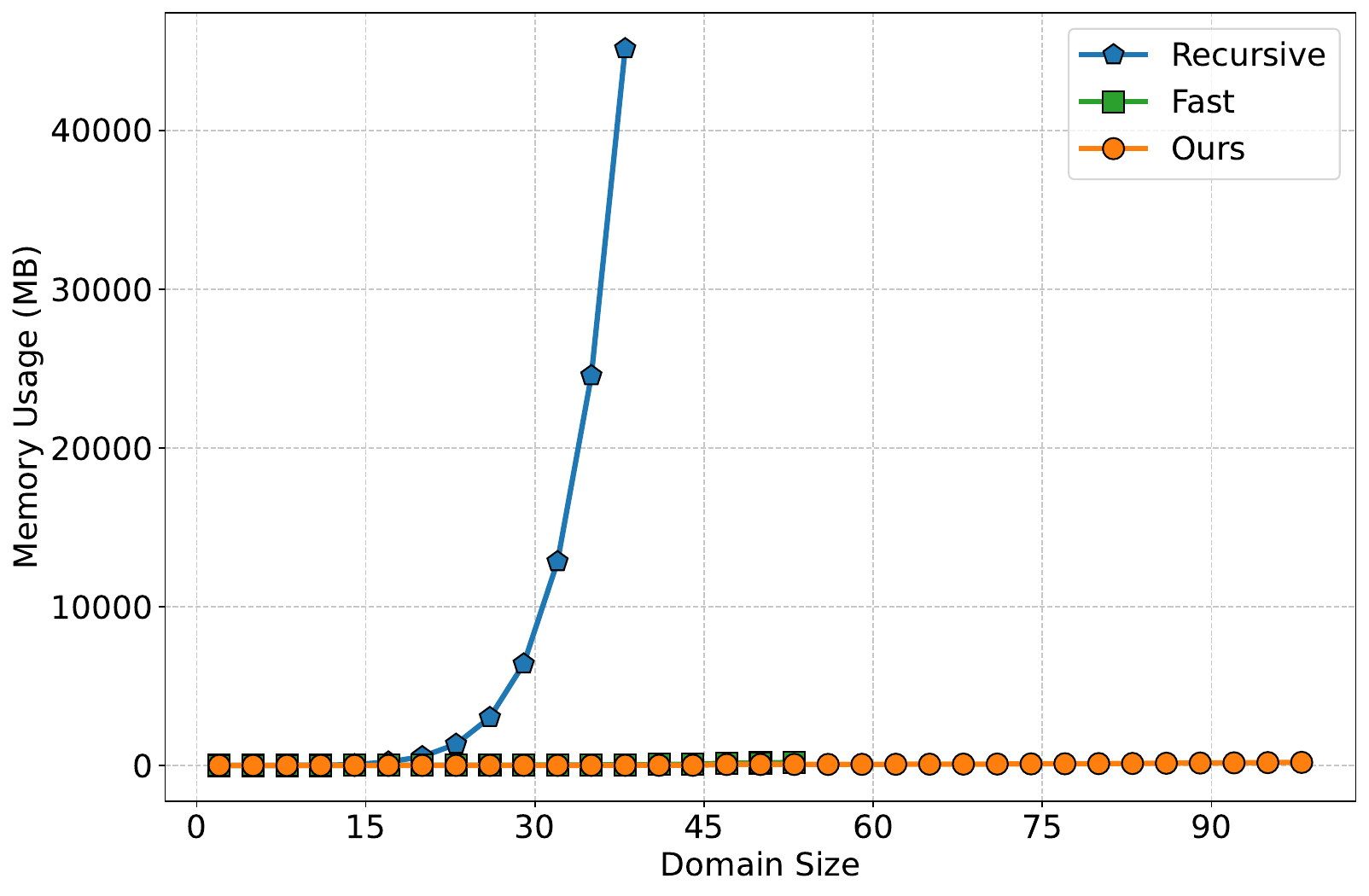}
        \caption{4-regular}
    \end{subfigure}
    \begin{subfigure}[b]{0.32\textwidth}
        \includegraphics[width=\linewidth]{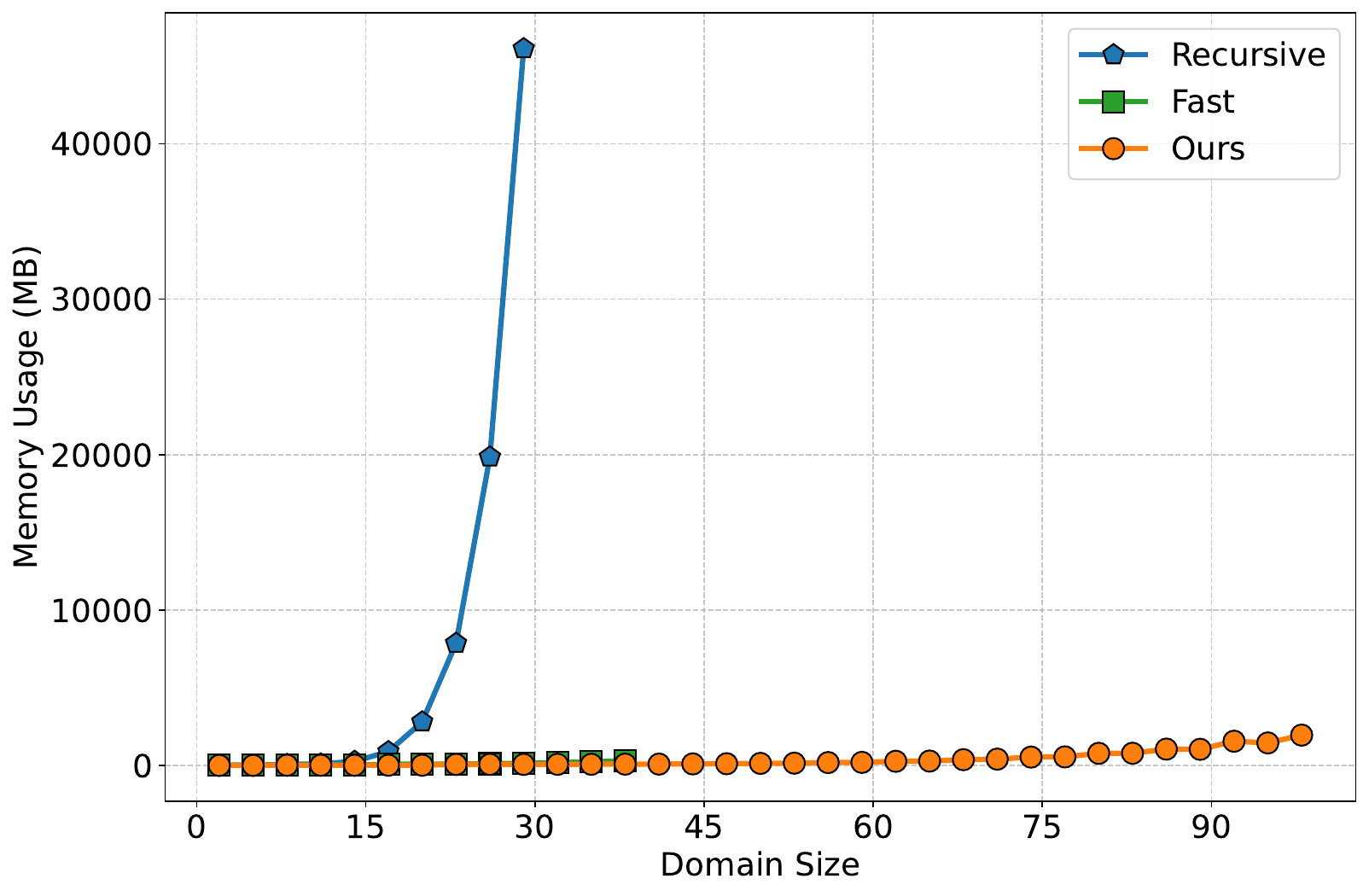}
        \caption{5-regular}
        \label{fig:memory_5_regular}
    \end{subfigure}
    \caption{
        Peak memory usage for counting (a) 3-regular, (b) 4-regular, and (c) 5-regular graphs. 
    }
    \label{fig:memory_kregular}
\end{figure}


\begin{figure}[tbp]
    \centering
    \begin{subfigure}[b]{0.32\textwidth}
        \includegraphics[width=\linewidth]{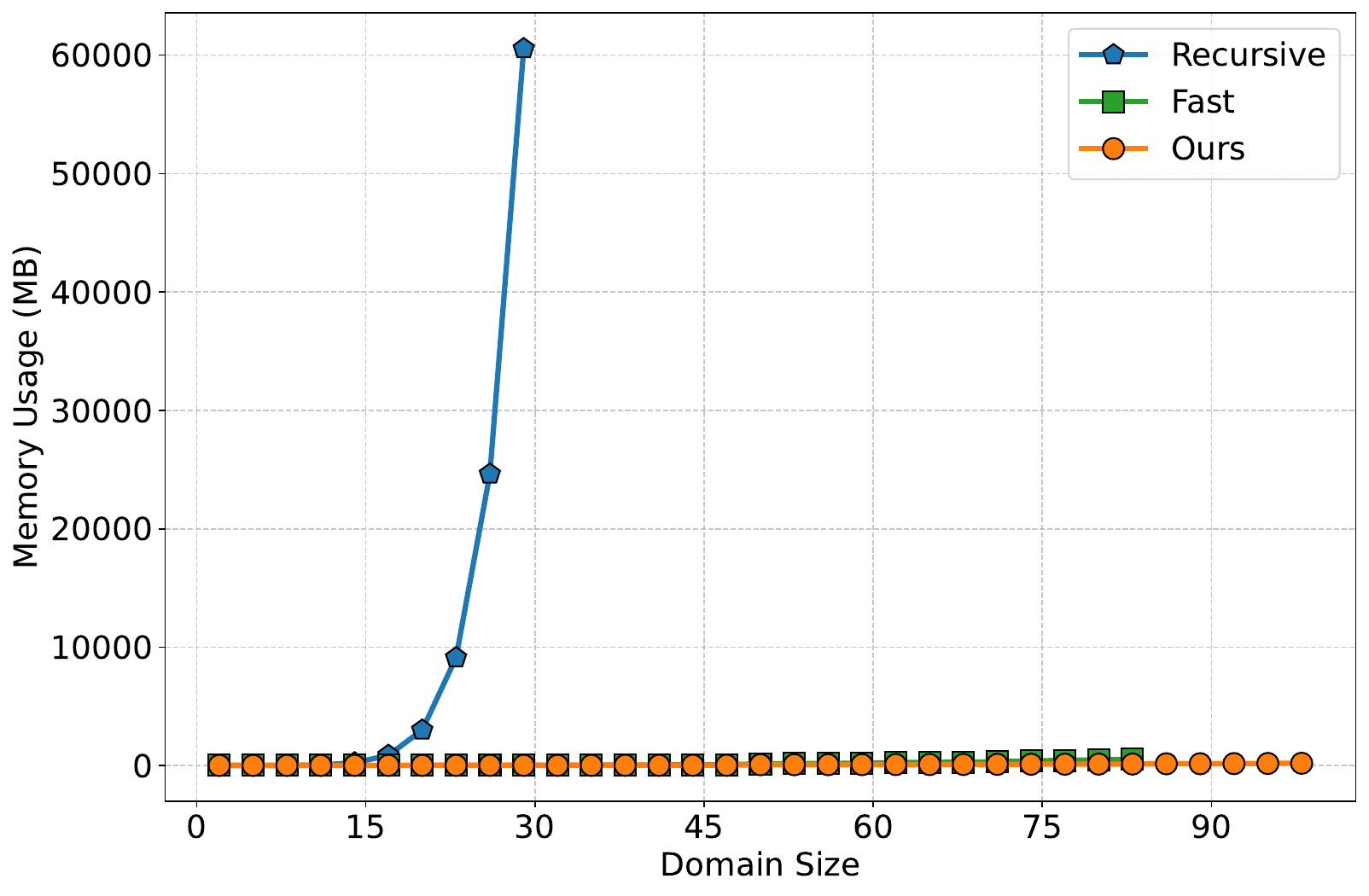}
        \caption{3-regular 2-colored}
    \end{subfigure}
    \begin{subfigure}[b]{0.32\textwidth}
        \includegraphics[width=\linewidth]{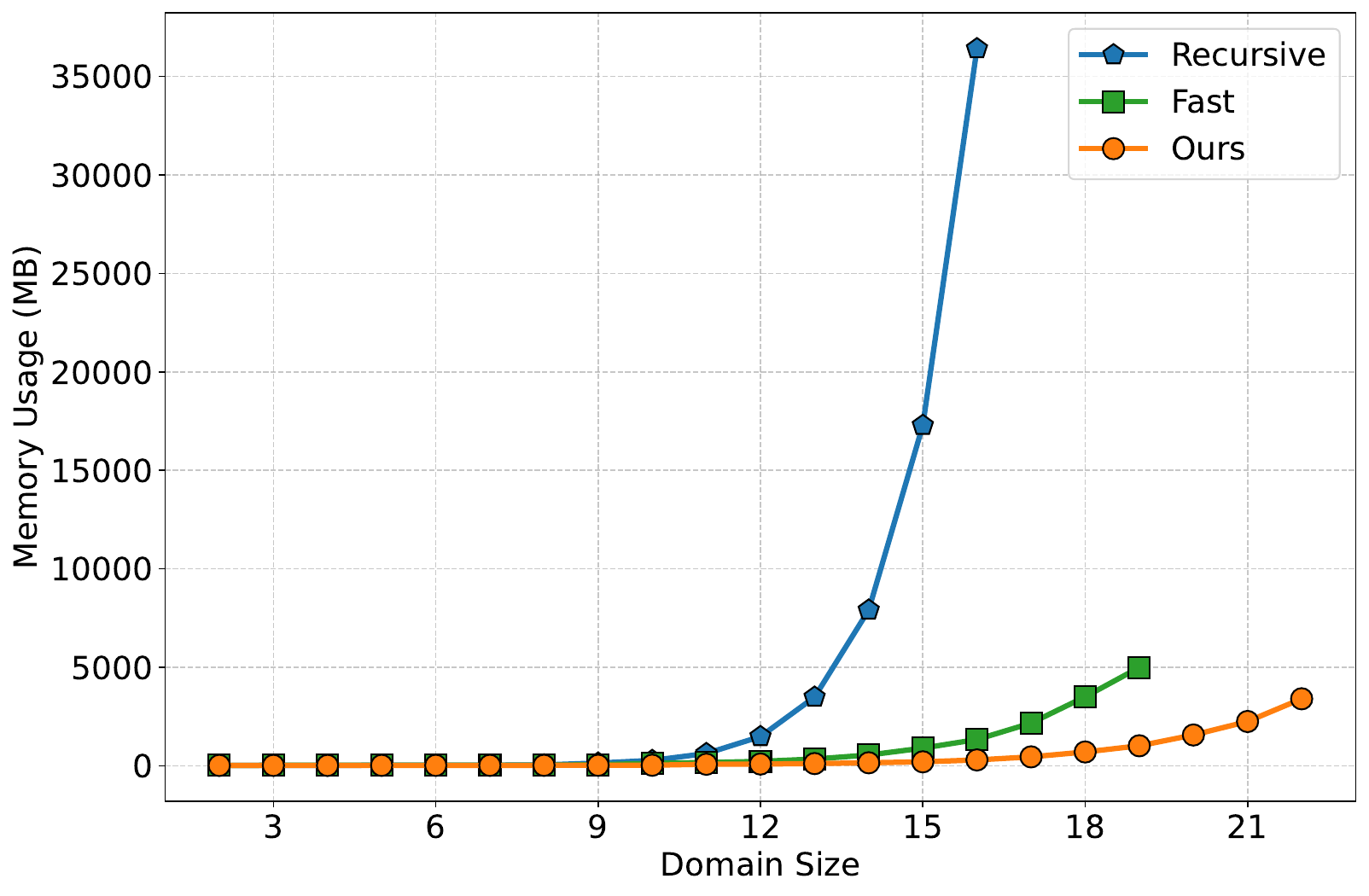}
        \caption{3-regular 4-colored}
    \end{subfigure}
    \begin{subfigure}[b]{0.32\textwidth}
        \includegraphics[width=\linewidth]{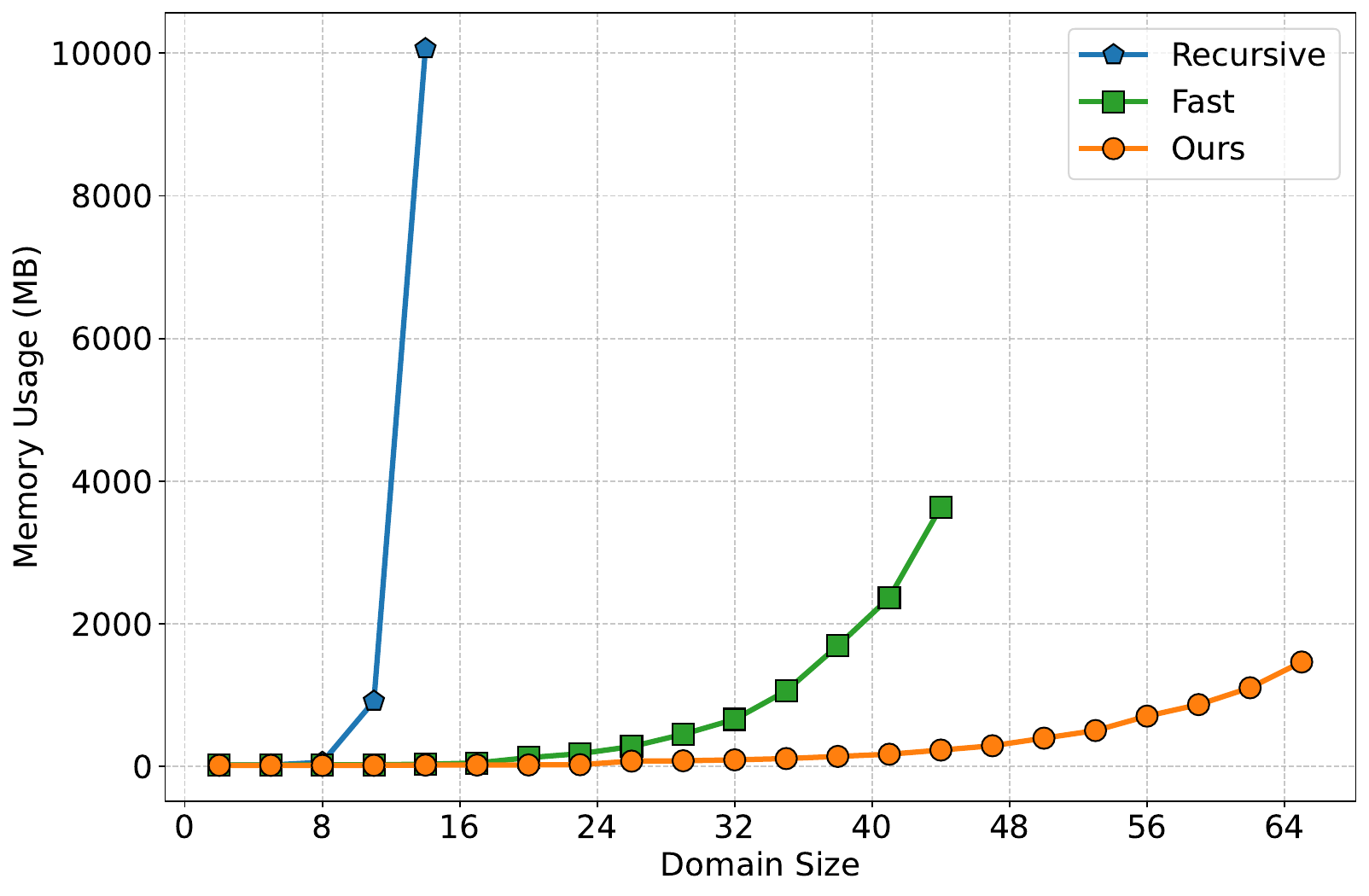}
        \caption{5-regular 2-colored}\label{fig:memory_5_regular_2_colored}
    \end{subfigure}
    \caption{
    Peak memory usage comparison for counting $k$-regular $l$-colored graphs. 
    }
    \label{fig:memory_k_regular_l_colored}
\end{figure}


\begin{figure}[tbp]
    \centering
    \begin{subfigure}[b]{0.32\textwidth}
        \includegraphics[width=\linewidth]{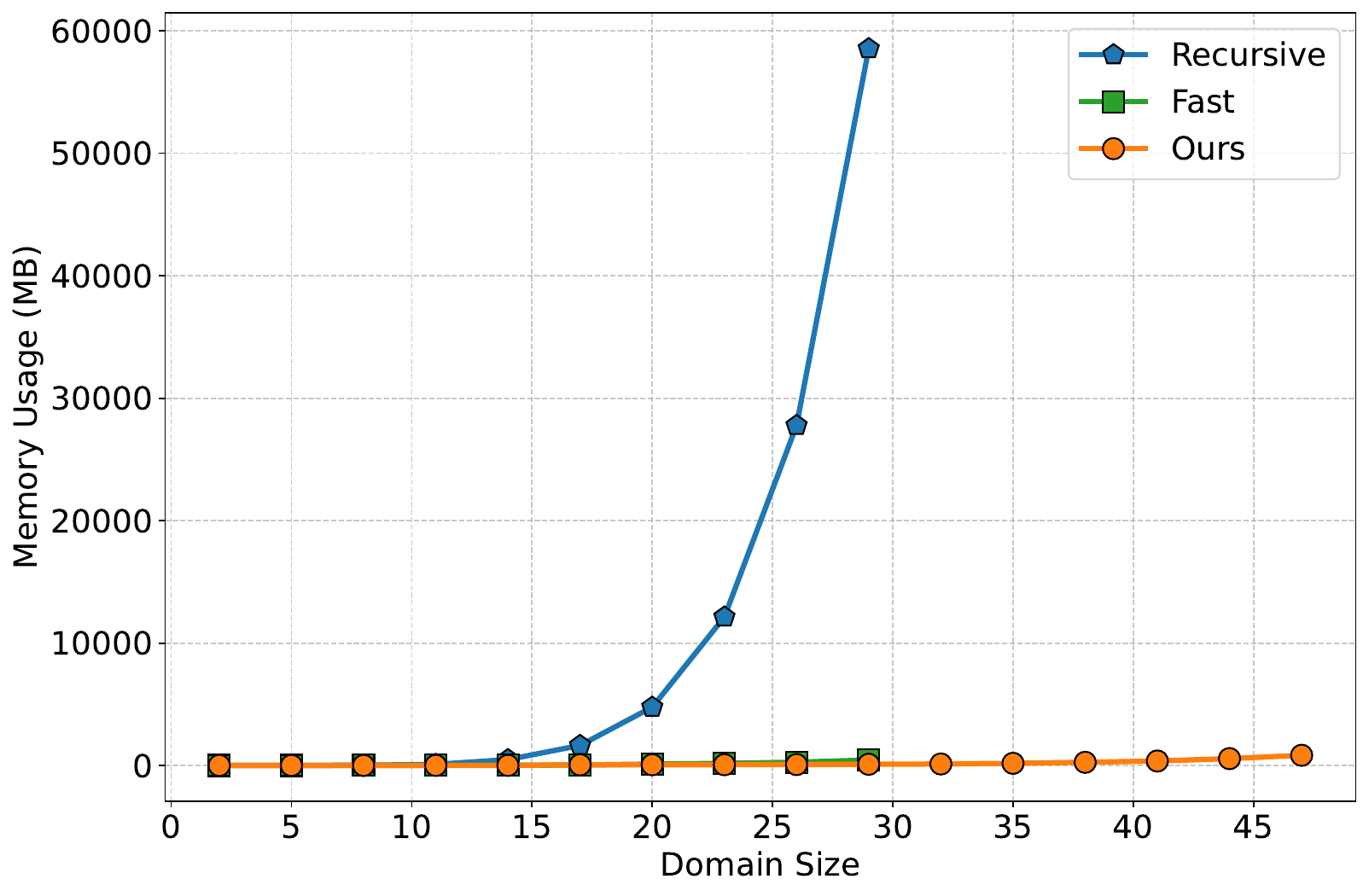}
        \caption{$2$-regular digraphs}\label{fig:memory_2_regular_digraph}
    \end{subfigure}
    \begin{subfigure}[b]{0.32\textwidth}
        \includegraphics[width=\linewidth]{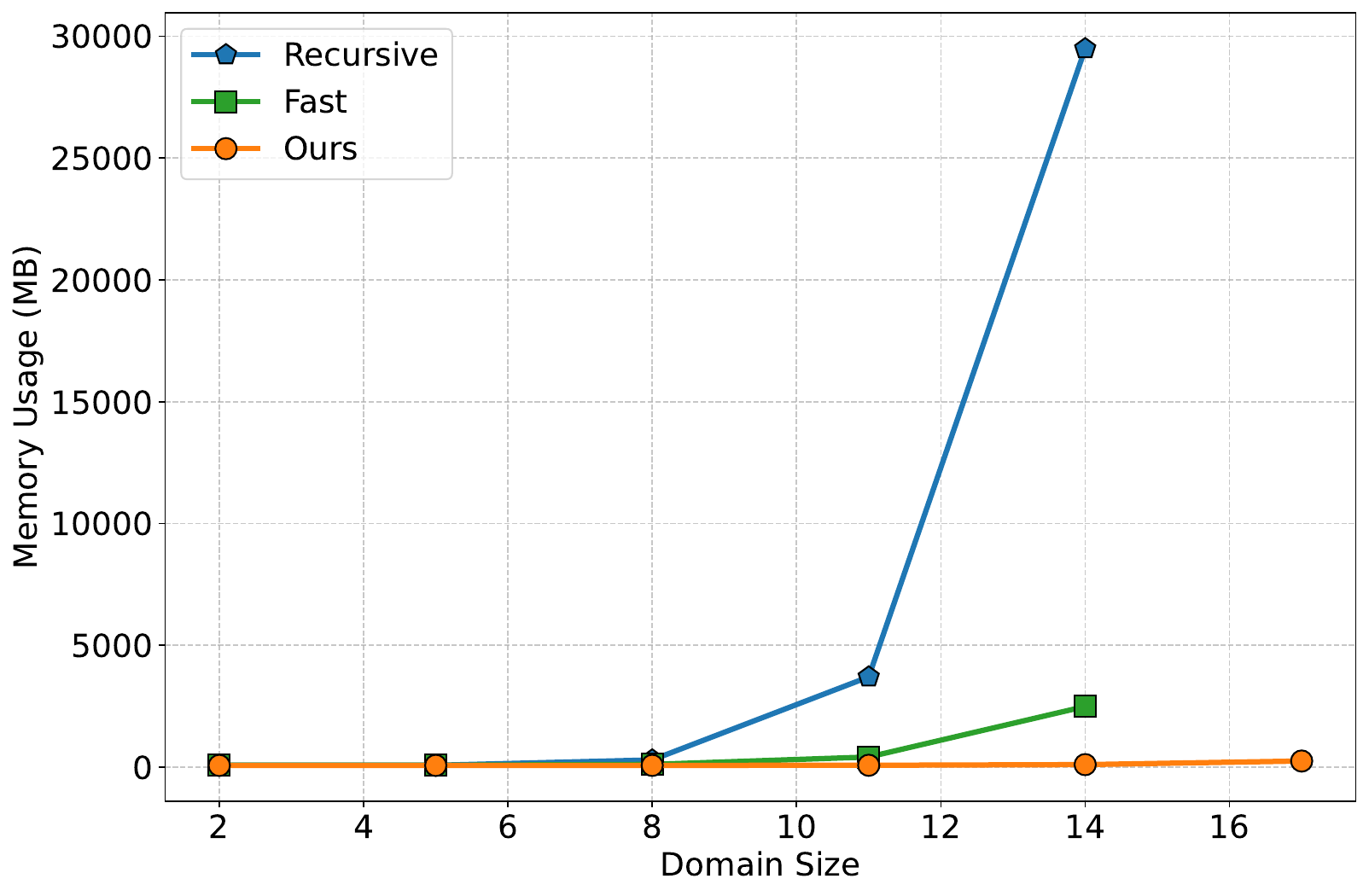}
        \caption{$3$-regular digraphs}\label{fig:memory_3_regular_digraph}
    \end{subfigure}
    \caption{
        Peak memory results for counting (a) $2$-regular and (b) $3$-regular digraphs.
    }\label{fig:memory_k_regular_digraphs}
\end{figure}


\section{Reduction of M-Odd-Degree Graphs}
\label{app:reduction_m_odd_degree}
The initial, more direct formulation of the problem is as follows:
\begin{align*}
    \sentence_{\text{$m$-odd-degree}} = 
    &(\forall x: \neg E(x,x)) \land (\forall x \forall y: E(x,y) \to E(y,x)) \land \\
    &(\forall x:Odd(x) \leftrightarrow \exists^{=1,2} y: E(x,y)) \land (\exists^{=m} x: Odd(x)) \land  (|E| = 2k).
\end{align*}

Then we used the transformations described in \Cref{app:modk_reduction}.
All predicates $A$, $B$, $C$, $U$, and $P$ appearing below are fresh auxiliary predicates introduced during the transformations. 
The transformed sentence is as follows:
\begin{align*}
    \sentence_{\text{$m$-odd-degree}} = 
    &\forall x : \neg E(x,x) \land \forall x \forall y: E(x,y) \to E(y,x) \land \exists^{=m} x: Odd(x) \land \\
    &\exists^{=1} x: U(x) \land \forall x: (P(x) \leftrightarrow (\neg Odd(x) \land A(x) \land C(x))) \land \\
    &\forall x \forall y: (P(x) \land B(x, y) \to U(y)) \land \\
    &\forall x \forall y: (\neg P(x) \to (B(x, y) \leftrightarrow E(x, y))) \land \forall x \exists^{=1,2}y: B(x, y) \land \\
    &\forall x: (Odd(x) \lor A(x))  \land \forall x : (A(x) \lor C(x)) \land |E| = 2k.
\end{align*}

\section{Integer Sequence Analysis of Model Counts of $\sentence_{\text{$m$-odd-degree}}$}
\label{app:sequences}

\begin{table}[htbp]
    \centering
    \includegraphics[width=\linewidth]{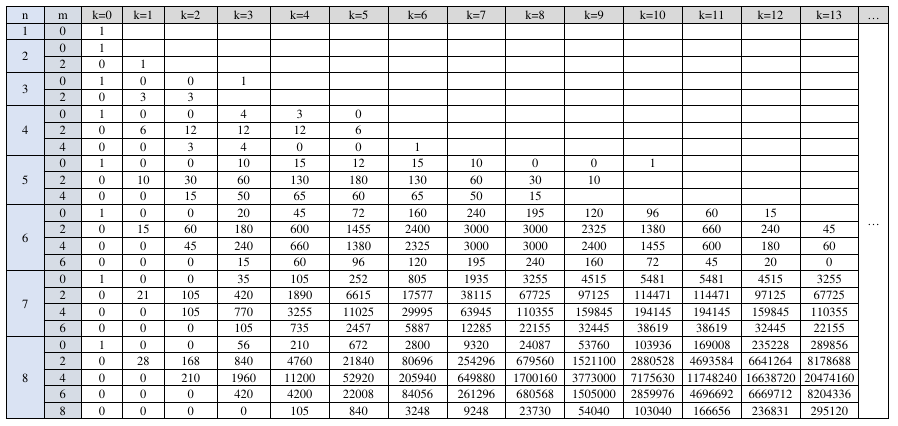}
    \caption{Model counts for $\sentence_{\text{$m$-odd-degree}}$. 
    The table shows the number of simple, undirected graphs for varying domain sizes $n$ and numbers of odd-degree vertices $m$. 
    Each column corresponds to a fixed number of edges $k$, and the value in each cell is the resulting model count for that specific $(n, m, k)$ configuration.
    Blank cells denote zero entries omitted for readability,
    whereas ellipses indicate that additional rows or columns are omitted from display.
    Most of these omitted zeros arise from structurally impossible configurations.
    }
    \label{tab:nmk}
\end{table}

\Cref{tab:nmk} reports the model counts of $\Gamma_{\text{$m$-odd-degree}}$ 
for simple undirected graphs as a function of the domain size $n$, the number $m$ of odd-degree vertices, 
and the number $k$ of edges. Let $T(n,m,k)$ denote the corresponding count. 
Below, we highlight several correspondences between selected slices of $T(n,m,k)$ and integer sequences in the OEIS.

The slice $T(n,2,1)$ counts graphs with exactly one edge. Therefore,
$T(n,2,1)=\binom{n}{2}$, which matches OEIS \href{https://oeis.org/A000217}{A000217}.
Similarly, $T(n,0,3)$ counts simple undirected graphs with three edges and no odd-degree vertices. 
Any such graph must be a triangle. Therefore,
$T(n,0,3)=\binom{n}{3}$,
which matches OEIS \href{https://oeis.org/A000292}{A000292}. More generally, 
the slice $T(n,0,k)$ counts simple graphs on $n$ labeled vertices 
with $k$ edges in which every vertex has even degree. 
This coincides with OEIS \href{https://oeis.org/A058878}{A058878}.
The slice $T(n,2,2)$ counts labeled copies of $P_3$, so
$T(n,2,2)=3\binom{n}{3}$,
which matches OEIS \href{https://oeis.org/A027480}{A027480}. 
Furthermore, $T(n,2k,k)$ counts matchings of size $k$ in the complete graph $K_n$, in agreement with OEIS \href{https://oeis.org/A100861}{A100861}.

Taken together, these examples show that 
the model counts of $\Gamma_{\text{$m$-odd-degree}}$ give rise to several classical graph-enumeration families.

\end{document}